\newcommand{\cov}{\ensuremath{{\mathrm{cov}}}}
\newtheorem{theorem}{Theorem}
\newtheorem{lemma}{Lemma}
\newtheorem{definition}{Definition}[section]
\newlength{\intwidth}
\newcommand{\E}{\ensuremath{\mathrm{E}}}
\newcommand{\var}{\ensuremath{\mathrm{var}}}
\begin{document}

\title{Multiscale Inference for High-Frequency Data}
\author{Adam Sykulski, Sofia C. Olhede and Grigorios A. Pavliotis
\thanks{A. Sykulski and G. A. Pavliotis are with the Department of Mathematics, Imperial College London, South Kensington Campus, SW7 2AZ London, UK (adam.sykulski03@imperial.ac.uk,
g.pavliotis@imperial.ac.uk),
tel: +44 20 7594 8564, fax +44 20 7594 8517.}
\thanks{S.~C.~Olhede is with the Department of Statistical Science, University College London, Gower Street,
London WC1 E6BT, UK (s.olhede@ucl.ac.uk), tel: +44 20 7679 8321, fax: +44 20 7383 4703.}}

\markboth{Stat. Sci. Report 290/Statistics Section Report TR-08-01}{Sykulski, Olhede \& Pavliotis: Multiscale Inference for High Frequency Data}

\maketitle

\begin{abstract}
This paper proposes a novel multiscale estimator for the integrated volatility of an
It\^{o} process, in the presence of market microstructure
noise (observation error). The multiscale structure of the observed process is represented frequency-by-frequency and the concept of the multiscale ratio is introduced to quantify the bias in
 the realized integrated volatility due to the observation error. The multiscale ratio is estimated from a single sample path, and a
frequency-by-frequency bias correction procedure is proposed, which simultaneously reduces variance. We extend the method to include correlated observation errors and provide the implied time domain form of the estimation procedure. The new method is
implemented to estimate the integrated volatility for the Heston and other models, and the improved
performance of our method over existing methods is illustrated by simulation studies.
\end{abstract}

\begin{keywords} 

Bias correction; market microstructure noise; realized volatility;
multiscale inference; Whittle likelihood.

\end{keywords}

\section{Introduction}\label{sec:intro}

Over the last few decades there has been an explosion of available data in
diverse areas such as econometrics, atmosphere/ocean science and molecular biology. It is essential to use this available data when developing and testing
mathematical models in physics, finance, biology and other disciplines. It
is imperative, therefore, to develop accurate and efficient methods for making
statistical inference in a parametric as well as a non-parametric setting.

Many interesting phenomena in the sciences are inherently multiscale
in the sense that there is an abundance of characteristic temporal and spatial scales. It is quite often the case that a simplified, coarse-grained model
is used to describe the essential features of the problem under investigation.
Available data is then used to estimate parameters in this reduced model \cite{MR2097022,Horenko2007,Horenko2008}.
This renders the problem of statistical inference quite subtle, since
the simplified models that are being used are compatible with the data only
at sufficiently large scales. In particular, it is not clear how and if the
high frequency data that is available should be used in the statistical inference
procedure.

On the other hand in many applications such as econometrics \cite{tsay} and oceanography \cite{Griffa_al_95} the observed data is contaminated by high frequency observation error. Statistical inference for data with a multiscale structure and for data contaminated by high frequency noise share common features. In particular the main difficulty in both problems is that the model that we wish to fit the data to is not compatible with the data at all scales. This is an example of a model misspecification problem \cite[p.~192]{Kut04}.

Parametric and non-parametric estimation for systems with multiple scales
and/or the usage of high frequency data has been studied quite extensively
in the last few years for the two different types of models. First, the problem
of estimating the integrated stochastic volatility in the presence of high
frequency observation noise has been considered by various
authors~\cite{AitMykZha05b,Zhang+2005}. Similar models and inference problems
have also been studied in the context of oceanic transport~\cite{Griffa_al_95}.
It was assumed in~\cite{AitMykZha05b,Zhang+2005} that the observed process
consists of two parts, an It\^{o} process $X_t$ (i.e. the solution of an SDE, which is a semimartingale) whose
integrated stochastic volatility (quadratic variation $\langle X_t , X_t \rangle$) we want to estimate, and a high
frequency noise component $\varepsilon_{t_j}$
\begin{equation}\label{e:sahalia}
Y_{t_j} = X_{t_j} + \varepsilon_{t_j}.
\end{equation}
$\{Y_{t_j} \}_{j=1}^{N+1}$ are the sampled observations.
The additional noise $\{\varepsilon_{t_j}\}_{j=1}^{N+1}$ was used to model market
microstructure.

It was shown for the model of~\eqref{e:sahalia} that
using high frequency data leads to asymptotically biased
estimators. In particular if all available data is used for the estimation of the quadratic variation of $X_t$ then {\em the realized integrated volatility}
$\left[ Y_t,Y_t\right]$ converges to the aggregated variance of the differenced observation noise.
Subsampling is therefore necessary for the accurate
estimation of the integrated volatility.
An algorithm for estimating the integrated volatility
which consists of subsampling at an optimal sampling rate combined with
averaging and an appropriate debiasing step was proposed
in~\cite{AitMykZha05b,Zhang+2005}. Various other estimators were suggested in \cite{Zhang+2005,tsay,Sircar03,Hansen06} for processes contaminated by high frequency nuisance structure.

Secondly, parameter estimation for fast/slow systems of SDEs for which a limiting SDE for the slow variable can be rigorously shown to exist was studied
in~\cite{PavlPokStu08, PapPavSt08, PavlSt06}. In these papers the problem of making inferences for the parameters of the limiting (coarse-grained) SDE for the slow variable from observed data generated by the fast/slow system was examined. It was
shown that  the maximum likelihood estimator is asymptotically biased. In order to correctly  estimate the
parameters in the drift and the diffusion coefficient of the coarse-grained
model from observations of the slow/fast system using maximum likelihood,
subsampling at an appropriate rate is necessary. The subsampling rate depends
on the ratio between the characteristic time scales of the fast and slow
variables. A similar problem, with no explicit scale separation, was studied
in \cite{CotterPavl09}.

All of the papers mentioned above propose inference methods in the time domain.
Yet, it would seem natural to analyse multiscale and high frequency properties of the data in the frequency domain. Most
of the time domain methods can be put in a unified framework as linear filtering
techniques,
i.e. as a convolution with a linear kernel, of some time-domain quadratic function of the data. The understanding of
these methods is enhanced by studying them directly in the frequency domain, as convolutions in time are multiplications in frequency.
Fourier domain estimators of the integrated
volatility have been proposed for observations devoid of microstructure features, see
\cite{Hansen05,Barucci02,Malliavin+2002}. Fourier domain estimators have also been used
for estimating noisy It\^{o} processes (i.e. processes of the form~\ref{e:sahalia}), see \cite{Mancino+tech,Reno2005,Reno2008}, based on smoothing the time domain quantities by using only a limited number of frequencies in the reconstruction.

The bias in the
realized integrated volatility of the observed process $Y_{t_j}$ due to the observation
noise $\varepsilon_{t_j}$  can be understood directly in the frequency domain, since the energy
associated with each frequency is contaminated by the microstructure noise process. This
bias is particularly damaging at high frequencies. In this article we propose a
frequency-by-frequency de-biasing procedure to improve the accuracy of the estimation of
the integrated volatility. The proposed estimation method can also be viewed in the time domain as
smoothing the estimated autocovariance of the increments of the process, but where the implied time domain smoothing kernel is itself estimated from the observed process.

In this paper we will consider a regularly sampled It\^{o} process with additive white noise $\varepsilon_{t_j}$
superimposed upon it at each observation point $t_j$, {\em cf} \eqref{e:sahalia}. The It\^{o} process satisfies an SDE of the form
\begin{equation}
\label{eq:ito}
dX_t=\mu_t dt+\sigma_t dB_t,\quad X_0=x_0.
\end{equation}
$B_t$ denotes a standard one dimensional Brownian motion and $\mu_t$, $\sigma_t$ are (in general) It\^{o} processes, see for example the Heston model which is studied in
Section~\ref{sec:mcs}. The Brownian motions driving the three  It\^{o} processes
can be correlated. The observations and the process
are related through
\begin{equation}
\label{e:model} Y_{t_j}=X_{t_j}+\varepsilon_{t_j},\quad j=1,2, \dots, N+1, \quad
t_j:=\frac{j-1}{N}T=(j-1)\Delta t.
\end{equation}
We assume the data is regularly spaced. The length of the path $T$ is fixed. The additive noise $\{ \varepsilon_{t_j} \}_{j=1}^{N+1}$ is initially taken to be a white noise process with variance $ \sigma^2_{\varepsilon}$, and it is assumed to be independent of the noise that drives the It\^{o} process $X_t$. Our main objective is to estimate the  integrated volatility,
$\langle X, X\rangle_T=\int_{0}^T \sigma^2_t\,dt$ of the It\^{o} process $\left\{X_{t}\right\}$,
from the set of observations $\left\{Y_{t_j}\right\}_{j=1}^{N+1}$.
In the absence of market microstructure noise (i.e.,
when $Y_{t_j} = X_{t_j},\;j=1,\dots,N+1$) the
integrated volatility can be estimated from the realized integrated volatility of the process $\{Y_t\}$ \cite{tsay}.
In the presence of market microstructure noise
this is no longer true, see also \cite{Zhang+2005}, and a different estimation procedure is necessary.

The proposed estimator can be described roughly as follows. Let $\{J_k^{(X)}\}$ denote the Discrete Fourier Transform (DFT) of the differenced sampled $X_t$ process, and similarly for $Y_{t_j}$ and $\varepsilon_{t_j}$.  The integrated volatility can be written in terms of the inverse DFT of the variance of $J_k^{(X)}$.
We calculate the bias in the variance of $J_k^{(Y)}$, when using its sample estimator to estimate the variance of $J_k^{(X)}.$ The high frequency coefficients are heavily contaminated by the microstructure noise. With a formula for the bias it is possible to debias the estimated variance of the Fourier transform at every frequency, with the unknown parameters of the bias estimated using the Whittle likelihood \cite{Whittle53,Whittle62}. This produces a debiased estimator of the integrated volatility via an aggregation of the estimated variance, and we show also that the variance of the proposed estimator is reduced by the debiasing.

Our estimator shows highly competitive mean square error performance; it
also has several advantages over existing estimators. First, it is robust with respect to the signal to noise ratio; furthermore,  it is easy
to formulate and to implement; in addition, it readily generalizes to the case of
correlated observation errors (in time). Finally, the properties of our estimator are
transparent using frequency domain analysis.

The rest of the paper is organized as follows. In Section~\ref{sec:estmeth}
we introduce our estimator and present some of its properties, stated in Theorems \ref{thm:thmA} and \ref{thm:thmB}. We also discuss the time-domain understanding of the proposed method and the extension of the method to the case where the observation noise is correlated. In Section~\ref{sec:mcs}
we present the results of Monte Carlo simulations for our estimator.
Section~\ref{sec:concl} is reserved for conclusions. Various technical results
are included in the appendices.

%
%
%
\section{Estimation Methods}\label{sec:estmeth}
Let $\left\{Y_{t_j}\right\}$ be given by \eqref{e:model}, where the noise $
\left\{\varepsilon_{t_j}\right\}$ is independent of $\left\{X_{t_j}\right\}$,
is zero-mean and its variance at any time is equal to $\sigma^2_{\varepsilon}$.
The simplest estimator of the integrated
volatility of $X_t$ would ignore the high frequency component
of the data and use the realized integrated volatility of the observed process. The realized integrated volatility is given by
\begin{equation}
\label{e:biased}
\widehat{\langle X, X\rangle}_T^{(b)}=\left[Y,Y\right]_T\equiv \sum_{j=1}^{N}
\left(Y_{t_{j+1}}-Y_{t_{j}} \right)^2={\mathcal{O}}\left(\frac{1}{\Delta t} \right).
\end{equation}
This estimator is both inconsistent and biased, see \cite{Hansen06}.  For comparative
purposes, we define also the  realized integrated volatility of the sampled process $\{X_{t_j}\}$:
\begin{equation}
\label{e:unbias}
\widehat{\langle X, X\rangle}_T^{(u)}=\left[X,X\right]_T\equiv \sum_{j=1}^{N}
\left(X_{t_{j+1}}-X_{t_{j}} \right)^2={\mathcal{O}}\left(N\Delta t \right)={\mathcal{O}}\left(1\right).
\end{equation}
This cannot be used in practice as $X_{t_j}$ is not directly observed. Both these are estimators of the integrated volatility (quadratic variation) of $X$.

\subsection{Fourier Domain Properties}
We shall start by deriving an alternative representation of \eqref{e:biased} to motivate further development. Firstly we define the increment process of a sample from a generic time series $U_{t_j},\;j=1,\dots N+1$ by $\Delta U_{t_j}=U_{t_{j+1}}-U_{t_{j}},\;j=1,\dots N,$ and then the discrete Fourier Transform of $\Delta U_{t_j}$ by $J_{k}^{(U)}$ as
by \cite{Percival}[p.~206]
\begin{eqnarray}
J^{(U)}_k=\sqrt{\frac{1}{N}}\sum_{j=1}^{N} \Delta U_{t_j} e^{-2\pi i t_j f_k },
\quad f_k=\frac{k}{T},\quad U=X,\;Y,\;\varepsilon .
\end{eqnarray}
Our proposed estimator will be based on examining the second order properties of 
$\{J^{(Y)}_k\}$. $\left|J^{(Y)}_k\right|^2$ is the {\em periodogram} \cite{Brillinger} defined for a time series and is an inefficient estimator of $\var\{J^{(Y)}_k\}={\mathcal{S}}^{(X)}_{k,k}$.
Firstly we examine the properties of $\{J^{(X)}_k\}$. We have, with 
$\overline{\mu}_j=\frac{1}{\Delta t}  \int_{(j-1)\Delta t}^{j\Delta t}\mu_s\,ds$ denoting the local average of $\mu_t$,  
\begin{eqnarray}
\nonumber
\Delta X_{t_j}&=&\int_{(j-1)\Delta t}^{j\Delta t}[\mu_{s} ds+\sigma_{s} dW_{s}]=\overline{\mu}_j\Delta t+\int_{(j-1)\Delta t}^{j\Delta t}\sigma_{s} dW_{s},\\
\nonumber
J^{(X)}_{k}&=&\sqrt{\frac{1}{N}}\sum_{j=1}^{N}\Delta X_{t_j} e^{-2i\pi \frac{k j}{N}}=\sqrt{\frac{1}{N}}
\sum_{j=1}^{N}\left[\Delta t\overline\mu_j+\int_{(j-1)\Delta t}^{j\Delta t}\sigma_{s} dW_{s}\right]e^{-2i\pi \frac{k j}{N}}\\
&=&{\mathcal{O}}\left(\Delta t^{1/2}\frac{T}{k}\right)+\sqrt{\frac{1}{N}}
\sum_{j=1}^{N}\int_{(j-1)\Delta t}^{j\Delta t}\sigma_{s} dW_{s}e^{-2i\pi \frac{k j}{N}}.
\label{eq:jkX}\end{eqnarray}
We define
\[\tilde{J}^{(X)}_{k}= \sqrt{\frac{1}{N}}
\sum_{j=1}^{N}\int_{(j-1)\Delta t}^{j\Delta t}\sigma_{s} dW_{s}e^{-2i\pi \frac{k j}{N}},\]
and this to leading order approximates $J^{(X)}_{k}$ as $\Delta t\rightarrow 0$ for all but a few frequencies.
We can also note that, since $\mu_s$ is an It\^{o} process, it has almost
surely continuous paths,  which implies that
\begin{eqnarray}
\label{biasink1}
&&\Delta t^2 \sum_{k=0}^{N-1} \left|\sqrt{\frac{1}{N}}
\sum_{j=1}^{N}\overline\mu_j e^{-2i\pi \frac{k j}{N}}\right|^2\sim \sum_k \left|\frac{1}{\sqrt{N}} \Delta t N/k\right|^2={\mathcal{O}}\left( \Delta t\right)\\
&&\Delta t\sum_{k=0}^{N-1}\left|\sqrt{\frac{1}{N}}
\sum_{j=1}^{N}\overline\mu_j e^{-2i\pi \frac{k j}{N}}\right|\sim \sum_k \left|\frac{1}{\sqrt{N}} \Delta t N/k\right|={\mathcal{O}}\left( \log(\Delta t)\sqrt{\Delta t}\right),
\label{biasink2}
\end{eqnarray}
as $\Delta t \sum_j \overline\mu_j e^{-2i\pi \frac{k j}{N}}={\mathcal{O}}\left(\frac{T}{k}\right).$
So we only need, to leading order, calculate $\sum_{k=0}^{N-1} \left|\tilde{J}_k^{(X)}\right|^2={\mathcal{O}}(1)$
when calculating the properties of $\sum_{k=0}^{N-1} \left|{J}_k^{(X)}\right|^2$
from \eqref{biasink1} and \eqref{biasink2}. More formally we note that
\[\sum_{k=0}^{N-1} \left|{J}_k^{(X)}\right|^2=\sum_{k=0}^{N-1} \left|\tilde{J}_k^{(X)}\right|^2+
{\mathcal{O}}\left(\log(\Delta t)\sqrt{\Delta t} \right).\]
We need to determine the first and second order structure of $\{\tilde{J}^{(X)}_{k}\}_k.$
In general $\{\tilde{J}^{(X)}_{k}\}_k$ is a complex-valued random vector, which may not be a sample from a multivariate Gaussian distribution. 
%
The covariance matrix of a complex random vector $\mathbf{Z}$ is given by $\cov\left\{
\mathbf{Z},\mathbf{Z}\right\}=\E\left\{\mathbf{Z}\mathbf{Z}^H\right\}-
\E\left\{\mathbf{Z}\right\}\E\left\{\mathbf{Z}\right\}^H$ \cite{Neeser93,Picinbono}.
We have
\begin{eqnarray*}
\E\left\{\tilde{J}^{(X)}_{k}\right\}&=&0,\;k=1,\dots,N-1.
\end{eqnarray*}
Furthermore, with $\tilde{\mathcal S}^{(X)}_{k_1,k_2}=
\E\left\{\tilde{J}^{(X)}_{k_1}\tilde{J}^{(X)*}_{k_2}\right\}$,
\begin{eqnarray*}
\tilde{\mathcal S}^{(X)}_{k_1,k_2}&=&\frac{1}{N}
\E\left\{\sum_{n=1}^{N}\int_{(n-1)\Delta t}^{n\Delta t}
\sum_{l=1}^{N}\int_{(l-1)\Delta t}^{l\Delta t}\sigma_{s} dW_{s}
\sigma_{t} dW_{t}
e^{-2i\pi \left(\frac{k_1 n}{N}-\frac{k_2 l}{N}\right)}\right\}\\
&=&\frac{1}{N}
\sum_{n=1}^{N}\int_{(n-1)\Delta t}^{n\Delta t}
\sum_{l=1}^{N}\int_{(l-1)\Delta t}^{l\Delta t}
\E\left\{\sigma_{s} dW_{s}\sigma_{t} dW_{t}\right\}
e^{-2i\pi \left(\frac{k_1 n}{N}-\frac{k_2 l}{N}\right)}.
\\ \nonumber
&=&\frac{1}{N}\sum_{n=1}^{N}\int_{(n-1)\Delta t}^{n\Delta t}
\sum_{l=1}^{N}\int_{(l-1)\Delta t}^{l\Delta t}
\E\left\{\sigma_{s}\sigma_{t}\right\}\delta_{nl}\delta(t-s) 
e^{-2i\pi \left(\frac{k_1 n}{N}-\frac{k_2 l}{N}\right)}ds dt.\end{eqnarray*}
In particular we have that
\begin{eqnarray}
\nonumber
\tilde{\mathcal S}^{(X)}_{k,k}&=&
\frac{1}{N}\int_{0}^{T}\E\left\{\sigma_{s}^2\right\}ds+{\mathcal{O}}\left(\Delta t^2\right)
:=\overline{\sigma}^2_X+{\mathcal{O}}\left(\Delta t^2\right)\\
&=&\frac{\langle X, X\rangle_T}{N}+{\mathcal{O}}\left(\Delta t^2\right),
\label{eq:varJk}
\end{eqnarray}
where the error terms are due to the Riemann approximation to an integral
and thus it follows that
\begin{eqnarray}
\sum_{k=0}^{N-1}\tilde{\mathcal S}^{(X)}_{k,k}&=&
\int_{0}^{T}\E\left\{\sigma_{s}^2\right\}ds+{\mathcal{O}}(\Delta t).
\label{blah1}
\end{eqnarray}
$\overline{\sigma}^2_X$ does not depend on the value of $k$ but is constant {\em irrespectively} of the value of $k$. Malliavin and Mancino \cite{Malliavin+2002} in contrast under very light assumptions show how the Fourier coefficients of $\{\sigma^2_t\}$ can be calculated from the Fourier coefficients of $dX_t$, using a Parseval-Rayleigh relationship, see also \cite{Reno2008,Mancino+tech}. We can from \eqref{eq:varJk} make a stronger link from the Fourier transform to the integrated volatility than that of the Parseval-Rayleigh relationship, and shall use this `uniformity of energy' to estimate the microstructure bias.

We note that the covariance between different frequencies is given by:
\begin{eqnarray*}
\tilde{\mathcal S}^{(X)}_{k_1,k_2}&=&\frac{1}{N}\sum_{n=1}^{N}\int_{(n-1)\Delta t}^{n\Delta t}
\E\left\{\sigma_{t}^2\right\}\;dt
e^{-2i\pi n\left(\frac{k_1 }{N}-\frac{k_2 }{N}\right)}\\
&=&\frac{1}{N}\int_0^T\E\left\{\sigma_{t}^2\right\}\;dt
e^{-2i\pi t\left(\frac{k_1 }{T}-\frac{k_2 }{T}\right)}+{\mathcal{O}}\left(\Delta t^2\right).
\end{eqnarray*}
Let $\Sigma(f) :=\int_0^T \E \sigma_t^2   e^{-2i\pi f t} \;dt$.
We can bound the size of $\Sigma(\frac{k_1 }{T}-\frac{k_2 }{T})$ as $|k_1-k_2|$ increases. As $\E\left\{\sigma_{t}^2\right\}$ is smooth in $t$ the modulus of the covariance can be bounded for increasing $|k_1-k_2|$, as the Fourier transform $\Sigma(\frac{k_1 }{T}-\frac{k_2 }{T})$ decays proportionally to $|k_1-k_2|^{-\alpha-1}$ where $\alpha$ is the number of smooth derivatives of $\E\left\{\sigma_{t}^2\right\}$. We can also directly note that the variance of the discrete Fourier transform of the noise is precisely (this is {\em not} a large sample result)
\begin{eqnarray}
\label{blah2}
{\mathcal S}^{(\varepsilon)}_{k_1,k_2}&=&\sigma^2_{\varepsilon}\left|2\sin
\left(\pi f_{k_1}\Delta t \right)\right|^2\delta_{k_1,k_2},
\end{eqnarray}
by virtue of being the first difference of white noise (see also \cite{Bloomfield}).
The naive estimator can therefore be rewritten as, with ${\mathcal S}_{k_1,k_2}^{(Y)}=
\cov\{J_{k_1}^{(Y)},J_{k_2}^{(Y)}\}$:
\begin{subequations}
\label{e:naive}
\begin{eqnarray}
\label{e:naive1}
\widehat{\langle X, X\rangle}_T^{(b)}&=&\sum_{j=1}^{N}\Delta Y_{t_j}^2
=\sum_{k=0}^{N-1}\left|J^{(Y)}_k\right|^2,\\
\widehat{\mathcal{S}}^{(Y)}_{k,k}&=&\left|J^{(Y)}_k\right|^2, \label{e:naive2}\\
\label{e:naive3}
\E\left\{\widehat{\langle X, X\rangle}_T^{(u)}\right\}&=&
\sum_{k=0}^{N-1}\tilde{\mathcal S}_{k,k}^{(X)}+{\mathcal{O}}\left(\log(\Delta t)\sqrt{\Delta t}\right)+{\mathcal{O}}\left(\Delta t\right)\\
&\equiv &\sum_{k=0}^{N-1}{\mathcal S}_{k,k}^{(X)}.
\end{eqnarray}
\end{subequations}
The Parseval-Rayleigh
relationship in \eqref{e:naive1} is discussed in 
\cite{Mancino+tech}, and is used in \cite{Malliavin+2002}. 
We shall now develop a frequency domain specification of the bias of the naive estimator.
\begin{lemma}{(Frequency Domain Bias of the Naive Estimator)}\label{lemma1}
Let $X_t$ be an It\^o process and assume that the covariance of $J_{k_1}^{(X)}$
and $J_{k_2}^{(X)}$ to be ${\mathcal{S}}^{(X)}_{k_1,k_2}$ with the chosen sampling.
Then the naive estimator of the integrated volatility given by \eqref{e:naive} has an expectation given by:
\begin{eqnarray}
\label{eq:freq1}
\E\left\{\widehat{\langle X, X\rangle}_T^{(b)}\right\}
&=&\sum_{k=0}^{N-1}\left(\tilde{\mathcal{S}}^{(X)}_{k,k}+\sigma^2_{\varepsilon}
\left|2\sin(\pi f_k\Delta t)\right|^2\right)+
{\mathcal{O}}\left(\log(\Delta t)\sqrt{\Delta t}\right)\\
\nonumber
&=&\E\left\{\widehat{\langle X, X\rangle}_T^{(u)}\right\}+\sum_{k=0}^{N-1}\sigma^2_{\varepsilon}
\left|2\sin(\pi f_k\Delta t)\right|^2+{\mathcal{O}}\left(\log(\Delta t)\sqrt{\Delta t}\right)\\
\nonumber
&=&{\mathcal{O}}(1)+{\mathcal{O}}(\Delta t^{-1})+{\mathcal{O}}\left(\log(\Delta t)\sqrt{\Delta t}\right).
\end{eqnarray}
\end{lemma}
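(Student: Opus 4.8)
The plan is to exploit the linearity of the discrete Fourier transform together with the independence of the observation noise. First I would difference \eqref{e:model} to obtain $\Delta Y_{t_j}=\Delta X_{t_j}+\Delta\varepsilon_{t_j}$, so that by linearity $J^{(Y)}_k=J^{(X)}_k+J^{(\varepsilon)}_k$, and invoke the Parseval--Rayleigh identity \eqref{e:naive1} to write $\widehat{\langle X, X\rangle}_T^{(b)}=\sum_{k=0}^{N-1}|J^{(Y)}_k|^2$. Expanding the modulus gives
\[
\left|J^{(Y)}_k\right|^2=\left|J^{(X)}_k\right|^2+\left|J^{(\varepsilon)}_k\right|^2+2\,\mathrm{Re}\left(J^{(X)}_k\,\overline{J^{(\varepsilon)}_k}\right),
\]
and the entire proof reduces to taking the expectation of this identity and summing over $k$.

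The second step is the term-by-term expectation. The cross term vanishes: since $\{\varepsilon_{t_j}\}$ is independent of $\{X_{t_j}\}$ and zero-mean, $\E\{J^{(X)}_k\overline{J^{(\varepsilon)}_k}\}=\E\{J^{(X)}_k\}\,\overline{\E\{J^{(\varepsilon)}_k\}}=0$ (the drift makes $\E\{J^{(X)}_k\}$ nonzero, but it multiplies a zero). For the noise term I would quote \eqref{blah2}, which is exact and gives $\E|J^{(\varepsilon)}_k|^2={\mathcal S}^{(\varepsilon)}_{k,k}=\sigma^2_{\varepsilon}|2\sin(\pi f_k\Delta t)|^2$. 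For the signal term I would use the replacement already recorded before the lemma, $\sum_{k}|J^{(X)}_k|^2=\sum_{k}|\tilde{J}^{(X)}_k|^2+{\mathcal{O}}(\log(\Delta t)\sqrt{\Delta t})$, together with $\E|\tilde{J}^{(X)}_k|^2=\tilde{\mathcal S}^{(X)}_{k,k}$ from the definition of $\tilde{\mathcal S}^{(X)}_{k_1,k_2}$. Collecting the three contributions yields the first line \eqref{eq:freq1}.

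The remaining two lines are then immediate. Substituting the expansion \eqref{e:naive3}, $\E\{\widehat{\langle X, X\rangle}_T^{(u)}\}=\sum_{k}\tilde{\mathcal S}^{(X)}_{k,k}+{\mathcal{O}}(\log(\Delta t)\sqrt{\Delta t})+{\mathcal{O}}(\Delta t)$, and absorbing the ${\mathcal{O}}(\Delta t)$ into the larger error gives the second line. For the orders of magnitude in the third line, the signal part is ${\mathcal{O}}(1)$ by \eqref{blah1}, while with $f_k\Delta t=k/N$ the noise sum is evaluated by the elementary identity $\sum_{k=0}^{N-1}\sin^2(\pi k/N)=N/2$, so that $\sum_{k}\sigma^2_{\varepsilon}|2\sin(\pi f_k\Delta t)|^2=2N\sigma^2_{\varepsilon}={\mathcal{O}}(\Delta t^{-1})$.

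I expect the only genuine technical point to be the drift bookkeeping behind the signal replacement. Writing $D_k:=J^{(X)}_k-\tilde{J}^{(X)}_k$ for the drift contribution of \eqref{eq:jkX}, the expansion of $\sum_k|J^{(X)}_k|^2$ produces, besides $\sum_k|\tilde{J}^{(X)}_k|^2$, the pure drift sum $\sum_k|D_k|^2={\mathcal{O}}(\Delta t)$ from \eqref{biasink1} and a cross term $2\,\mathrm{Re}\sum_k\tilde{J}^{(X)}_k\overline{D_k}$; the latter is controlled by Cauchy--Schwarz as $(\sum_k|\tilde{J}^{(X)}_k|^2)^{1/2}(\sum_k|D_k|^2)^{1/2}={\mathcal{O}}(\sqrt{\Delta t})$, so both are absorbed into the stated ${\mathcal{O}}(\log(\Delta t)\sqrt{\Delta t})$ error. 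Since this replacement is already displayed in the text preceding the lemma, the argument ultimately reduces to the exact Parseval-plus-independence calculation above.
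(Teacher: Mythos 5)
Your proof is correct and takes essentially the same route as the paper's: the paper's one-line proof invokes exactly the independence of $\{\varepsilon_{t_j}\}$ and $\{X_{t_j}\}$ together with \eqref{blah1} and \eqref{blah2}, which is precisely the signal/noise/cross-term decomposition of $\E\bigl|J^{(Y)}_k\bigr|^2$ that you spell out. Your explicit Cauchy--Schwarz bookkeeping for the drift term only makes rigorous the replacement $\sum_{k}\bigl|J^{(X)}_k\bigr|^2=\sum_{k}\bigl|\tilde{J}^{(X)}_k\bigr|^2+{\mathcal{O}}\left(\log(\Delta t)\sqrt{\Delta t}\right)$ that the paper already records (via \eqref{biasink1} and \eqref{biasink2}) in the text preceding the lemma, so no new ideas are introduced.
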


\begin{proof}
This result follows from the independence of $\{\varepsilon_t\}$ and $\{X_t\}$, combined with \eqref{blah1} and \eqref{blah2}.
\end{proof}

We notice directly from \eqref{eq:freq1} that the relative frequency contribution of $\Delta X_t$ and $\varepsilon_t$, i.e. ${\mathcal{S}}_{k,k}^{(X)}$ compared to the noise contribution $\sigma^2_{\varepsilon}
\left|2\sin(\pi f_k\Delta t)\right|^2$ determines the inherent bias of $\widehat{\langle X,X\rangle}_T^{(b)}$. Estimator~\eqref{e:naive} is inconsistent and biased
since it is equivalent to estimator~\eqref{e:biased}, and such a procedure would give an
unbiased estimator of the integrated volatility only when $\sigma^2_{\varepsilon}=0$.
When the estimator is expressed in the time domain the microstructure cannot be
disentangled from the It\^{o} process. On the other hand in the frequency domain, from the
very nature of a multiscale process, the contributions to $\widehat{\mathcal{S}}^{(Y)}_{k,k}$
can be disentangled.

\subsection{Multiscale Modelling}\label{sec:multiscalemodelling}
To correct the biased estimator we need to correct the usage of the biased estimator of ${\mathcal{S}}_{k,k}^{(X)},$
$\widehat{\mathcal{S}}_{k,k}^{(Y)},$ at each frequency. We therefore define a new
shrinkage estimator \cite[p.~155]{Wasserman} of ${\mathcal{S}}_{k,k}^{(X)}$ by
\begin{equation}
\label{eq:newestimator}
\widehat{{\mathcal{S}}}_{k,k}^{(X)}(L_k)=L_k \widehat{\mathcal{S}}^{(Y)}_{k,k}.
\end{equation}
$0\le L_k\le 1$ is referred to as the {\em multiscale ratio} and its optimal form for perfect bias correction is for an arbitrary It\^{o} process given by
\begin{equation}
\label{e:multiscaler}
L_k=\frac{{\mathcal{S}}^{(X)}_{k,k}}{{\mathcal{S}}^{(X)}_{k,k}+
\sigma^2_{\varepsilon}\left|2\sin(\pi f_k \Delta t)\right|^2}.
\end{equation}
This quantity {\em cannot} be calculated without explicit knowledge of ${\mathcal{S}}^{(X)}_{k,k}$
and $\sigma^2_{\varepsilon}$. We can however use \eqref{eq:varJk} to simplify \eqref{e:multiscaler} to obtain
\begin{equation}
\label{e:multiscaler2}
L_k=\frac{\overline{\sigma}^2_X}{\overline{\sigma}^2_X+
\sigma^2_{\varepsilon}\left|2\sin(\pi f_k \Delta t)\right|^2}.
\end{equation}
For a fixed $0\le L_k\le 1$
\begin{eqnarray*}
\E\left\{\widehat{\mathcal{S}}^{(X)}_{kk}(L_k)\right\}&=&L_k \E\left\{\left|J^{(Y)}_k\right|^2\right\}\\
&=&\overline{\sigma}_X^2+{\mathcal{O}}\left(\frac{\Delta t^{3/2}}{k}\right),
\end{eqnarray*}
where the order terms follow from the continuity of $\mu_s$.
We can define a new estimator for the true $L_k$ via:
\begin{eqnarray*}
\widehat{\langle X, X\rangle}_T^{(m_3)}&=&
\sum_{k=0}^{N-1}\widehat{\mathcal{S}}^{(X)}_{kk}(L_k),\end{eqnarray*}
where
\begin{eqnarray*}
\E\left\{\widehat{\langle X, X\rangle}_T^{(m_3)}\right\}&=&\langle X, X\rangle_T+
{\mathcal{O}}\left(\log(\Delta t)\sqrt{\Delta t}\right).
\end{eqnarray*}
Recall that $\langle X, X\rangle_T=O(T)=O(1)$.
Consequently, to leading order we can remove the bias from the naive estimator if we know the multiscale ratio.
We shall now develop a multiscale
understanding of the process under observation and use this to construct
an estimator for the multiscale ratio.

\subsection{Estimation of the Multiscale Ratio}
We have a two-parameter description on how the energy should be adjusted at each
frequency. We only need to determine estimators of $\bm{\sigma}=
\left(\overline{\sigma}^2_{X},\, \sigma^2_{\varepsilon}\right)$. We propose to implement the
estimation using the Whittle likelihood methods (see \cite{Beran1994} or \cite{Whittle53,Whittle62}). For a time-domain sample
$\bm{\Delta }{\mathbf{Y}}=\left(\Delta Y_{t_1},\dots,\Delta Y_{t_{N}}\right)$ that is {\em stationary}, if suitable conditions are satisfied, see for example \cite{Dzham1983}, then the Whittle likelihood approximates the time domain likelihood, with improving approximation as the sample size increases. It is possible to show a number of suitable properties of estimators based on the
Whittle likelihood, see \cite{Whittle53,Whittle62}. For processes that are not stationary, such conditions are in general not met, and so the function can be used as an objective function to construct estimators, but not as a true likelihood.
The Whittle log-likelihood is defined \cite{Whittle53,Whittle62} by
{\small\begin{alignat*}{1}
l ({\bf {\mathcal{S}}})&\equiv \log\left[\prod_{k=1}^{N/2-1}\frac{1}{{\mathcal S}^{(Y)}_{kk}}
e^{-\frac{\widehat{\mathcal S}^{(Y)}_{kk}}{{\mathcal S}^{(Y)}_{kk}}}\right]\\
&=-\sum_{k=1}^{N/2-1} \log\left({\mathcal{S}}^{(X)}_{kk}+\sigma^2_{\varepsilon}\left|2 \sin(\pi f_k\Delta t)\right|^2\right)-\sum_{k=1}^{N/2-1}\frac{\widehat{{\mathcal{S}}}^{(Y)}_{kk}}
{{\mathcal{S}}^{(X)}_{kk}+\sigma^2_{\varepsilon}\left|2 \sin(\pi f_k\Delta t)\right|^2}.
\end{alignat*}}
If  $\left\{\Delta X_t\right\}$ is not stationary, then as long as the total contributions of the covariance of
the incremental process can be bounded, using this likelihood will asymptotically (in $\Delta t^{-1}$) produce
suitable estimators, as we shall discuss further. 

\begin{definition}{(Multiscale Energy Likelihood)}\\
The multiscale energy log-likelihood is then defined using \eqref{eq:varJk} as:
\begin{alignat}{1}
l(\bm{\sigma})&=-\sum_{k=1}^{N/2-1} \log\left(\overline{\sigma}^2_{X}+ \sigma^2_{\varepsilon}\left|2 \sin(\pi f_k\Delta t)\right|^2\right)-\sum_{k=1}^{N/2-1}\frac{\widehat{\mathcal S}^{(Y)}_{kk}}
{\overline{\sigma}^2_{X}+\sigma^2_{\varepsilon}\left|2 \sin(\pi f_k\Delta t)\right|^2}.
\label{e:multis}
\end{alignat}
\end{definition}
We stress that strictly speaking this may not be a (log-)likelihood, but merely
a device for determining the multiscale ratio.
We maximise this function in $\bm{\sigma}$ to obtain a set of  estimators $\widehat{\bm{\sigma}}$. 

\begin{theorem}
\label{thm:thmA}
{(The Estimated Multiscale Ratio)}\\
The estimated multiscale ratio is given by
\begin{eqnarray}
\label{e:dis}
\widehat{L}_k&=&\frac{\widehat{\sigma}^2_{X}}{\widehat{\sigma}^2_{X}+
 \widehat{\sigma}^2_{\varepsilon}\left|2 \sin(\pi f_k\Delta t)\right|^2},\end{eqnarray}
where $\widehat{\sigma}^2_{X}$ and $\widehat{\sigma}^2_{\varepsilon}$ maximise $\ell(\bm{\sigma})$ given in
\eqref{e:multis}. $\widehat{L}_k$ satisfies
\begin{eqnarray}
\label{e:rss}
\frac{\widehat{L}_k}{L_k}&=&1+{\mathcal{O}}\left(\Delta t^{1/4}\right).
\end{eqnarray}
\end{theorem}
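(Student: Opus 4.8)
The form \eqref{e:dis} is immediate: it is nothing but \eqref{e:multiscaler2} with the true parameters $\bm{\sigma}=(\overline{\sigma}^2_X,\sigma^2_\varepsilon)$ replaced by the maximisers $\widehat{\bm{\sigma}}=(\widehat{\sigma}^2_X,\widehat{\sigma}^2_\varepsilon)$ of \eqref{e:multis}, so all the work is in \eqref{e:rss}. I would first reduce that rate claim to a statement about the relative errors of the two parameter estimators. Writing $a=\overline{\sigma}^2_X$, $b=\sigma^2_\varepsilon$ and $w_k=|2\sin(\pi f_k\Delta t)|^2=4\sin^2(\pi k/N)$, so that $L_k=a/(a+bw_k)$ and $\widehat{L}_k=\widehat{a}/(\widehat{a}+\widehat{b}w_k)$, and setting $\delta_a=\widehat{a}/a-1$, $\delta_b=\widehat{b}/b-1$, a direct expansion using $1-L_k=bw_k/(a+bw_k)$ gives
\begin{equation*}
\frac{\widehat{L}_k}{L_k}=1+(1-L_k)(\delta_a-\delta_b)+{\mathcal{O}}\big(\delta_a^2+\delta_b^2\big).
\end{equation*}
Since $0\le 1-L_k\le 1$ \emph{uniformly} in $k$, \eqref{e:rss} follows once I show $\delta_a,\delta_b={\mathcal{O}}(\Delta t^{1/4})$; in fact I expect $\delta_b$ to be of the smaller order ${\mathcal{O}}(\Delta t^{1/2})$, with $\delta_a$ supplying the stated rate.

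The plan for the relative errors is to treat \eqref{e:multis} as a Whittle objective and run the standard score/information analysis, adapted to the non-standard feature that the signal parameter shrinks. First I would record that the model is correctly specified to leading order: by \eqref{eq:varJk} and \eqref{blah2} the true structure is ${\mathcal{S}}^{(Y)}_{kk}=a+bw_k+{\mathcal{O}}(\Delta t^2)$, with the periodogram ordinates $\widehat{\mathcal{S}}^{(Y)}_{kk}$ approximately unbiased and, crucially, approximately uncorrelated across $k$ (the off-diagonal noise covariance in \eqref{blah2} vanishes and the $X$-part decays in $|k_1-k_2|$). This lets me linearise the score equation $\nabla\ell(\widehat{\bm{\sigma}})=0$ about the truth, so $\widehat{\bm{\sigma}}-\bm{\sigma}\approx{\mathcal{I}}^{-1}\nabla\ell(\bm{\sigma})$, where ${\mathcal{I}}$ is the expected information of the spectral model $S_k(a,b)=a+bw_k$, with entries ${\mathcal{I}}_{aa}=\sum_k S_k^{-2}$, ${\mathcal{I}}_{ab}=\sum_k w_k S_k^{-2}$, ${\mathcal{I}}_{bb}=\sum_k w_k^2 S_k^{-2}$.

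The rate then falls out of estimating these sums, which is exactly where the shrinking signal matters. With $a={\mathcal{O}}(\Delta t)={\mathcal{O}}(N^{-1})$ and $b={\mathcal{O}}(1)$, I would approximate $\sum_k$ by $(N/\pi)\int_0^{\pi/2}d\theta$ with $w=4\sin^2\theta$ and localise near the crossover $\theta_c\sim\sqrt{a/b}={\mathcal{O}}(N^{-1/2})$, obtaining ${\mathcal{I}}_{aa}={\mathcal{O}}(Na^{-3/2})={\mathcal{O}}(N^{5/2})$, ${\mathcal{I}}_{ab}={\mathcal{O}}(Na^{-1/2})={\mathcal{O}}(N^{3/2})$ and ${\mathcal{I}}_{bb}={\mathcal{O}}(N)$, whence $\det{\mathcal{I}}={\mathcal{O}}(N^{7/2})$ (the cross term is negligible by Cauchy--Schwarz). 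Inverting the $2\times2$ matrix, $\Var(\widehat{a})\asymp{\mathcal{I}}_{bb}/\det{\mathcal{I}}={\mathcal{O}}(N^{-5/2})$ and $\Var(\widehat{b})\asymp{\mathcal{I}}_{aa}/\det{\mathcal{I}}={\mathcal{O}}(N^{-1})$. Dividing by $a^2={\mathcal{O}}(N^{-2})$ and $b^2={\mathcal{O}}(1)$ gives relative standard deviations $\delta_a={\mathcal{O}}(N^{-1/4})={\mathcal{O}}(\Delta t^{1/4})$ and $\delta_b={\mathcal{O}}(N^{-1/2})={\mathcal{O}}(\Delta t^{1/2})$, which, fed back into the first display, deliver \eqref{e:rss}. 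The intuition is that only the ${\mathcal{O}}(\sqrt{N})$ frequencies in the crossover band $\theta\lesssim\theta_c$ carry information about $a$, so its effective sample size is $\sqrt{N}$ and its relative rate is $N^{-1/4}$ rather than the usual parametric $N^{-1/2}$.

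The hard part will be precisely that this is not a textbook Whittle problem. The parameter $a=\langle X,X\rangle_T/N$ is not fixed but tends to zero with the sampling, so off-the-shelf consistency and asymptotic-normality theorems do not apply and the information sums must be handled in this triangular-array, shrinking-parameter regime. A second technical point is that $\{\Delta X_t\}$ need not be stationary, so \eqref{e:multis} is only an approximate likelihood: justifying the linearisation of the score requires controlling the bias from the time-varying volatility and from the drift (the ${\mathcal{O}}(\Delta t^2)$ and ${\mathcal{O}}(\log(\Delta t)\sqrt{\Delta t})$ corrections already isolated in Section~\ref{sec:estmeth}) and verifying that the approximate independence of the periodogram ordinates is strong enough for a central limit theorem for $\nabla\ell(\bm{\sigma})$ to hold at the order needed for the $\Delta t^{1/4}$ conclusion.
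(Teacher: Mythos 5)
Your proposal is correct and follows essentially the same route as the paper's Appendix A: you linearise the Whittle score of \eqref{e:multis} about the truth and compute the information orders in the shrinking-signal regime, and your orders $\mathcal{I}_{aa}=\mathcal{O}(N^{5/2})$, $\mathcal{I}_{ab}=\mathcal{O}(N^{3/2})$, $\mathcal{I}_{bb}=\mathcal{O}(N)$ match exactly the paper's $\mathcal{O}(\Delta t^{-1/2})$, $\mathcal{O}(\Delta t^{-1/2})$, $\mathcal{O}(\Delta t^{-1})$ information entries in the rescaled parametrisation $\tau_X=\overline{\sigma}^2_X/\Delta t$, giving the same relative rates $\mathcal{O}(\Delta t^{1/4})$ for $\widehat{\sigma}^2_X$ and $\mathcal{O}(\Delta t^{1/2})$ for $\widehat{\sigma}^2_{\varepsilon}$ and hence \eqref{e:rss}. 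The only differences are presentational: you use the expected information with an explicit crossover localisation of the frequency sums where the paper uses the observed information with a renormalisation-and-Slutsky argument, and your opening expansion $\widehat{L}_k/L_k=1+(1-L_k)(\delta_a-\delta_b)+\mathcal{O}(\delta_a^2+\delta_b^2)$ is performed only implicitly in the paper (in the course of Appendix B).
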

\begin{proof}
See Appendix \ref{sec:apA}.
\end{proof}

Combining \eqref{eq:newestimator} with \eqref{e:dis} the proposed estimator of the spectral density of $\left\{\Delta X_t\right\}$ is:
\begin{equation}
\label{e:grr}
\widehat{\mathcal S}^{(X)}_{kk}(\widehat{L}_k)=\widehat{L}_k \widehat{\mathcal S}^{(Y)}_{kk},
\end{equation}
where $\widehat{L}_k$ is given by \eqref{e:dis}.
\begin{theorem}
\label{thm:thmB}
{(The Multiscale Estimator of the Integrated Volatility)}\\
Assume that $\Delta X_{t_j}$ satisfies the conditions of Lemma \ref{lemma1} and Theorem \ref{thm:thmA}.
The multiscale estimator of the integrated volatility defined by
\begin{equation}
\widehat{\langle X, X\rangle}_T^{(m_1)}=
\sum_{k=0}^{N-1}\widehat{\mathcal S}^{(X)}_{kk}(\widehat{L}_k),
\end{equation}
where $\widehat{\mathcal S}^{(X)}_{kk}(\widehat{L}_k)$ is defined by \eqref{e:grr} has a mean and variance given by:
 \begin{eqnarray*}
\nonumber
\E\left\{\widehat{\langle X, X\rangle}_T^{(m_1)} \right\}&=&\sum_{k=0}^{N-1}{\mathcal{S}}^{(X)}_{kk}
+{\mathcal{O}}\left(\log(\Delta t)\sqrt{\Delta t}\right)+
{\mathcal{O}}\left(\Delta t^{1/4}\right)\\
\nonumber
&=&\int_0^T \E\left\{\sigma^2_t\right\} +{\mathcal{O}}\left(\log(\Delta t)\sqrt{\Delta t}\right)+
{\mathcal{O}}\left(\Delta t^{1/4}\right) \end{eqnarray*}
and
\begin{eqnarray*}
\var\left\{\widehat{\langle X, X\rangle}_T^{(m_1)} \right\}&=&
\sum_{k=0}^{N-1}{ L}_{k}^2 \left|{\mathcal S}^{(Y)}_{k,k}\right|^2+{\mathcal{O}}(\Delta t^{1/2})={\mathcal{O}}(\Delta t^{1/2}).
\nonumber
\end{eqnarray*}
\end{theorem}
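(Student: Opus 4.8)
The plan is to expand $\widehat{\langle X, X\rangle}_T^{(m_1)}=\sum_{k=0}^{N-1}\widehat{L}_k\widehat{\mathcal S}^{(Y)}_{kk}$ by writing $\widehat{L}_k=L_k+(\widehat{L}_k-L_k)$, splitting the estimator into a deterministic-ratio piece $A=\sum_k L_k\widehat{\mathcal S}^{(Y)}_{kk}$ and a ratio-fluctuation piece $B=\sum_k(\widehat{L}_k-L_k)\widehat{\mathcal S}^{(Y)}_{kk}$. Both the mean and the variance claims then follow from analysing $A$ and $B$ separately, the crucial external input being the rate $\widehat{L}_k/L_k=1+{\mathcal{O}}(\Delta t^{1/4})$ supplied by Theorem~\ref{thm:thmA}.

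For the mean I would first take expectations in $A$. Because $\widehat{\mathcal S}^{(Y)}_{kk}=|J^{(Y)}_k|^2$ is, up to the drift correction, unbiased for ${\mathcal S}^{(Y)}_{kk}$, and $L_k{\mathcal S}^{(Y)}_{kk}=\overline{\sigma}^2_X={\mathcal S}^{(X)}_{kk}$ to leading order by \eqref{e:multiscaler2} and \eqref{eq:varJk}, summation yields $\E\{A\}=\sum_k{\mathcal S}^{(X)}_{kk}+{\mathcal{O}}(\log(\Delta t)\sqrt{\Delta t})$, where the error is precisely the drift contribution already bounded in Lemma~\ref{lemma1} through \eqref{blah1}. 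For $B$ the uniform bound $|\widehat{L}_k-L_k|=L_k\,{\mathcal{O}}(\Delta t^{1/4})$ from Theorem~\ref{thm:thmA}, together with $\sum_k L_k\,\E\{\widehat{\mathcal S}^{(Y)}_{kk}\}=\sum_k{\mathcal S}^{(X)}_{kk}={\mathcal{O}}(1)$, gives $\E\{B\}={\mathcal{O}}(\Delta t^{1/4})$. Adding the pieces reproduces the stated mean.

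For the variance I would use $\var\{\widehat{\langle X, X\rangle}_T^{(m_1)}\}=\var\{A\}+2\,\cov\{A,B\}+\var\{B\}$. The structural term lives in $\var\{A\}=\sum_{k_1,k_2}L_{k_1}L_{k_2}\cov\{\widehat{\mathcal S}^{(Y)}_{k_1k_1},\widehat{\mathcal S}^{(Y)}_{k_2k_2}\}$; the diagonal contributes $\sum_k L_k^2\var\{|J^{(Y)}_k|^2\}=\sum_k L_k^2|{\mathcal S}^{(Y)}_{kk}|^2$ to leading order, and since $L_k^2|{\mathcal S}^{(Y)}_{kk}|^2=(\overline{\sigma}^2_X)^2={\mathcal{O}}(\Delta t^2)$ over $N={\mathcal{O}}(\Delta t^{-1})$ frequencies this sum is ${\mathcal{O}}(\Delta t)$. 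The off-diagonal terms are summable and likewise ${\mathcal{O}}(\Delta t)$, since $\tilde{\mathcal S}^{(X)}_{k_1,k_2}=\Sigma((k_1-k_2)/T)/N$ decays like $|k_1-k_2|^{-\alpha-1}$ by the smoothness of $\E\{\sigma^2_t\}$ while the noise covariance \eqref{blah2} is purely diagonal. The dominant error instead comes from $\var\{B\}$: the fluctuation of $\widehat{L}_k$ is driven entirely by the two global estimators $\widehat{\sigma}^2_X,\widehat{\sigma}^2_\varepsilon$, so $B$ behaves like a common ${\mathcal{O}}(\Delta t^{1/4})$ factor multiplying the ${\mathcal{O}}(1)$ quantity $\sum_k L_k\widehat{\mathcal S}^{(Y)}_{kk}$, whence $\var\{B\}={\mathcal{O}}(\Delta t^{1/2})$ and, by Cauchy--Schwarz, $\cov\{A,B\}={\mathcal{O}}(\Delta t^{3/4})$ is of smaller order. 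As ${\mathcal{O}}(\Delta t^{1/2})$ dominates the ${\mathcal{O}}(\Delta t)$ contributions, one obtains $\var=\sum_k L_k^2|{\mathcal S}^{(Y)}_{kk}|^2+{\mathcal{O}}(\Delta t^{1/2})={\mathcal{O}}(\Delta t^{1/2})$.

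The main obstacle is fourth-order in nature, on two fronts. First, evaluating $\var\{|J^{(Y)}_k|^2\}$ and the off-diagonal periodogram covariances needs the fourth moments of $\{\tilde J^{(X)}_k\}$, which are not Gaussian since $\sigma_t$ is itself stochastic; one must verify that the Isserlis-type identity $\var\{|J^{(Y)}_k|^2\}=|{\mathcal S}^{(Y)}_{kk}|^2+\text{(lower order)}$ holds to the required accuracy, controlling both the fourth-cumulant correction and the complementary (pseudo-)covariance of $J^{(Y)}_k$. Second, and more delicate, $\widehat{L}_k$ is \emph{not} independent of the periodogram ordinates $\widehat{\mathcal S}^{(Y)}_{kk}$ appearing in the same sum, so $\cov\{A,B\}$ and $\var\{B\}$ must be estimated honestly rather than factored; showing that the global Whittle estimators induce only an ${\mathcal{O}}(\Delta t^{1/4})$ relative, essentially frequency-common, fluctuation is exactly what pins the error at ${\mathcal{O}}(\Delta t^{1/2})$, and this is where the rate of Theorem~\ref{thm:thmA} must be applied with care.
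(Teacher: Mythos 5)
Your proposal is correct and follows essentially the same route as the paper's Appendix B: your split $\widehat{L}_k=L_k+(\widehat{L}_k-L_k)$ is exactly the paper's expansion $\widehat{L}_k=\bigl(1+\delta\tau_X/\tau_X-\cdots\bigr)L_k$ in the Whittle-parameter fluctuations of Theorem \ref{thm:thmA}, and your identification of $\var\{B\}={\mathcal{O}}(\Delta t^{1/2})$ (through the common factor $\delta\tau_X$) as the dominant contribution matches the paper's conclusion that $\widehat{\langle X, X\rangle}_T^{(m_1)}$ inherits, to leading order, the variance of $\widehat{\langle X, X\rangle}_T^{(w)}$. The two fourth-order inputs you flag as obstacles are precisely what the paper supplies in Appendix A, namely the non-Gaussian periodogram covariance computation leading to \eqref{e:cov_per} and \eqref{ordercorr}, and the observation that $\delta\tau_X,\delta\sigma^2_{\varepsilon}$ are linear combinations of the $\widehat{\mathcal S}^{(Y)}_{kk}$, so your sketch aligns with the published argument at the same level of rigor.
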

\begin{proof}
See Appendix \ref{sec:apB}.
\end{proof}

We also note that
\begin{eqnarray}
\nonumber
\var\left\{\widehat{\langle X, X\rangle}_T^{(m_1)} \right\}&=&
\sum_{k=0}^{N-1}{ L}_{k}^2
\left|{\mathcal S}^{(Y)}_{k,k} \right|^2+{\mathcal{O}}(\Delta t^{1/2})\\
&<&{\mathcal{O}}\left(\frac{1}{\Delta t}\right)=
\var\left\{\widehat{\langle X, X\rangle}_T^{(b)} \right\},
\label{eq:lessthan}
\end{eqnarray}
unless $\sigma_{\varepsilon}=0$. We note  that the multiscale estimator has {\em lower} variance than the naive
method of moments estimator $\widehat{\langle X, X\rangle}_T^{(b)}$ due to the fact that $0\le { L}_{k}\le 1$. We have thus removed bias {\em and} simultaneously decreased the variance, the latter effect usually being the main purpose of shrinkage estimators. Note that if we knew the true multiscale ratio $L_k$
and used it rather than $\widehat{L}_k$ (i.e. used 
$\widehat{\langle X, X\rangle}_T^{(m_3)}$) then we would expect an estimator from this quantity to recover the same variance as the estimator based on the noise-free observations. This loss of efficiency is inevitable, as we have to estimate $L_k$. Finally we can also construct a Whittle estimator for the integrated volatility by starting from \eqref{eq:varJk}
and taking
\begin{equation}
\label{whittle}
\widehat{\langle X, X\rangle}_T^{(w)}=N \widehat{\sigma}^2_X .
\end{equation}
The sampling properties of $\widehat{\langle X, X\rangle}_T^{(w)}$ are found in Appendix \ref{sec:apA}, and $\widehat{\sigma}_X^2$ is asymptotically unbiased. The results in Appendix \ref{sec:apA} imply that
\begin{eqnarray}
\var\left\{\widehat{\langle X, X\rangle}_T^{(w)}\right\}&=&T \frac{\sigma_{\varepsilon}}{\tau_X^{1/2}}16  \tau_X^{2}\sqrt{\Delta t}.
\label{varwhittle2}
\end{eqnarray}
We see that the variance depends on the length of the time course, the inverse of the signal to noise ratio, the square root of the sampling period and the fourth power of the ``average standard deviation'' of the $X_t$ process., We may compare the variance of \eqref{eq:lessthan} with the variance of \eqref{varwhittle2}, to determine which estimator of $\widehat{\langle X, X\rangle}_T^{(w)}$ and $\widehat{\langle X, X\rangle}_T^{(m_1)}$ is preferable. 
We shall return to this question of relative performance in the examples section, but intuitively argue that $\widehat{\langle X, X\rangle}_T^{(w)}$ and
 $\widehat{\langle X, X\rangle}_T^{(m_1)}$ are more or less the same estimator, with the latter estimator being more intuitive to explain.

\subsection{Time Domain Understanding of the Method}
We may write the frequency domain estimator of the spectral density of $\Delta X_t$
in the time domain to clarify some of its properties. We define 
\[\widehat{s}^{(X)}_{\tau}=\frac{1}{N}\sum_{k=0}^{N-1} \widehat{\mathcal{S}}^{(X)}_{kk} e^{2i\pi \frac{k\tau}{N}},\quad \tau \in {\mathbb{N}},\]
which when $\Delta X_t$ is a stationary process corresponds to the estimated autocovariance sequence of $\Delta X_t$ using the method of moments estimator \cite[Ch. 5]{Brillinger}. We then have 
\begin{eqnarray}
\widehat{\mathcal{S}}^{(X)}_{kk}&=&L_k \widehat{\mathcal{S}}^{(Y)}_{kk}, \quad \widehat{s}^{(X)}_{\tau}=\sum_u \ell_{\tau-u} \widehat{s}^{(Y)}_{u},
\label{timdomain33}
\end{eqnarray}
and so the estimated autocovariance of the $\Delta X_t$ process, namely $\widehat{s}^{(X)}_{\tau}$, is a smoothed version of $\widehat{s}^{(Y)}_{\tau}$.
We can therefore view $\widehat{\mathcal{S}}^{(X)}_{kk}$ as the Fourier transform of a smoothed version of the autocovariance sequence of $\Delta Y_t$. We let
\begin{equation}
L(f)=\frac{\overline{\sigma}^2_X}{\overline{\sigma}^2_X+
\sigma^2_{\varepsilon}\left|2\sin(\pi f \Delta t)\right|^2},
\end{equation}
be the continuous analogue of $L_k$.
To find the smoothing kernel we are using we need to calculate
\begin{eqnarray}
\nonumber
\ell_{\tau}&=&
\frac{1}{N}\sum_{k=0}^{N-1}L_k e^{2i\pi \frac{k\tau}{N}}\\
&=&\int_{-\frac{1}{2}}^{\frac{1}{2}}
\frac{\overline{\sigma}^2_X}{\overline{\sigma}^2_X+4\sigma^2_{\varepsilon}\sin^2(\pi f )}\;e^{2i\pi f \tau }\;df+{\mathcal{O}}\left( \Delta t\right).
\label{eqnsmooth}
\end{eqnarray}
Thus utilizing integration in the complex plane (see Appendix \ref{sec:time}) we obtain that
\begin{eqnarray}
\ell_{\tau}=\left\{\begin{array}{lcr}
\left(\frac{ \sigma^2_\varepsilon}{\overline{\sigma}^2_{X}}\right)^{\tau}+{\mathcal{O}}
\left(\left(\frac{ \sigma_\varepsilon}{\overline{\sigma}_{X}}\right)^{2\tau+2}\right) & {\mathrm{if}}
& \sigma^2_{\varepsilon}<\overline{\sigma}^2_X\\
\frac{\overline{\sigma}_X}{2\sigma_{\varepsilon}}\left(1- \frac{\overline{\sigma}_X}{\sigma_{\varepsilon}}\right)^{\tau}+
{\mathcal{O}}\left(\frac{\overline{\sigma}_X^2}{2\sigma_{\varepsilon}^2}
\left(1-\frac{\overline{\sigma}_X}{ \sigma_{\varepsilon}}\right)^{\tau}\right) & {\mathrm{if}}
& \sigma^2_{\varepsilon}>\overline{\sigma}_X^2
\end{array}\right.
\label{timedomain22}
\end{eqnarray}
These are both decreasing sequences in $\tau$. We write $r_{\tau}=\frac{\overline{\sigma}_X}{2\sigma_{\varepsilon}}\left(1- \frac{\overline{\sigma}_X}{\sigma_{\varepsilon}}\right)^{\tau}.$ If we can additionally assume that $L(f)$ decreases sufficiently rapidly to be near zero by $f=\frac{1}{\pi}$ then we find that
\begin{eqnarray*}
\ell_{\tau}\approx q_{\tau}=\frac{\overline{\sigma}_X}{2\sigma_{\varepsilon}}e^{-\frac{\overline{\sigma}_X}{\sigma_{\varepsilon}}|\tau|}.
\end{eqnarray*}
In the limit of no observation noise ($\frac{\overline{\sigma}_X}{\sigma_{\varepsilon}}\rightarrow \infty$) then this sequence becomes a delta function centered at $\tau=0$. Let us plot these functions, i.e. $\ell_{\tau}$, $r_{\tau}$ and $q_{\tau}$ for a chosen case of $\overline{\sigma}_X^2/\sigma^2_{\varepsilon}\approx0.0331 $ (the approximate SNR used in a later example), in Figure \ref{fig:fig-1} (left). We see that theory coincides very well with practise, and almost perfect agreement between the three functions. $\ell_{\tau}$ is however a strange choice of kernel, if dictated by the statistical inference problem: it has heavier tails than the common choice of the Gaussian kernel, and is extremely peaked around zero (a Gaussian kernel with the same variance has been overlaid in Figure \ref{fig:fig-1}). This is not strange, as we are trying to filter out correlations due to non-It\^{o} behaviour, but counter to our intuition about suitable kernel functions, as the differenced It\^{o} process exhibits very little covariance at any lag but zero, the sharp peak at zero is necessary.

\begin{figure}
\centerline{
\includegraphics[width=2.5in, height = 2.5in]{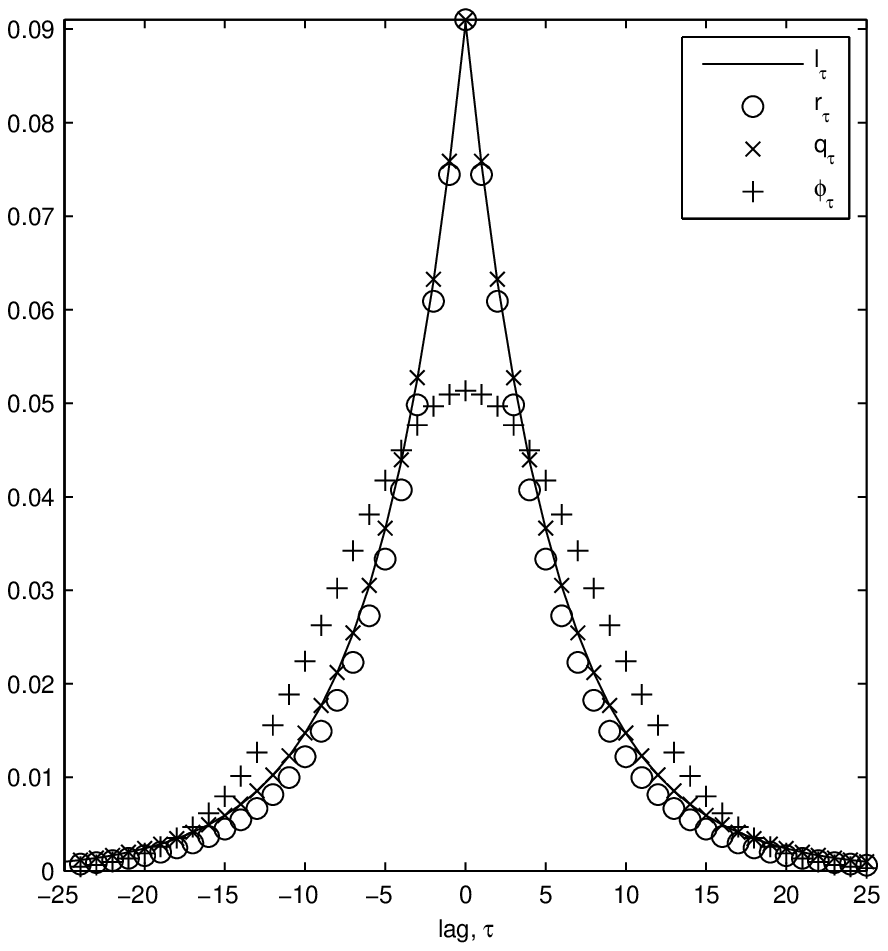} 
\includegraphics[width=2.5in, height = 2.5in]{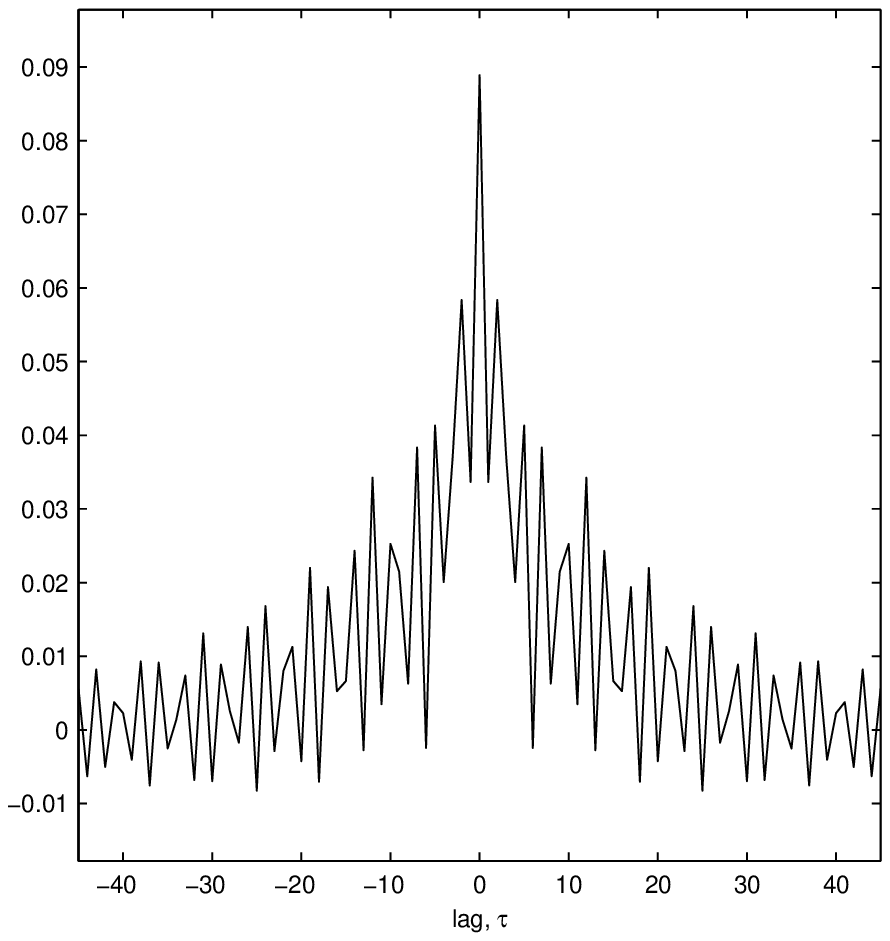}\\}
\begin{center}
\caption{$\ell_\tau$ as well as $r_{\tau}$ and $q_{\tau}$ for a chosen value of the SNR (left). The approximate weighting functions perfectly mirror the exact calculation. We overlay a Gaussian kernel with the same spread for comparison. $\ell_\tau$ estimated for the MA(6) case (right).\label{fig:fig-1}}
  \end{center}
\end{figure}

\subsection{Correlated Errors}\label{subsec:corr}
In many applications we need to consider correlated observation noise. We assume that despite being dependent the $\varepsilon_{t_j}$ is a stationary time series. Stationary processes
can be conveniently represented in terms of aggregations of uncorrelated white noise processes, using the Wold decomposition theorem \cite{Brockwell}[p.~187]. We may therefore write the zero-mean observation $\varepsilon_{t_j}$ as
\begin{equation}
\label{wolddecomp}
\varepsilon_{t_j}=\sum_{k=0}^{\infty} \theta_{t_k} \eta_{t_j-t_k},\end{equation}
where $\theta_{t_0}\equiv 1,$ $\sum_{j} \theta_{t_j}^2<\infty,$ and 
$\left\{\eta_{t_n}\right\}$ satisfies $\E\left\{\eta_{t_n} \right\}=
0$ and $\E\left\{\eta_{t_n} \eta_{t_m}\right\}=
\sigma^2_{\eta} \delta_{n,m}$, a model also used in \cite{Sykulskietal2008}.
Common practise would involve approximating the variable by a finite number of elements in the sum, and thus we truncate \eqref{wolddecomp} to some $q\in{\mathbb{Z}}$.
We therefore model the noise as a Moving Average (MA) process specified by
\begin{equation}
\varepsilon_{t_j}=\eta_{t_j}+\sum_{k=1}^{q}\theta_{t_k}\eta_{t_{j-k}},\end{equation}
and the covariance  of the DFT of the differenced $\varepsilon_{t_j}$ process takes the form:
\begin{equation}\label{eq:MA}
{\mathcal S}^{(\varepsilon)}_{k,k}=\sigma^2_{\eta}\left|1+\sum_{k=1}^{q} \theta_k e^{2i\pi f k}\right|^2|2\sin\left(\pi f \Delta t \right)|^2.
\end{equation}
This leads to defining a new multiscale ratio replacing $\sigma^2_{\varepsilon}|2\sin\left(\pi f \Delta t \right)|^2$ of \eqref{e:multiscaler2} with $\sigma^2_{\eta}\left|1+\sum_{k=1}^{q} \theta_k e^{2i\pi f k}\right|^2|2\sin\left(\pi f \Delta t \right)|^2$. We then obtain a new estimator of ${\mathcal{S}}^{(X)}_{kk}$. In general the value of $q$ is not known. To simultaneously implement model choice, we need to penalize the likelihood. We  define the corrected Aikake information criterion (AICC) by \cite[p. 303]{Brockwell} (refer to \eqref{e:multis} for $l\left(\bm{\sigma},\bm{\theta}\right)$
with $\sigma^2_{\varepsilon}|2\sin\left(\pi f \Delta t \right)|^2$ replaced by $\sigma^2_{\eta}\left|1+\sum_{k=1}^{q} \theta_k e^{2i\pi f k}\right|^2|2\sin\left(\pi f \Delta t \right)|^2$)
\begin{eqnarray}\label{e:aicc}
{\mathrm{AICC}}(\bm{\theta})&=&-2l\left(\bm{\sigma},\bm{\theta}\right)
+2\frac{(p+2)n}{n-p-3}.
\end{eqnarray}
By minimizing this function, in $\bm{\sigma}$, $\bm{\theta}$ and $q$, we obtain the best fitting model for the noise accounting for overfitting by using the penalty term. With this method we retrieve a new multiplier that is applied in the Fourier domain, which corresponds to a new smoother in the Fourier domain, where the smoothing window (and its smoothing width) have been automatically chosen by the data. See an example of such a smoothing window $\ell_{\tau}$ in  in Figure \ref{fig:fig-1} (right). Here $L_k$ has been estimated from an It\^{o} process immersed in an MA noise process. The spectrum of the MA has a trough at frequency 0.42. We therefore expect to reinforce oscillations at period $1/0.42\approx 2.5$, which is evident from the oscillations of the estimated kernel. For more details of this process see section \ref{excorr1}.

\section{Monte Carlo Studies}\label{sec:mcs}
In this section we demonstrate the performance of the multiscale estimator
through Monte Carlo simulations. We first describe the de-biasing procedure
of the estimator for the Heston Model using Fourier domain graphs. We then present bias, variance
and mean square error results of various estimators (including the multiscale
estimator, the naive estimator and the first-best estimator developed in \cite{Zhang+2005}), for the Heston Model as well as Brownian and Ornstein
Uhlenbeck processes. We then consider the case where the sample path in a
Heston Model is much
shorter and another case where the microstructure noise is greatly reduced.
Finally, we consider the case of correlated errors and show how a stationary noise process can be captured using model choice
methods and then the integrated volatility can be estimated using the adjusted
multiscale estimator.

\subsection{The Heston Model}
The Heston model is specified in \cite{Heston}:
\begin{equation}
\label{e:heston}
dX_t=\left(\mu-\nu_t/2\right)dt+\sigma_t dB_t,\quad
d\nu_t=\kappa\left(\alpha-\nu_t\right)dt+\gamma\nu_t^{1/2}dW_t,
\end{equation}
where $\nu_t = \sigma_t^2$, and $B_t$ and $W_t$ are
correlated 1-D Brownian motions. We will use the same
parameter values to the ones that were used in~\cite{Zhang+2005}, namely  $\mu=.05, \,
\kappa=5, \, \alpha=.04, \, \gamma =.5$ and the correlation coefficient between the two
Brownian motions B and W is $\rho = -.5$. We set $X_0 = 0$ and $\nu_0 = 0.04$, which is
the long time limit of the expectation of the process $\nu_t$.\footnote{$\lim_{t
\rightarrow +\infty}\mathbb{E} \nu_t = \alpha.$}

We calculate $\widehat{\mathcal{S}}^{(X)}_{kk}$
and $\widehat{\mathcal{S}}^{(\varepsilon)}_{kk}$ directly from simulated data and average across realizations, producing Figure \ref{fig:fig0}, where $k$ is indicated by its frequency $f_k=k/N$, and only plotted for $k=0,\dots,N/2-1$, as the spectrum (or ${\mathcal{S}}_{kk}^{(X)}$) is symmetric. We see directly from these plots that (on average as we showed) $\widehat{\mathcal{S}}^{(X)}_{kk}$ is constant whilst $\widehat{\mathcal{S}}^{(\varepsilon)}_{kk}$ is strongly increasing with $k$, completely dwarfing the other spectrum at large $k$. \eqref{blah1} implies that an equal weighting is given to all frequencies
for the differenced It\^o process. The noise process will in contrast have a spectrum
that is far from flat, and a suitable bias correction would shrink the estimator of ${\mathcal{S}}^{(X)}_{kk}$ at higher frequencies.
\begin{figure}
\centerline{
\begin{tabular}{c@{\hspace{3pc}}c}
\includegraphics[width=2.5in, height = 2.5in]{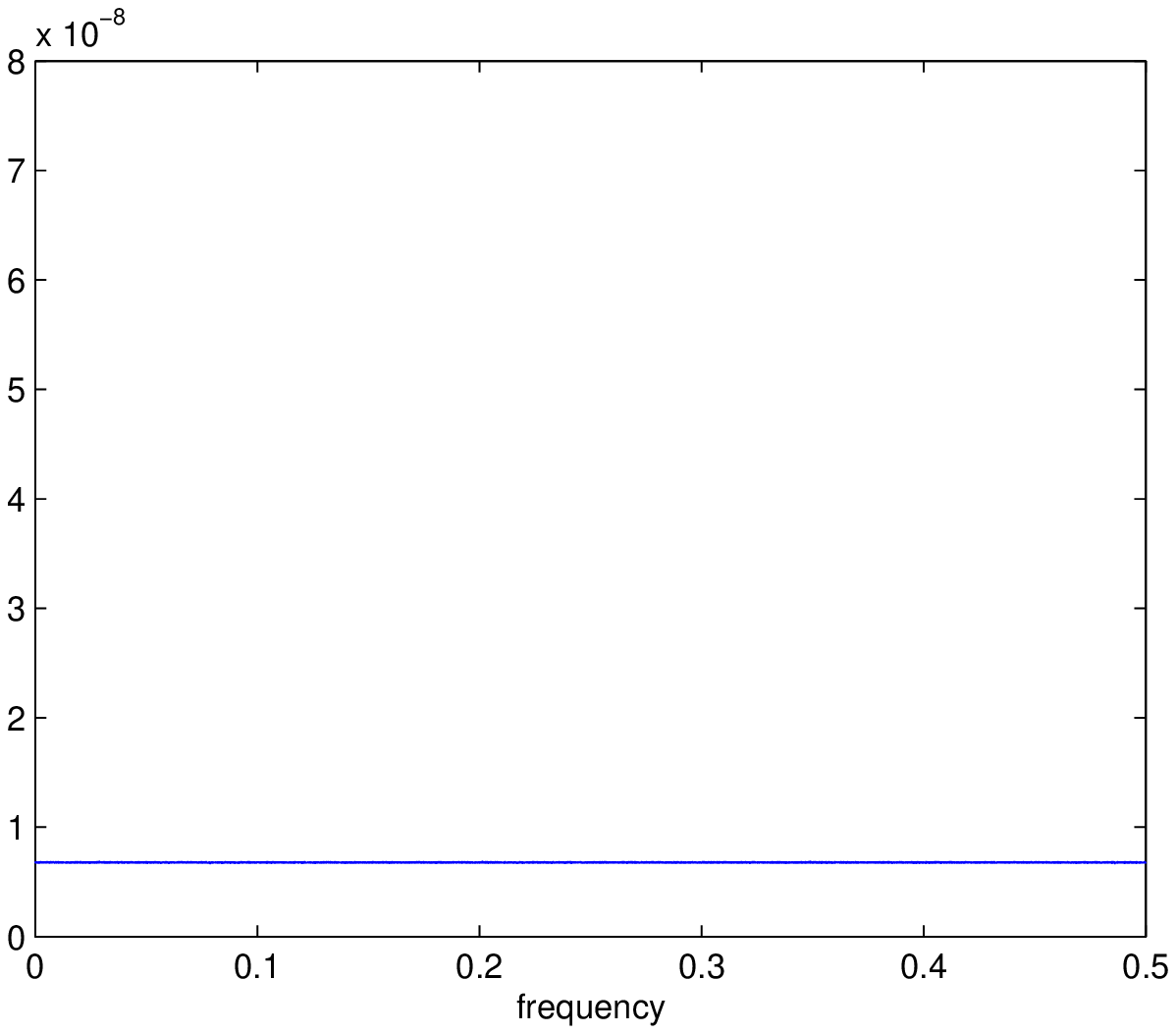} &
\includegraphics[width=2.5in, height = 2.5in]{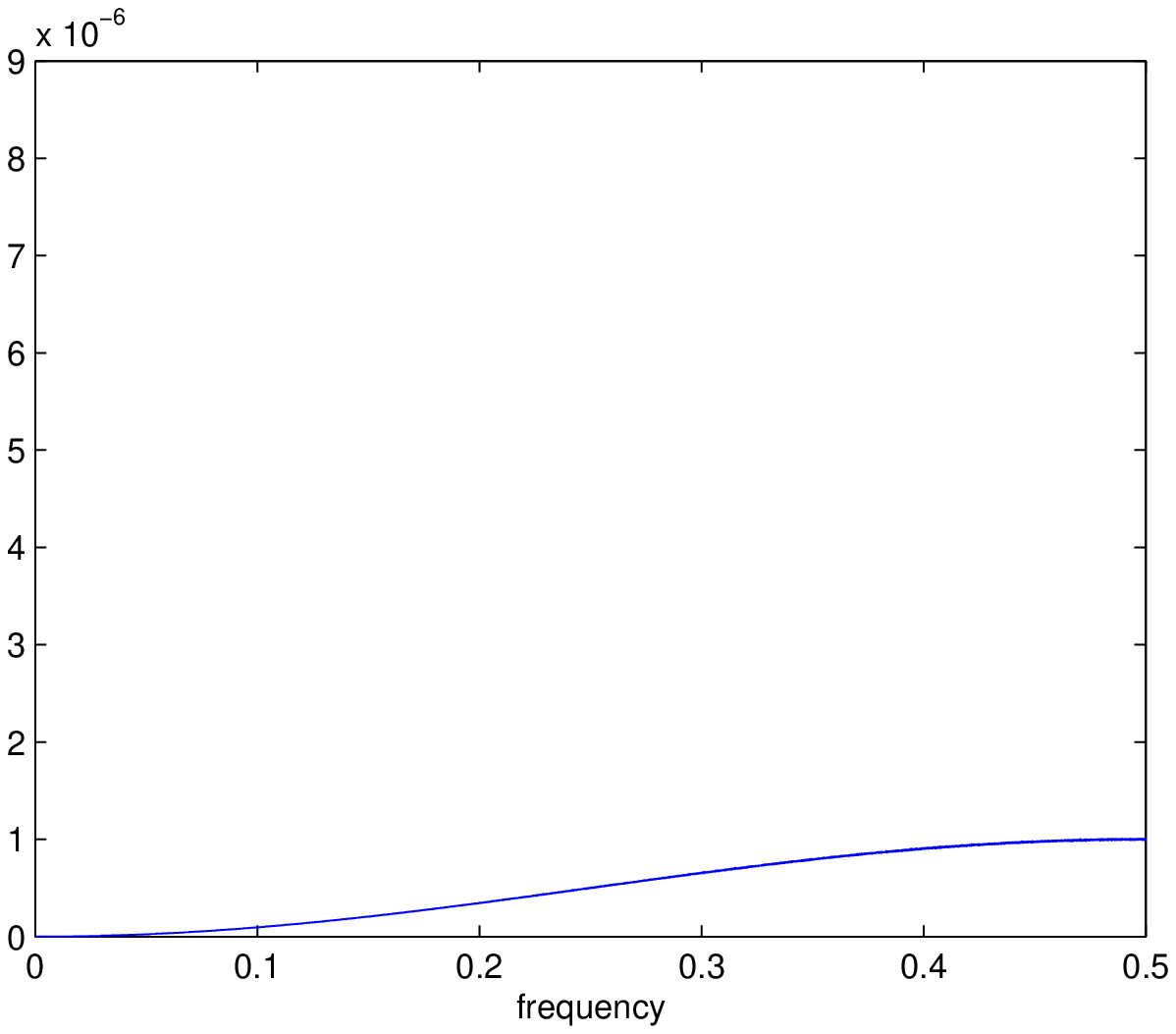} \\
\end{tabular}}
\begin{center}
\caption{$\widehat{\mathcal{S}}^{(X)}_{kk}$ (left)
and $\widehat{\mathcal{S}}^{(\varepsilon)}_{kk}$ (right) averaged over 100,000 realizations. Note the different scaling of the $y$
axis in the two figures. \label{fig:fig0}}
  \end{center}
\end{figure}

We also calculate $\widehat{\mathcal S}^{(X)}_{kk}$ and $\widehat{\mathcal S}^{(\varepsilon)}_{kk}$ for one simulated path, displayed in Figure \ref{fig:fig1}. Here
we have used the same sample length $T$ and noise intensity $\sigma^2_{\varepsilon}$ as in~\cite{Zhang+2005}: $T = 1$ day and $\sigma^2_{\varepsilon} = 0.0005^2$. The length of the sample path, $T = 1$ day or $23,400s$ with $\Delta t = 1 s$, corresponds to one trading day, since we take one trading day to be $6.5h$ long. Notice the different shape of the two periodograms. $\widehat{\mathcal S}^{(Y)}_{kk}$ will not be distinguishable
from $\widehat{\mathcal S}^{(\varepsilon)}_{kk}$ at higher frequencies, despite the moderate to low intensity of the market microstructure noise. If we observed the two components $X_t$ and $\varepsilon_t$ separately, then the multiscale ratio $L_k$ could be estimated from $\widehat{\mathcal S}^{(X)}_{kk}$
and $\widehat{\mathcal S}^{(\varepsilon)}_{kk}$ using the method of moments formula. In this case, we would estimate $L_k$ by the sample Fourier Transform variances
\begin{equation}
\label{e:multiscale_L_est}
\widetilde{L}_k=\frac{\widehat{\mathcal S}^{(X)}_{kk}}{\widehat{\mathcal S}^{(X)}_{kk}+\widehat{\mathcal S}^{(\varepsilon)}_{kk}}.
\end{equation}
The corresponding estimator of the integrated volatility becomes
\begin{equation}
\label{e:m2}
\widehat{\langle X, X\rangle}_T^{(m_2)}=\sum_{k=0}^{N-1}\widetilde{L}_k \widehat{\mathcal S}^{(Y)}_{kk}.
\end{equation}
The estimated multiscale ratio $\widetilde{L}_k$, for the Heston model with the specified parameters, is plotted in Figure \ref{fig:fig2}.

\begin{figure}
\centerline{
\begin{tabular}{c@{\hspace{1pc}}c}
\includegraphics[width=2.5in, height = 2.5in]{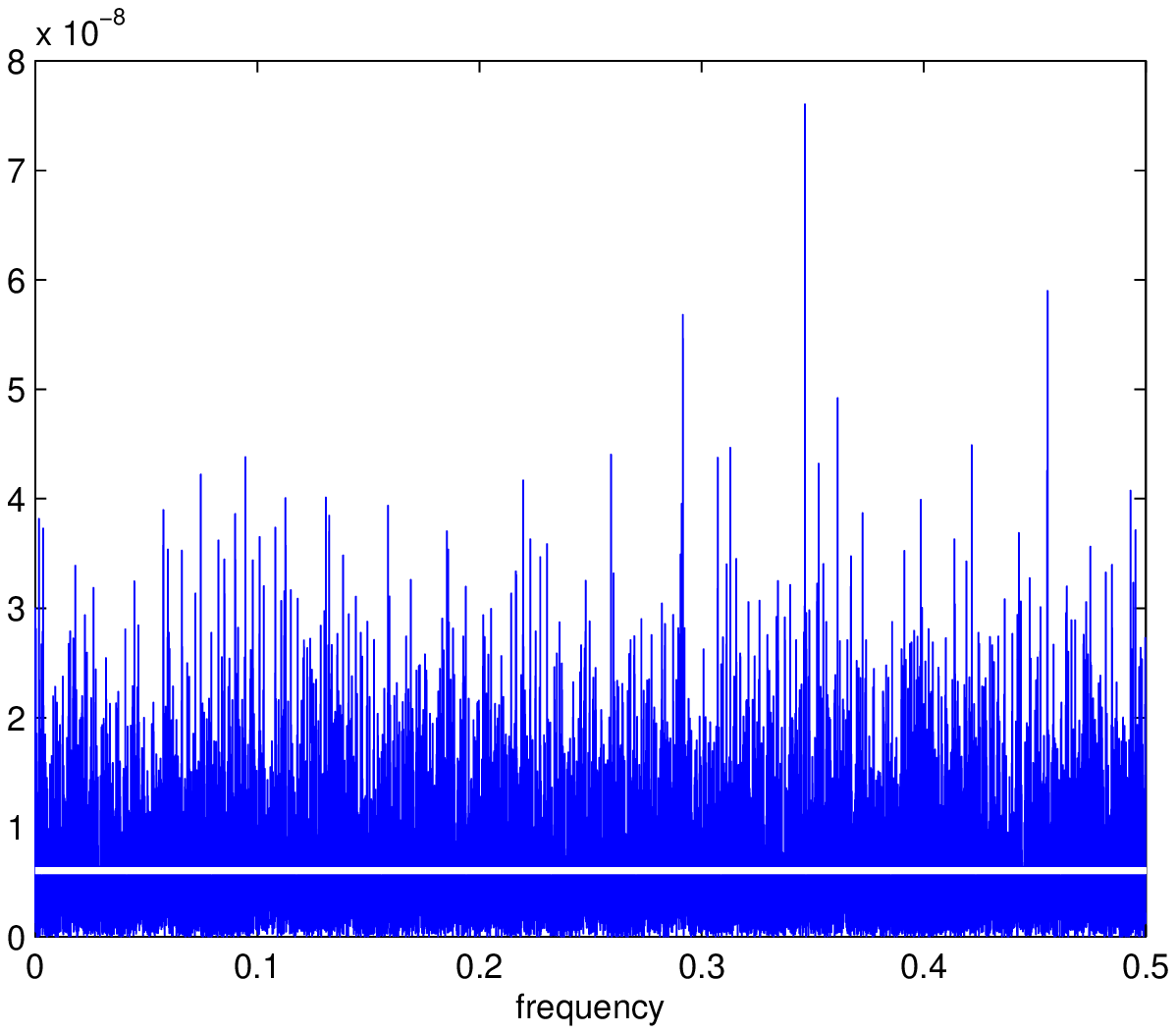} &
\includegraphics[width=2.5in, height = 2.5in]{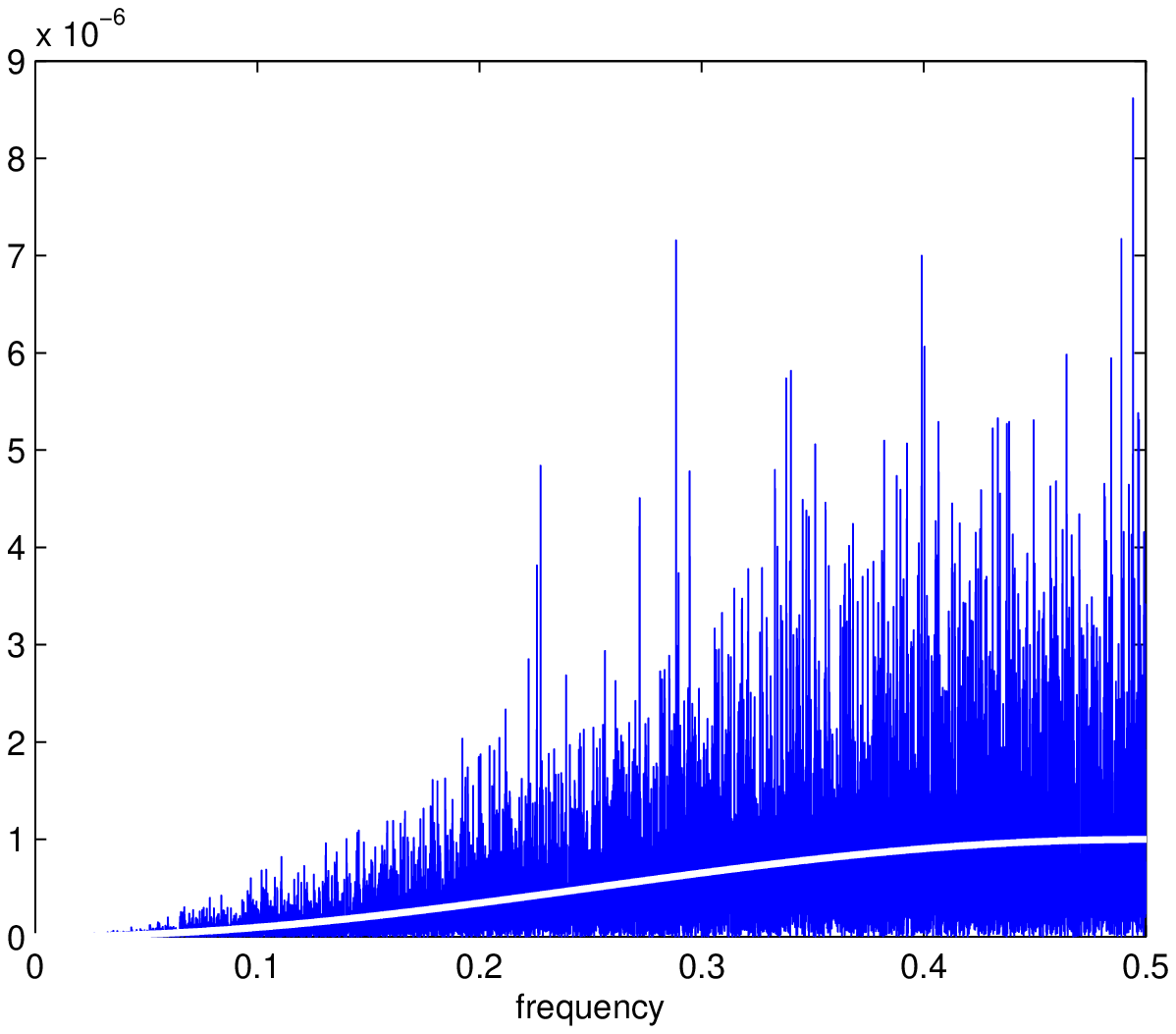} \\
\includegraphics[width=2.5in, height = 2.5in]{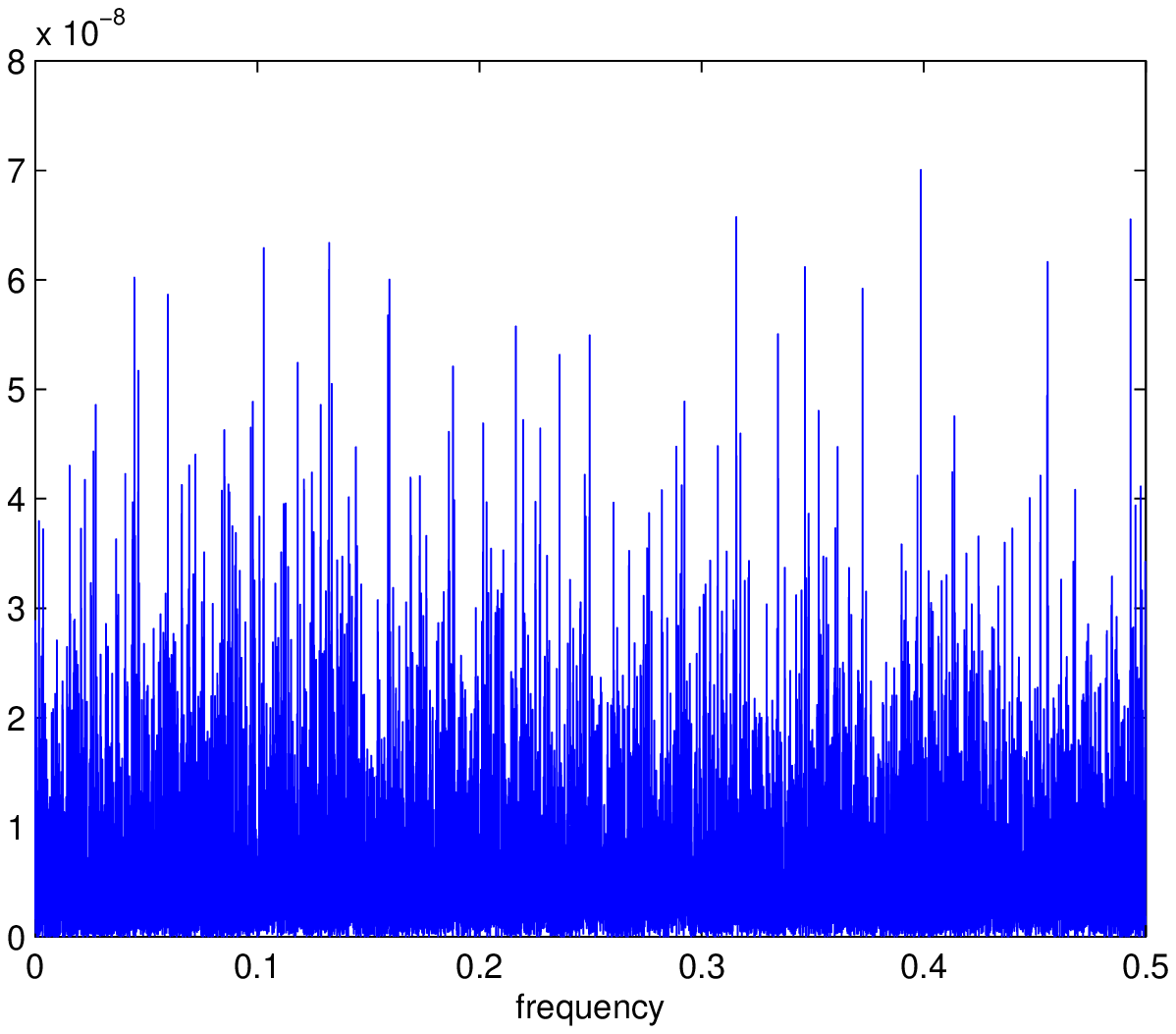} &
\includegraphics[width=2.5in, height = 2.5in]{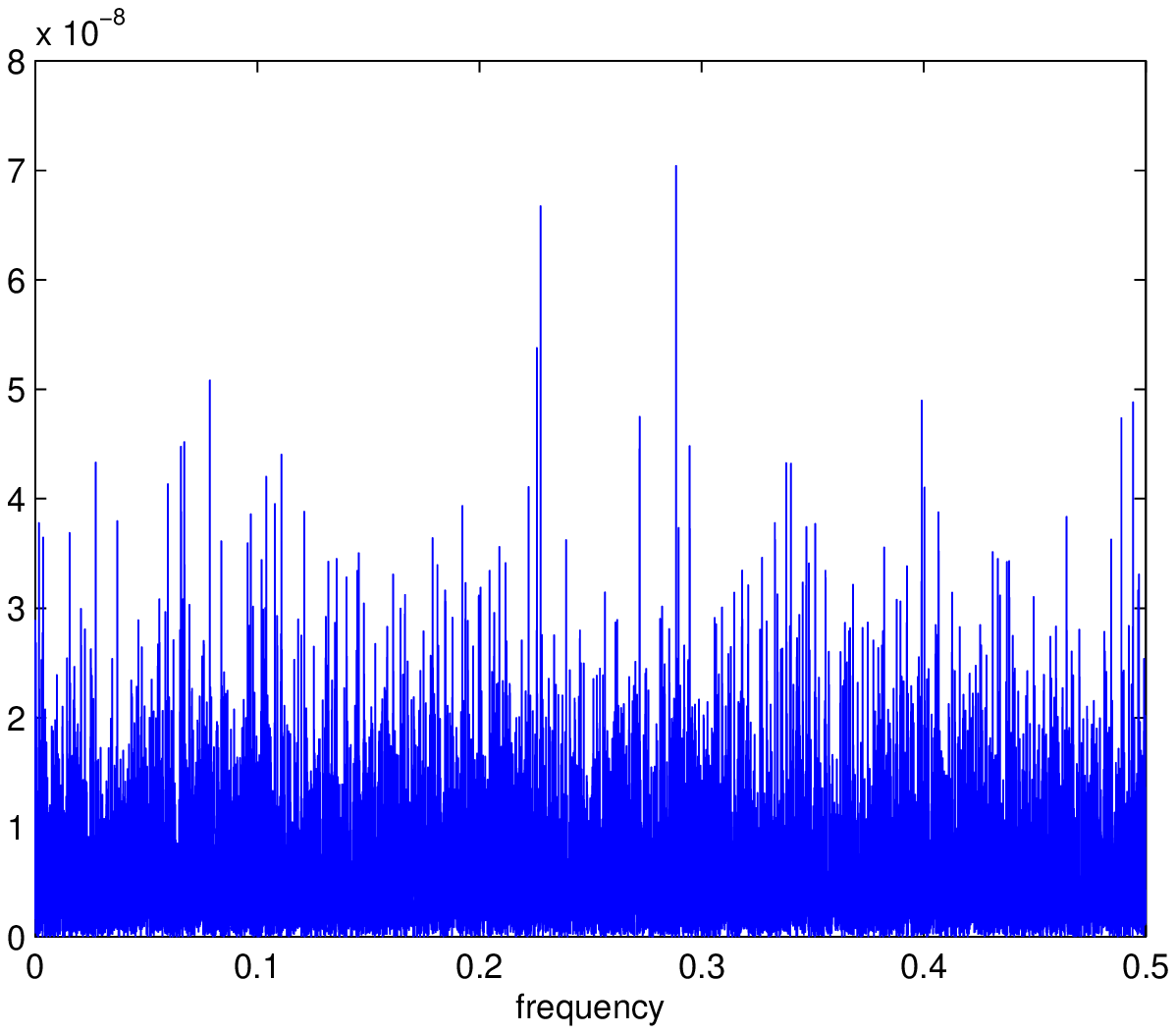} \\
\end{tabular}}
\begin{center}
\caption{A realisation of $\widehat{\mathcal S}^{(X)}_{kk}$ (top left),
 a realisation of $\widehat{\mathcal S}^{(\varepsilon)}_{kk}$ (top right) with the Whittle estimates superimposed
 and of two biased corrected estimators
 of ${\mathcal S}^{(X)}_{kk}$, using $\widetilde{L}_k\widehat{\mathcal S}^{(X)}_{kk}$ (bottom left) and
 $\widehat{L}_k\widehat{\mathcal S}^{(Y)}_{kk}$ (bottom right). Notice the different scales in the
four figures\label{fig:fig1}. Estimated spectra are plotted on a linear scale for ease of comparison to the effect of applying $\widehat{L}_k$.}
\end{center}
\end{figure}
\begin{figure}
\centerline{
\includegraphics[height=7.6cm]{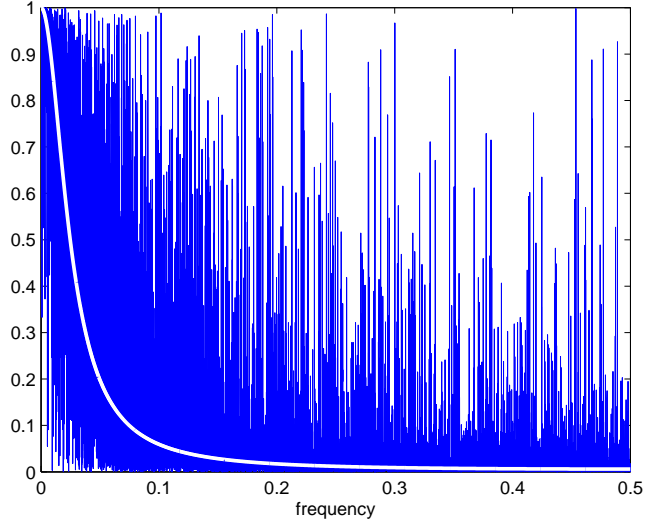}}
\begin{center}
\caption{The method of moments estimate $\widetilde{L}_k$ from a single realisation, with the Whittle estimate (white line)  of $L_k$ superimposed.\label{fig:fig2}}
\end{center}
\end{figure}
The multiscale ratio cannot be estimated using the method of moments in realistic scenarios, as we only observe the
aggregated process $Y_t$ and not the two processes $X_t$ and $\varepsilon_t$
separately. Figure
\ref{fig:fig1} displays the estimated multiscale ratio $\widetilde{L}_k$ applied to
$\widehat{\mathcal S}^{(Y)}_{kk}$ over one path realisation.
This plot suggests that the energy over the high frequencies has been shrunk and that $\widetilde{L}_k\widehat{\mathcal S}^{(Y)}_{kk}$ is a good approximation to $\widehat{\mathcal S}^{(X)}_{kk}$. It therefore seems not unreasonable that the summation of this function across frequencies
should make a good approximation to the
integrated volatility.

The parameters ($\widehat\sigma^2_X$
and $\widehat\sigma^2_\varepsilon$) are
found separately for each path using the MATLAB function \textsf{fmincon}
on \eqref{e:multis}. Figure \ref{fig:fig1} shows $\widehat\sigma^2_X$
and $\widehat\sigma^2_\varepsilon \left|2\sin(\pi f_k \Delta t)\right|^2$ (in white) plotted over the periodograms $\widehat{\mathcal S}^{(X)}_{kk}$
and $\widehat{\mathcal S}^{(\varepsilon)}_{kk}$ for one simulated path.
The approximated values of $\overline{\sigma}^2_X$
and $\sigma^2_\varepsilon$ are quite similar to the averaged periodograms
of Figure \ref{fig:fig0}; in fact the accuracy of the new estimator depends on how consistently these parameters are estimated in the presence of limited information
from the sampled process $Y_t$. Figure \ref{fig:fig2} shows
the corresponding estimated multiscale ratio $\widehat{L}_k$ (in white) from this simulated path, as defined in \eqref{e:dis}. The function decays, as expected, so that it will remove the high-frequency
microstructure noise in the spectrum of $Y_t$; the ratio is also a good approximation of $\widetilde{L}_k$. Figure \ref{fig:fig1} shows $\widehat{L}_k\widehat{\mathcal S}^{(Y)}_{kk}$, which
is again similar to $\widehat{\mathcal S}^{(X)}_{kk}$. It would appear that the new estimator has successfully removed the microstructure effect from each frequency.

It is worth noting that the ratios $L_k$ and $\widehat{L}_k$
quantify the effect of the multiscale structure of the process. If $\sigma_\varepsilon^2$ is zero (ie. there is no microstructure noise), then no correction will be made to the spectral density function (the ratio will equal 1 at all frequencies).
So in the case of zero microstructure noise, the estimate would recover $\widehat{\mathcal S}^{(X)}_{kk}$ and from \eqref{e:naive} the estimate of the integrated
volatility would simply be the realized integrated volatility of the observable process.

We investigate the performance of the multiscale estimator using Monte Carlo simulations. In this study 50,000 simulated paths are generated. Table \ref{tbl:1} displays the results of our simulation, where biases, variances and errors are calculated using a Riemann sum approximation of the integral
\begin{equation}
\frac{T}{N} \sum_{i=1}^N \sigma^2_i=\int_0^T \sigma^2_t\,dt.
\end{equation}
The two estimators
$\widehat{\langle X, X\rangle}_T^{(u)}$ and $\widehat{\langle X, X\rangle}_T^{(m_2)}$
(see \eqref{e:unbias} and \eqref{e:m2} respectively) are both included for comparison, even though these
require use of the unobservable $X_t$ process. The performance of the first-best
estimator in \cite{Zhang+2005} (denoted by $\widehat{\langle X, X\rangle}_T^{(s_1)}$) is also included as a well-performing and
tested estimator using only the $Y_t$ process, as is the naive estimator of
the realized volatility on $Y_t$ at the highest frequency, $\widehat{\langle X, X\rangle}_T^{(b)}$, given in \eqref{e:biased} (the fifth-best estimator in \cite{Zhang+2005}). We also include the performance of $\widehat{\langle X, X\rangle}_T^{(w)}$, defined in \eqref{whittle}.

Table \ref{tbl:1} shows that the new estimator, $\widehat{\langle X, X\rangle}_T^{(m_1)}$, is competitive with the first-best
approach in \cite{Zhang+2005}, $\widehat{\langle X, X\rangle}_T^{(s_1)}$, as an estimator of the integrated volatility
for the Heston model with the stated parameters. For this simulation the new method
performed marginally better. The similar performance of the two estimators
is quite remarkable, given their different approach; both estimators involve a bias-correction,
\cite{Zhang+2005} perform this globally by weighting different sampling
frequencies, whilst we correct locally at each frequency. The realized integrated
volatility of $Y_t$ at the highest frequency, $\widehat{\langle X, X\rangle}_T^{(b)}$, produces disastrous results,
as expected.

\begin{table}
\begin{center}
\begin{footnotesize}
\begin{tabular}{|l|c|c|c|}\hline
& Sample bias & Sample variance & Sample RMSE
\\ \hline
$\widehat{\langle X, X\rangle}_T^{(b)}$ & $1.17\times10^{-2}$ & $1.80\times10^{-8}$
& $1.17\times10^{-2}$ \\
$\widehat{\langle X, X\rangle}_T^{(s_1)}$ & $6.44\times10^{-7}$ & $2.76\times10^{-10}$ & $1.66\times10^{-5}$ \\
$\widehat{\langle X, X\rangle}_T^{(m_1)}$ & $2.90\times10^{-7}$ & $2.59\times10^{-10}$ & $1.61\times10^{-5}$ \\
$\widehat{\langle X, X\rangle}_T^{(w)}$ & $2.63\times10^{-7}$ & $2.59\times10^{-10}$ & $1.61\times10^{-5}$ \\
$\widehat{\langle X, X\rangle}_T^{(m_2)}$ & $1.39\times10^{-8}$ & $2.07\times10^{-10}$ & $1.44\times10^{-5}$ \\
$\widehat{\langle X, X\rangle}_T^{(u)}$ & $1.20\times10^{-8}$ & $2.06\times10^{-10}$ & $1.44\times10^{-5}$ \\\hline
\end{tabular}
\end{footnotesize} 
\caption{Simulation study comparing the new estimator with the
best estimator of \cite{Zhang+2005}. {\label{tbl:1}}}
\end{center}
\end{table}

We also note that $\widehat{\langle X, X\rangle}_T^{(m_1)}$ performs more or less identically to $\widehat{\langle X, X\rangle}_T^{(w)}.$ These two estimators can almost be used interchangeably due to the invariance property of a maximum likelihood estimator. This observation is born out by our simulation studies, and we henceforth only report results for $\widehat{\langle X, X\rangle}_T^{(m_1)}$. Note that the variance of  $\widehat{\langle X, X\rangle}_T^{(w)}$ can be found from \eqref{varwhittle2}. To compare theory with simulations we note that the average estimated standard deviation is $  1.6093\times 10^{-5}$ whilst the expression for the variance to leading order gives an expression for the standard deviation of
$\left[\var\left\{\widehat{\langle X, X\rangle}_T^{(w)}\right\}\right]^{1/2}=1.0246\times 10^{-5}$, using the parameter values of $\overline{\sigma}^2_X \approx 6.8 \times 10^{-9}$ and $\sigma^2_\varepsilon \approx 2.5\times 10^{-7}$.

A histogram of the observed bias of the new estimator is plotted in Figure \ref{fig:fig3} along with a histogram of the observed bias of the first-best estimator in \cite{Zhang+2005}. The observed
bias of our estimator follows a Gaussian distribution centred at zero,
suggesting that this estimator is unbiased, as out results claim to be true. Comparing our estimator to
the first-best estimator, it can be seen that the new estimator
has similar magnitudes of error also (hence the similar Root Mean Square Error (RMSE)).
\begin{figure}
\centerline{
\begin{tabular}{c@{\hspace{3pc}}c}
\includegraphics[width=2.5in, height = 2.5in]{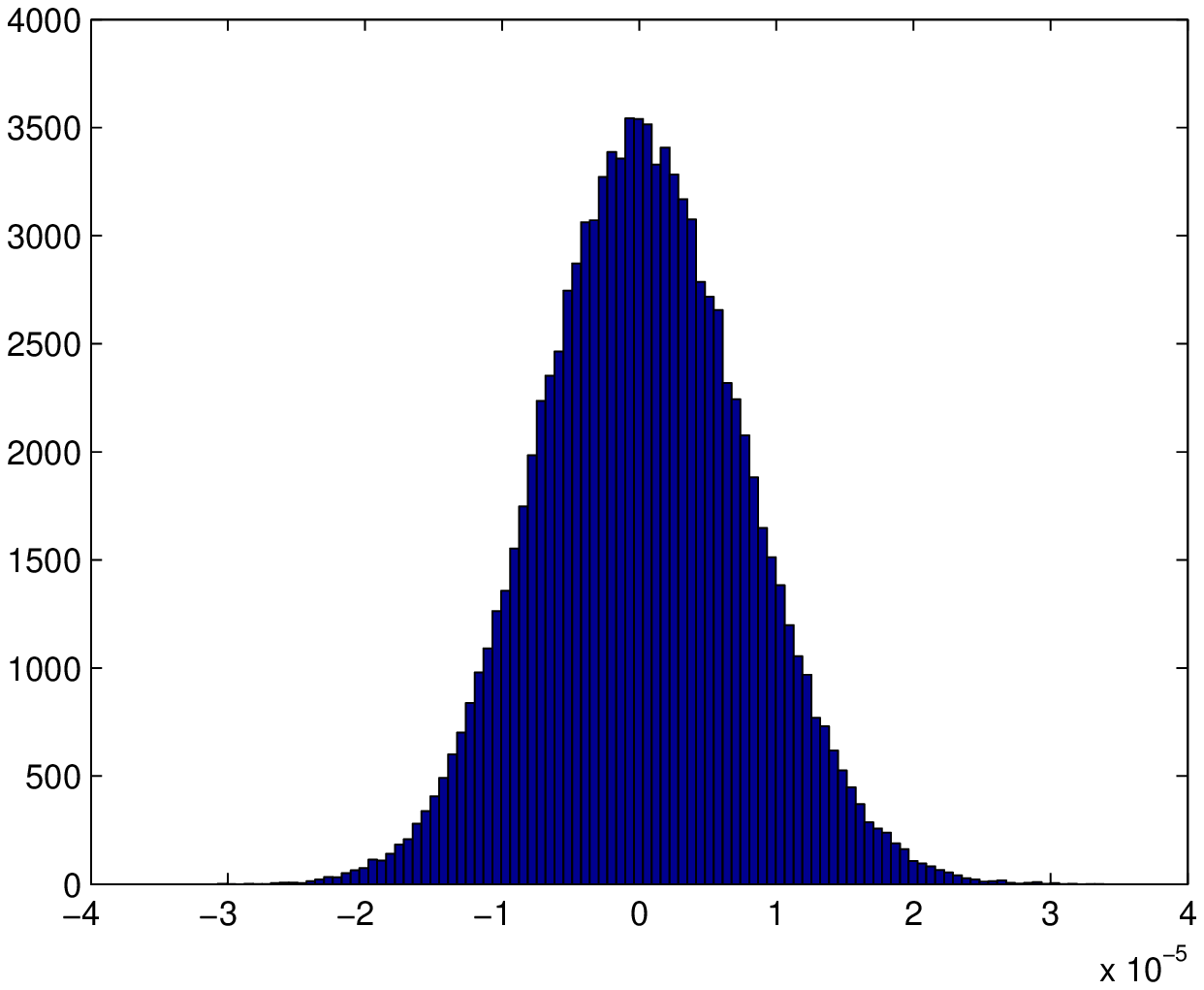} &
\includegraphics[width=2.5in, height = 2.5in]{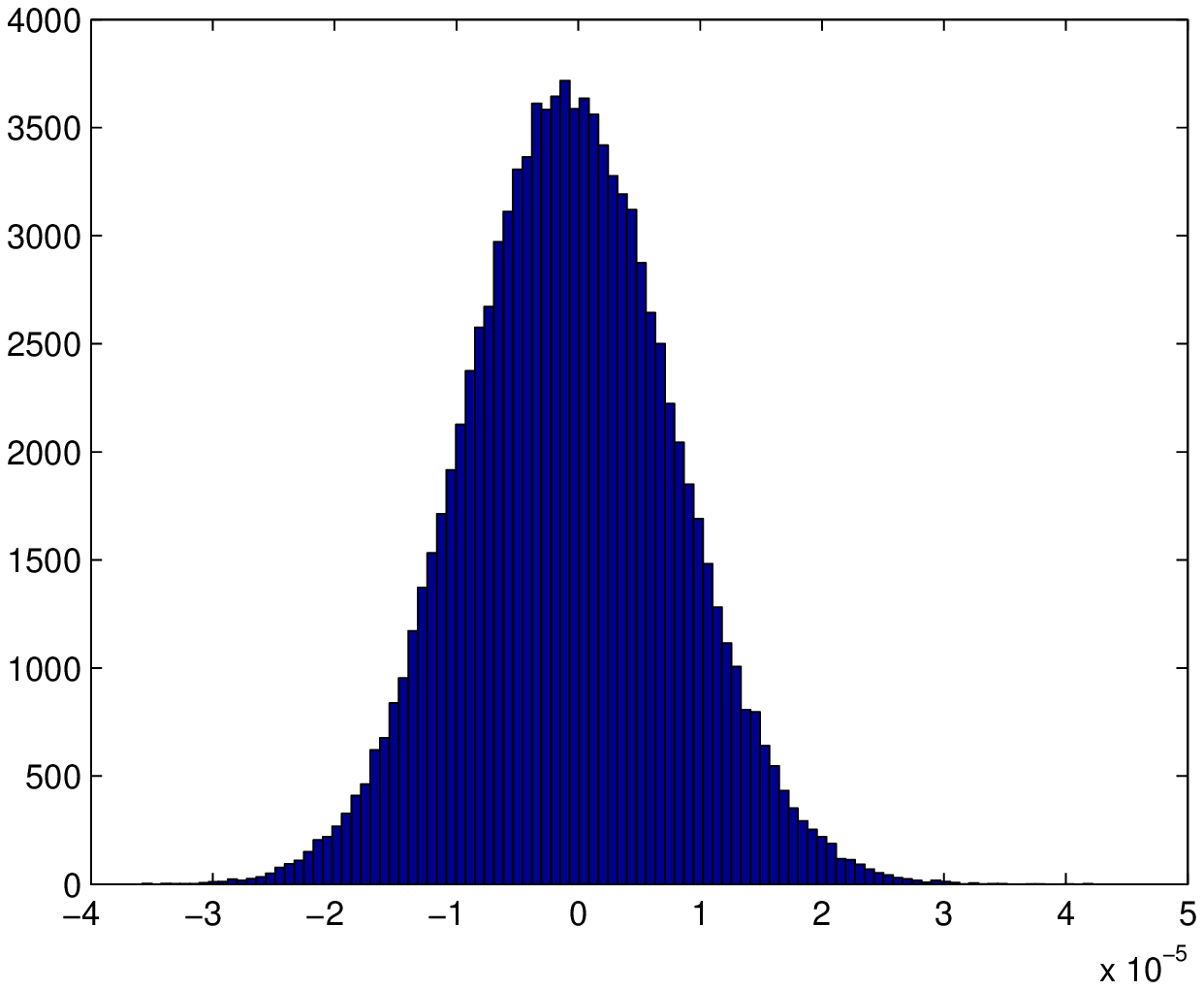} \\
\end{tabular}}
\begin{center}
\caption{The histograms of the observed bias of the proposed estimator (a), and the first-best estimator (b), over 100,000 sample paths.\label{fig:fig3}}
  \end{center}
\end{figure}

The new estimator requires calculation of $\widehat\sigma^2_X$ and $\widehat\sigma^2_\varepsilon$ which will vary over each process due to
the limited information given from the $Y_t$ process. The stability of this
estimation is of great importance if the estimator is to perform well.
Figure \ref{fig:fig4} shows the distribution of the parameters $\widehat\sigma^2_X$ and $\widehat\sigma^2_\varepsilon$ over the simulated paths. The parameter
estimation is quite consistent, with all values estimated within a narrow
range. Figure \ref{fig:fig0} suggests that these estimates are roughly unbiased; as $\overline{\sigma}^2_X \approx 6.8 \times 10^{-9}$ and $\sigma^2_\varepsilon \approx 2.5 \times 10^{-7}$ (as $\sigma^2_\varepsilon\left|2 \sin(\pi f_k )\right|^2 \approx 1 \times 10^{-6}$,
at $f_k=0.5$).
\begin{figure}
\centerline{
\begin{tabular}{c@{\hspace{3pc}}c}
\includegraphics[width=2.5in, height = 2.5in]{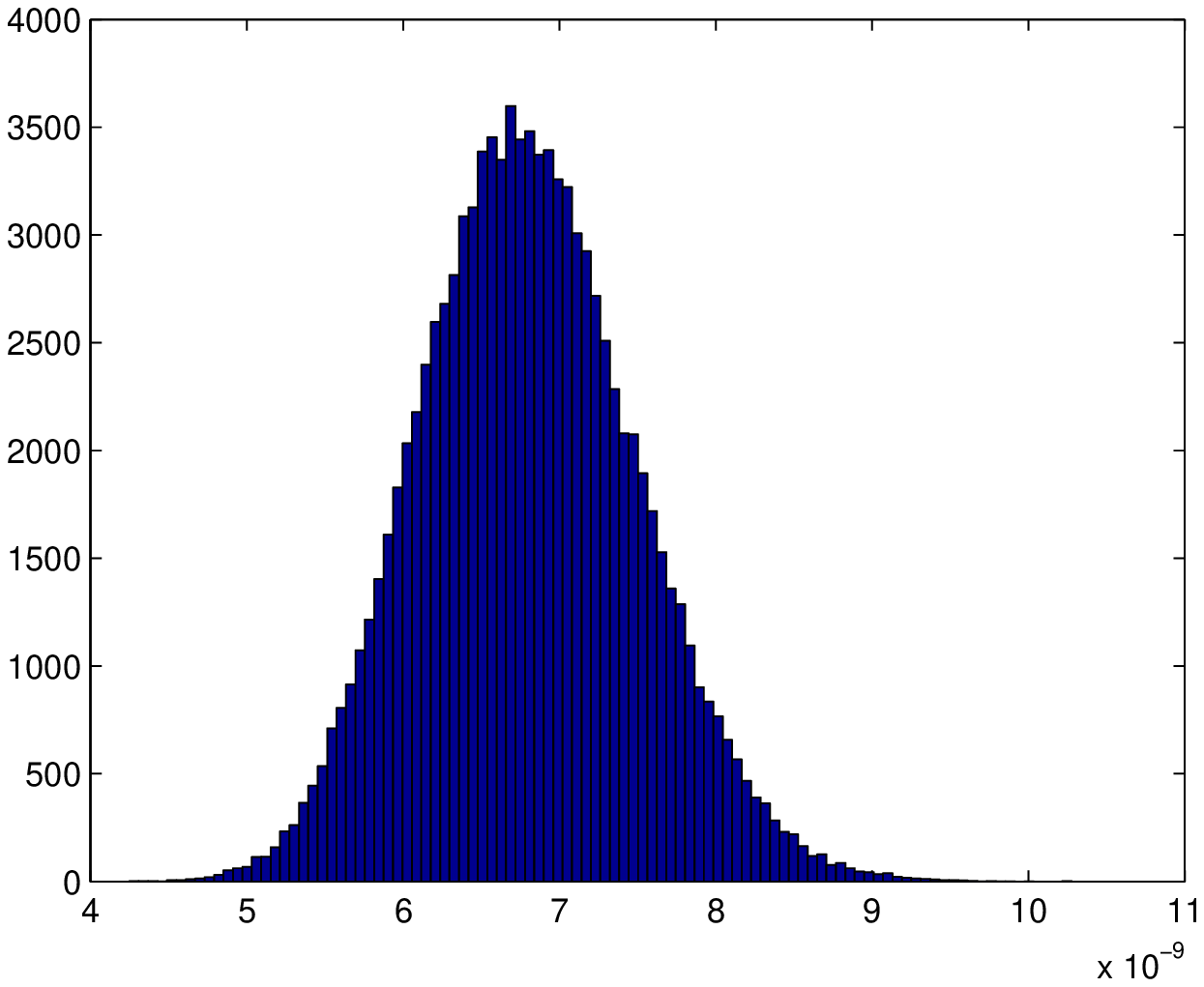} &
\includegraphics[width=2.5in, height = 2.5in]{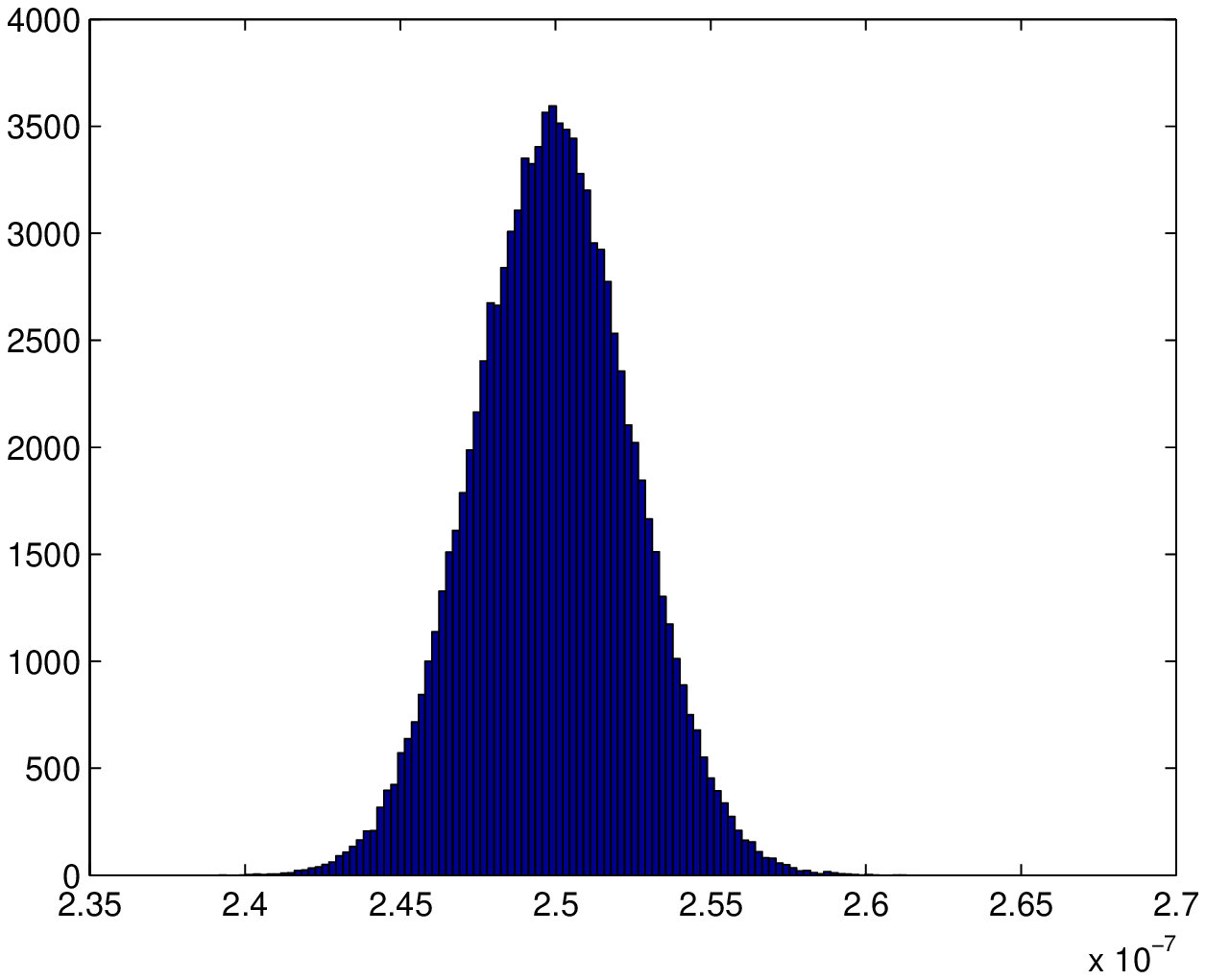} \\
\end{tabular}}
\begin{center}
\caption{The histograms of the estimated $\overline{\sigma}_X^2$ (a) and $\sigma^2_{\varepsilon}$
(b).\label{fig:fig4}}
  \end{center}
\end{figure}

\subsection{Brownian Process and Ornstein Uhlenbeck Process}
We repeated our simulations for a Brownian Process given by:
\begin{equation}
dX_t=\sqrt{2\sigma^2_t} dB_t,
\end{equation}
where $\sigma_t^2=0.01$. We otherwise keep the same simulation setup as before
with 50,000 simulated paths of length 23,400. The results are displayed in
Table \ref{tbl:1a}. The new estimator, $\widehat{\langle X, X\rangle}_T^{(m_1)}$,
again delivers a marked improvement
on the naive estimator, $\widehat{\langle X, X\rangle}_T^{(b)}$, and performs marginally better than the first-best
estimator in \cite{Zhang+2005}, $\widehat{\langle X, X\rangle}_T^{(s_1)}$.

\begin{table}
\begin{center}
\begin{footnotesize}
\begin{tabular}{|l|c|c|c|}\hline
& Sample bias & Sample variance & Sample RMSE
\\ \hline
$\widehat{\langle X, X\rangle}_T^{(b)}$ & $1.17\times10^{-2}$ & $1.77\times10^{-8}$
& $1.17\times10^{-2}$ \\
$\widehat{\langle X, X\rangle}_T^{(s_1)}$ & $6.52\times10^{-7}$ & $2.68\times10^{-11}$ & $5.22\times10^{-6}$ \\
$\widehat{\langle X, X\rangle}_T^{(m_1)}$ & $3.02\times10^{-7}$ & $1.98\times10^{-11}$ & $4.46\times10^{-6}$ \\
$\widehat{\langle X, X\rangle}_T^{(m_2)}$ & $1.96\times10^{-9}$ & $6.93\times10^{-13}$ & $8.32\times10^{-7}$ \\
$\widehat{\langle X, X\rangle}_T^{(u)}$ & $3.79\times10^{-9}$ & $5.44\times10^{-13}$ & $7.38\times10^{-7}$ \\\hline
\end{tabular}
\end{footnotesize} 
\caption{Simulation study for the Brownian process. {\label{tbl:1a}}}
\end{center}
\end{table}

We also performed a Monte Carlo simulation for the Ornstein Uhlenbeck process given by:
\begin{equation}
dX_t=X_tdt+\sqrt{2\sigma_t} dB_t,
\end{equation}
where also $\sigma_t^2=0.01$. Again we retain the same simulation setup and the
results are displayed in Table \ref{tbl:1b}. The results are almost identical
to that of the Brownian process, with the new estimator again outperforming
other time-domain estimators.

\begin{table}
\begin{center}
\begin{footnotesize}
\begin{tabular}{|l|c|c|c|}\hline
& Sample bias & Sample variance & Sample RMSE
\\ \hline
$\widehat{\langle X, X\rangle}_T^{(b)}$ & $1.17\times10^{-2}$ & $1.78\times10^{-8}$
& $1.17\times10^{-2}$ \\
$\widehat{\langle X, X\rangle}_T^{(s_1)}$ & $6.69\times10^{-7}$ & $2.66\times10^{-11}$ & $5.20\times10^{-6}$ \\
$\widehat{\langle X, X\rangle}_T^{(m_1)}$ & $2.95\times10^{-7}$ & $1.97\times10^{-11}$ & $4.44\times10^{-6}$ \\
$\widehat{\langle X, X\rangle}_T^{(m_2)}$ & $5.09\times10^{-9}$ & $6.76\times10^{-13}$ & $8.22\times10^{-7}$ \\
$\widehat{\langle X, X\rangle}_T^{(u)}$ & $6.29\times10^{-9}$ & $5.33\times10^{-13}$ & $7.30\times10^{-7}$ \\\hline
\end{tabular}
\end{footnotesize} 
\caption{Simulation study for the Ornstein Uhlenbeck process. {\label{tbl:1b}}}
\end{center}
\end{table}

\subsection{Comparing estimators over shorter sample lengths}
This section compares estimators for a shorter sample length which will reduce the
benefit of subsampling due to the variance issues of small-length data but will also affect the variance of the multiscale ratio ({\em cf} Theorem \ref{thm:thmA}).

The simulation setup is exactly the same as before (using the Heston model with the same parameters) except that $T$, the simulation length, is reduced by a factor of 10 to 0.1 days or $2340s$. Before
the results of the simulation are reported, it is of interest to see whether
the frequency domain methods developed still model each process accurately.
Figure \ref{fig:fig5} shows the calculated $\widehat\sigma^2_X$
and $\widehat\sigma^2_\varepsilon \left|\sin(\pi \Delta t f_k) \right|^2$ (in white) together with the periodograms $\widehat{\mathcal S}^{(X)}_{kk}$
and $\widehat{\mathcal S}^{(\varepsilon)}_{kk}$ for one simulated path. The estimator still approximates the energy structure of the processes accurately. Figure \ref{fig:fig5} also shows the corresponding estimate of the multiscale ratio $\widehat{L}_k$ (in white) from this simulated path (together with $\widetilde{L}_k$) and the corresponding plot
of $\widehat{L}_k$$\widehat{\mathcal S}^{(Y)}_{kk}$. The new estimator has removed the microstructure noise effect and has
formed a good approximation of $\widehat{\mathcal S}^{(X)}_{kk}$. The approximation
of the periodograms is still accurate despite the shortening of available
data.

\begin{figure}
\centerline{
\begin{tabular}{c@{\hspace{1pc}}c}
\includegraphics[width=2.5in, height = 2.5in]{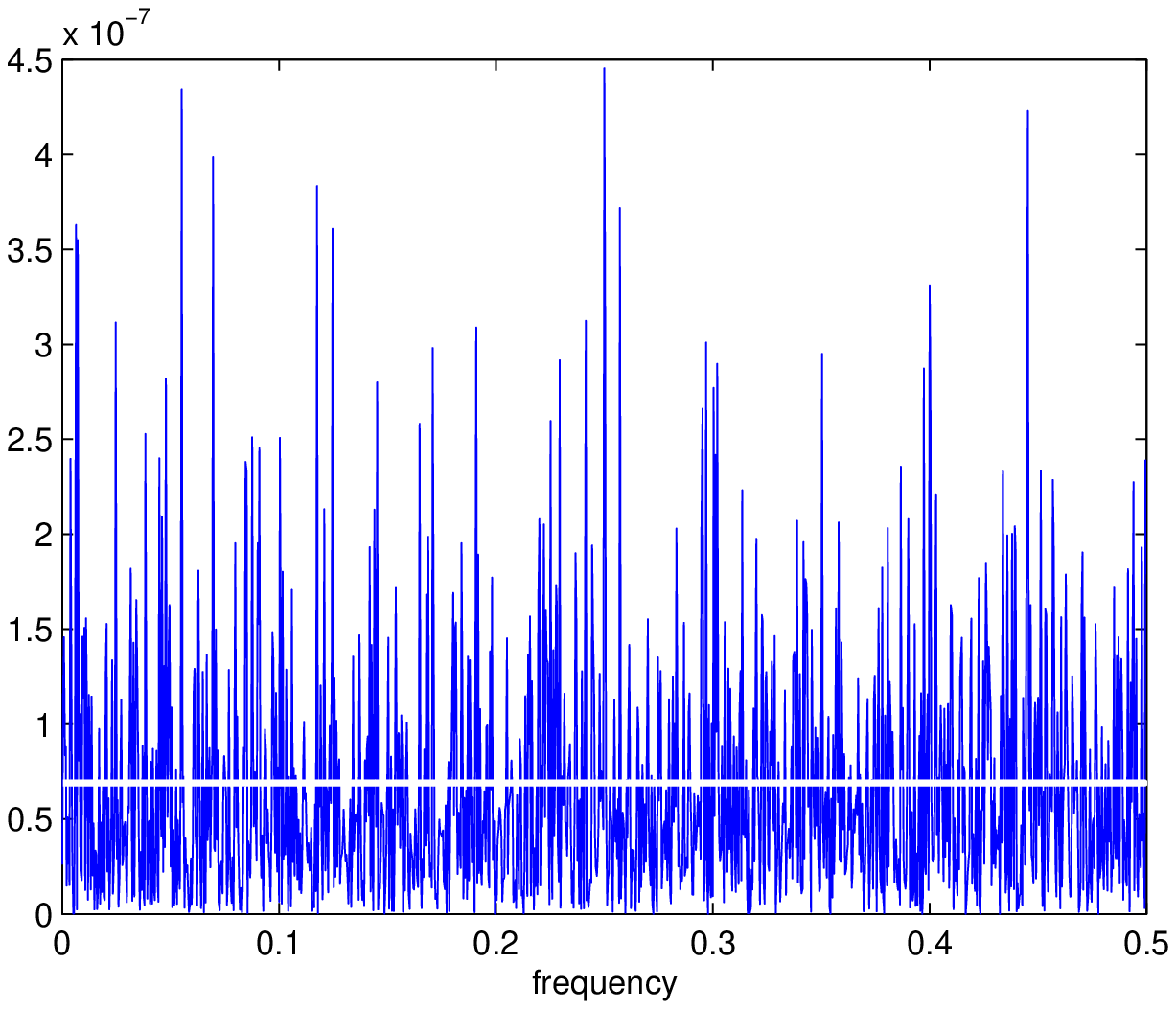} &
\includegraphics[width=2.5in, height = 2.5in]{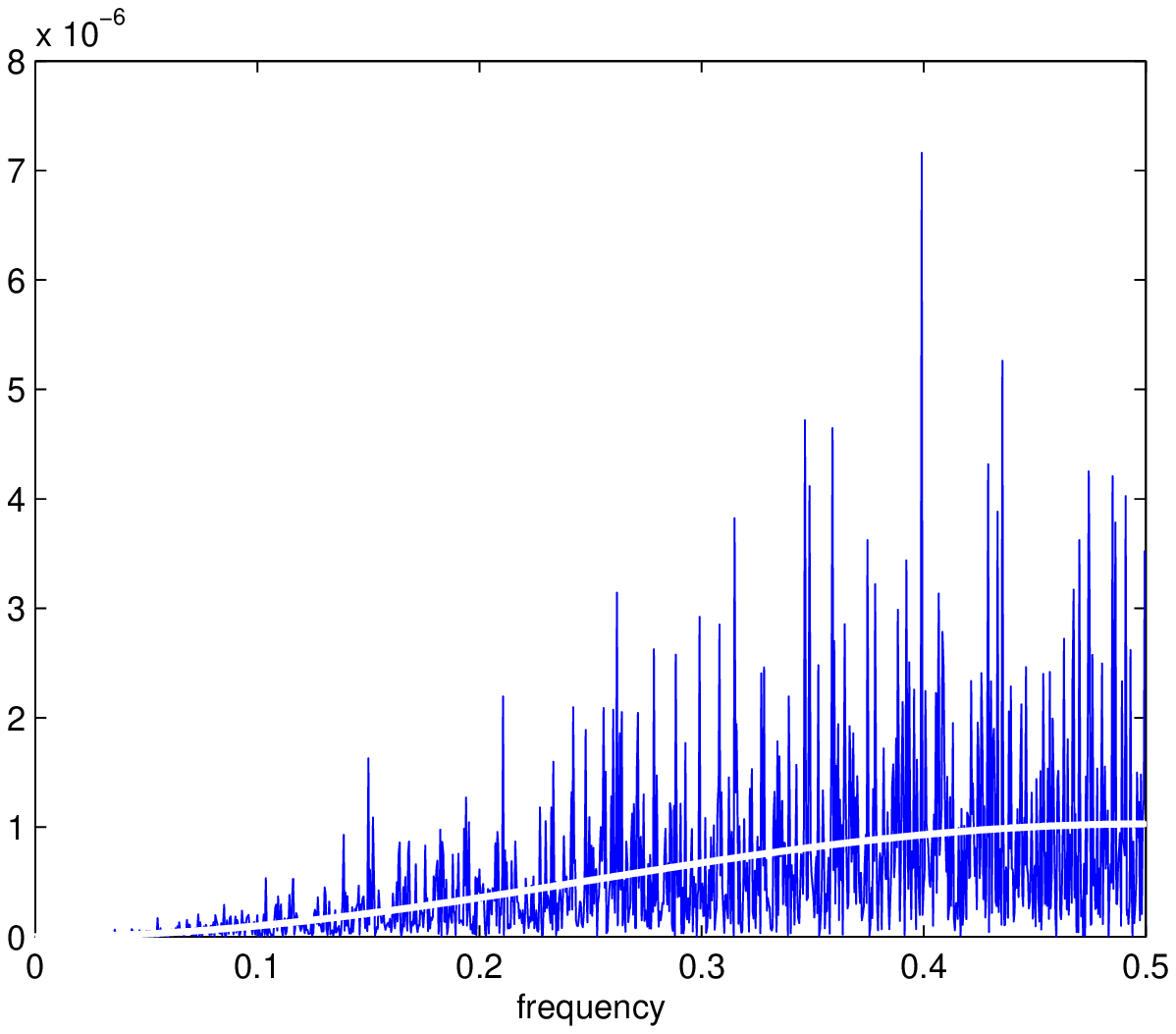} \\
\includegraphics[width=2.5in, height = 2.5in]{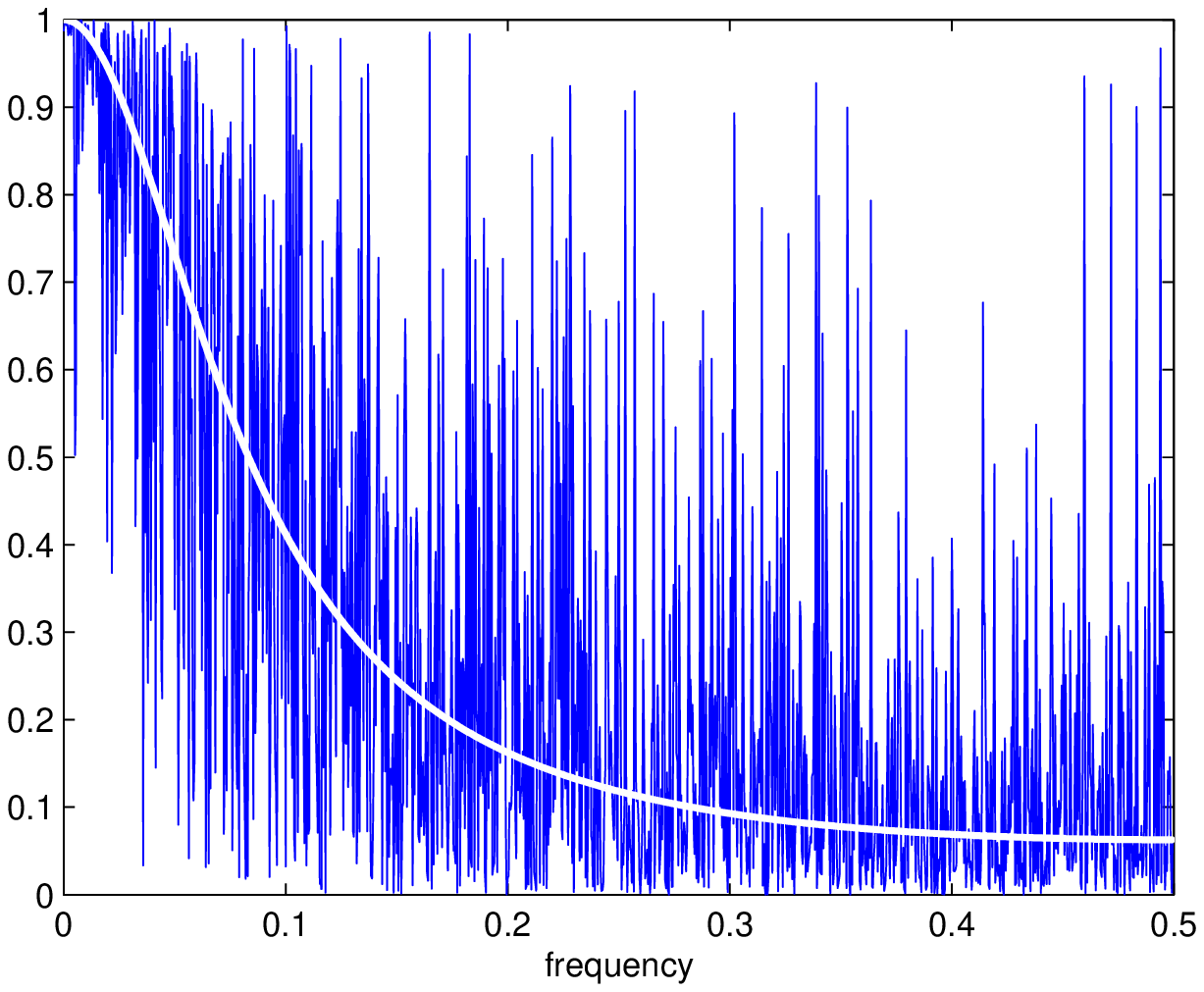} &
\includegraphics[width=2.5in, height = 2.5in]{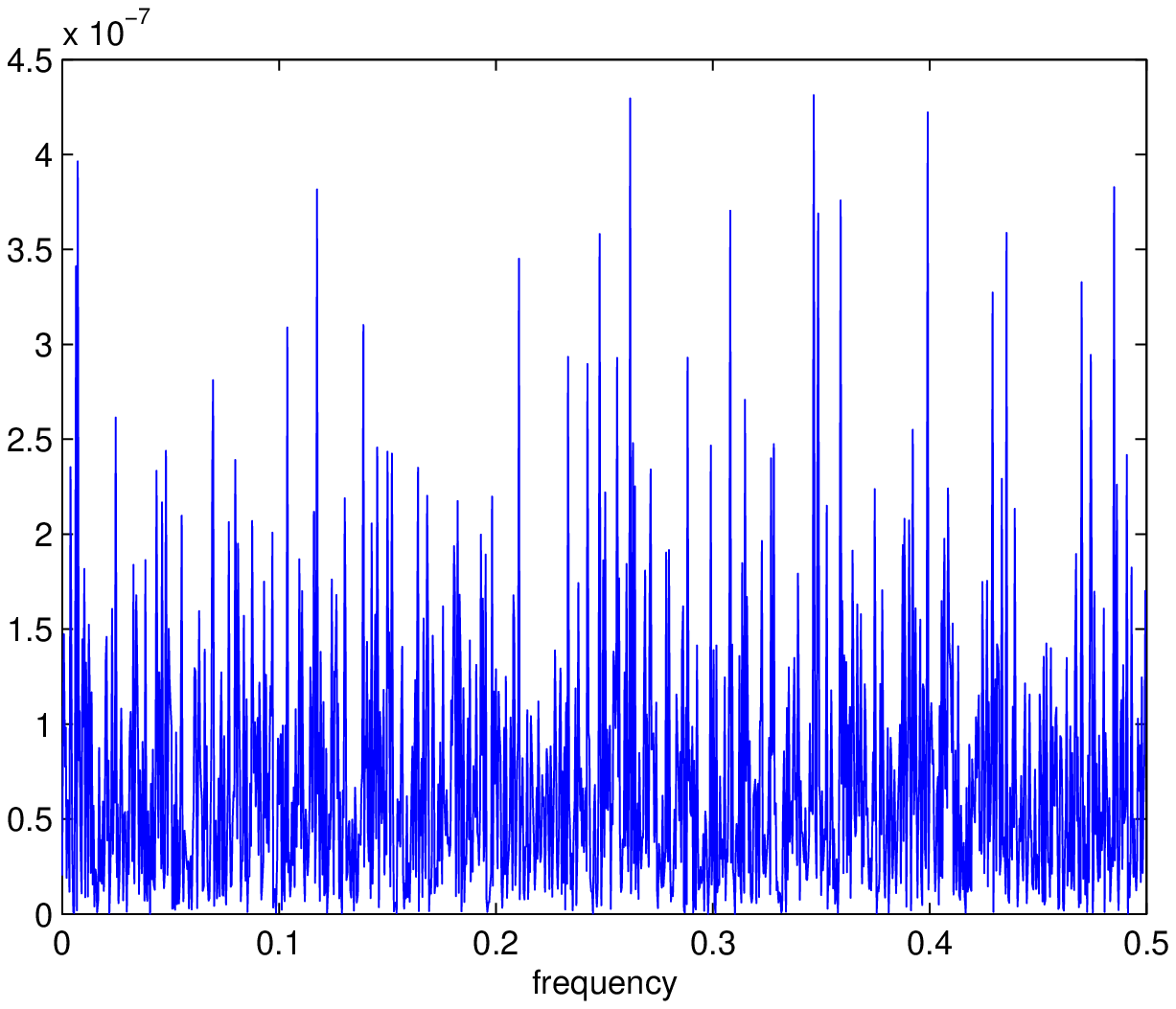} \\
\end{tabular}}
\begin{center}
\caption{A realisation of $\widehat{\mathcal S}^{(X)}_{kk}$ (top left),
 a realisation of $\widehat{\mathcal S}^{(\varepsilon)}_{kk}$ (top right) with the Whittle estimates superimposed, the estimate of $L_k$ (bottom left) with the Whittle estimate of $L_k$ superimposed
 and the biased corrected estimator
 of ${\mathcal S}^{(X)}_{kk}$ using $\widehat{L}_k\widehat{\mathcal S}^{(Y)}_{kk}$ (bottom right). Notice the different scales in the
four figures.\label{fig:fig5}}
\end{center}
\end{figure}

Table \ref{tbl:2} displays the accuracy of the estimators over the 50,000 simulated paths. The first-best estimator in \cite{Zhang+2005}, $\widehat{\langle X, X\rangle}_T^{(s_1)}$, and the new estimator, $\widehat{\langle X, X\rangle}_T^{(m_1)}$,
are once again comparable in performance and both estimates are close to the best attainable RMSE given by, $\widehat{\langle X, X\rangle}_T^{(u)}$, the realized integrated volatility on $X_t$.

\begin{table}
\begin{center}
\begin{footnotesize}
\begin{tabular}{|l|c|c|c|}\hline
& Sample bias & Sample variance & Sample RMSE
\\ \hline
$\widehat{\langle X, X\rangle}_T^{(b)}$ & $1.17\times10^{-3}$ & $2.29\times10^{-9}$
& $1.17\times10^{-3}$ \\
$\widehat{\langle X, X\rangle}_T^{(s_1)}$ & $1.00\times10^{-6}$ & $4.51\times10^{-10}$ & $2.13\times10^{-5}$ \\
$\widehat{\langle X, X\rangle}_T^{(m_1)}$ & $1.84\times10^{-7}$ & $4.23\times10^{-10}$ & $2.06\times10^{-5}$ \\
$\widehat{\langle X, X\rangle}_T^{(m_2)}$ & $4.80\times10^{-8}$ & $2.42\times10^{-10}$ & $1.55\times10^{-5}$ \\
$\widehat{\langle X, X\rangle}_T^{(u)}$ & $5.27\times10^{-8}$ & $2.28\times10^{-10}$ & $1.51\times10^{-5}$ \\\hline
\end{tabular}
\end{footnotesize} 
\caption{Simulation study for shorter sampler length. {\label{tbl:2}}}
\end{center}
\end{table}

\subsection{Comparing estimators with a low-noise process}
This section compares estimators for smaller levels of microstructure noise. Reducing the microstructure
noise will
reduce the need to subsample. The first-best estimator in \cite{Zhang+2005},
$\widehat{\langle X, X\rangle}_T^{(s_1)}$, will have a higher sampling
frequency and the new estimator
will reduce its estimate of $\widehat{\sigma}^2_{\varepsilon}$
accordingly. For very small levels of noise, however,
the first-best estimator will become zero, as the optimal number of samples
becomes $n$ (the highest available). This possibility is now examined, using the Heston model as
before, with all parameters unchanged except the noise is reduced by a factor
of 10, ie. $\sigma^2_{\varepsilon} = 0.00005^2$. Note that the path length is
kept at its original length of $T=1$ day.

Figure \ref{fig:fig6} shows the estimates of $\widehat{\sigma}_X^2$
and $\widehat\sigma^2_\varepsilon\left|2\sin(\pi \Delta t f_k) \right|^2$ (in white) along with the periodograms $\widehat{\mathcal S}^{(Y)}_{kk}$
and $\widehat{\mathcal S}^{(\varepsilon)}_{kk}$ for one simulated path along with the corresponding estimate of the multiscale ratio $\widehat{L}_k$ (in white) (plotted over the approximated
$\widetilde{L}_k$) and the corresponding plot
of $\widehat{L}_k$$\widehat{\mathcal S}^{(Y)}_{kk}$. The estimation method
works well again; notice how the magnitude of the microstructure
noise has been greatly reduced (the scale is now of order $10^{-8}$ rather
than $10^{-6}$) causing the multiscale ratio $L_k$ to be more tempered across
the high frequencies than it was before, due to the
smaller microstructure noise.
Nonetheless, the new estimator has still detected the smaller levels of noise in
the data. 

\begin{figure}
\centerline{
\begin{tabular}{c@{\hspace{1pc}}c}
\includegraphics[width=2.5in, height = 2.5in]{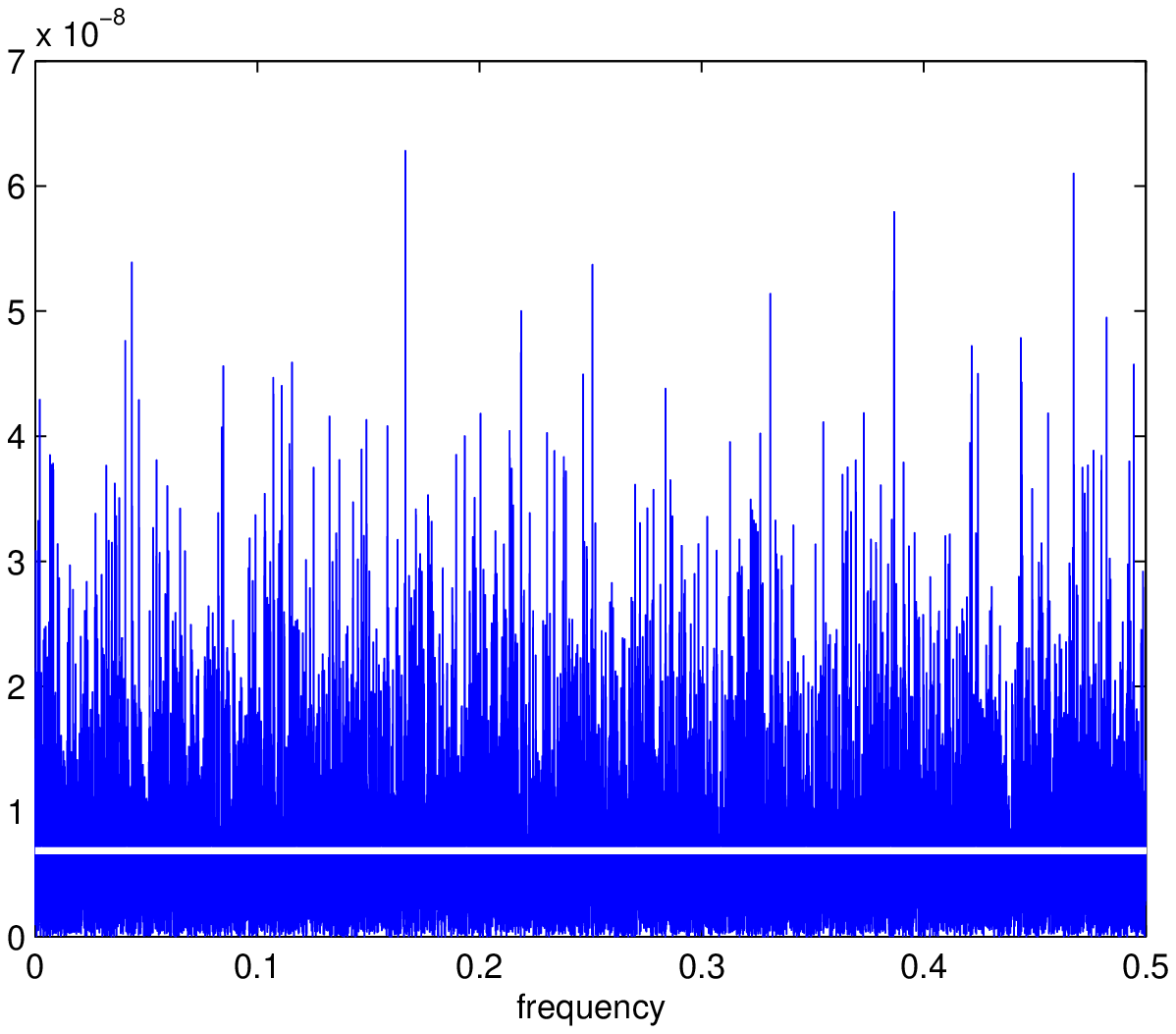} &
\includegraphics[width=2.5in, height = 2.5in]{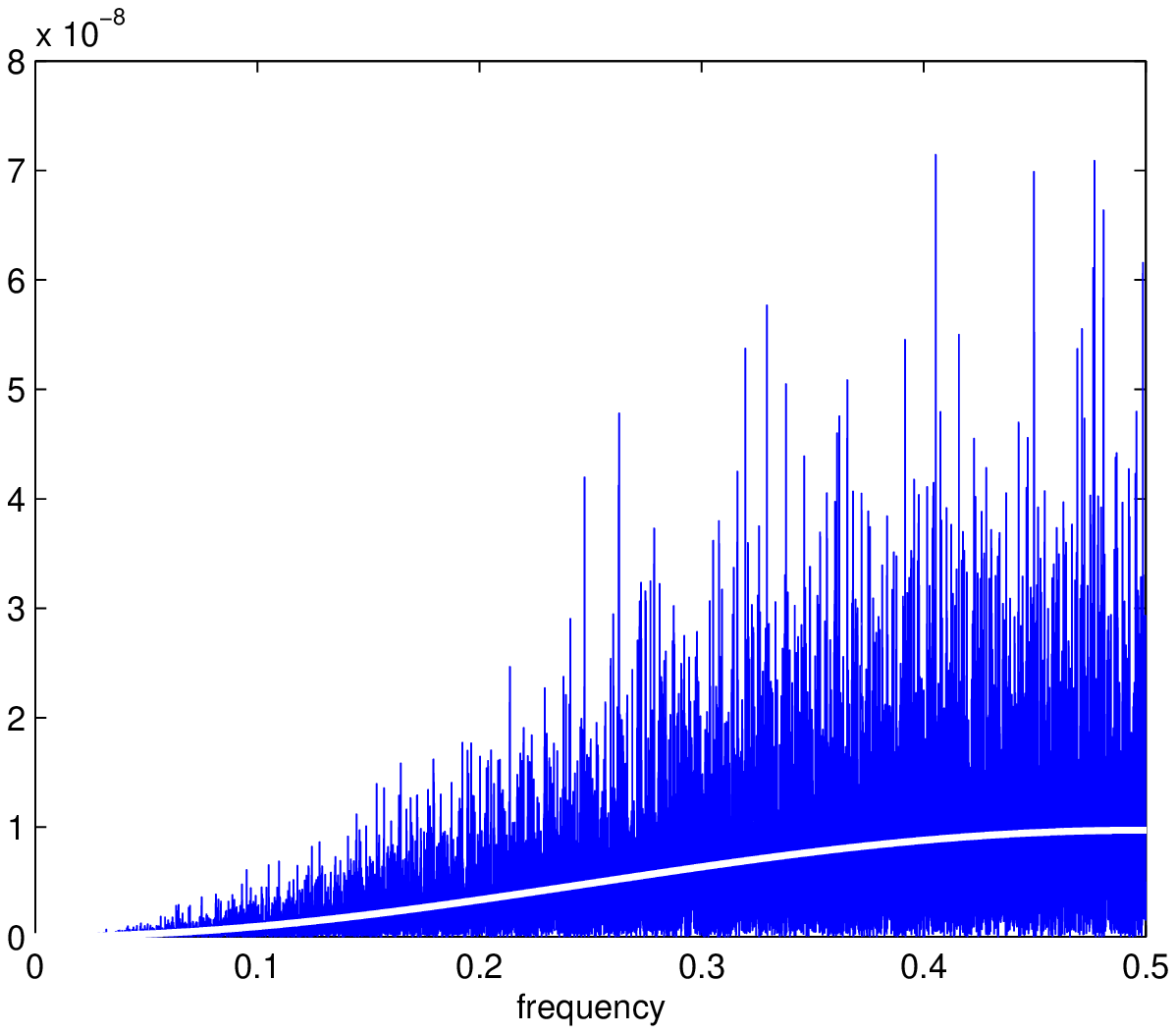} \\
\includegraphics[width=2.5in, height = 2.5in]{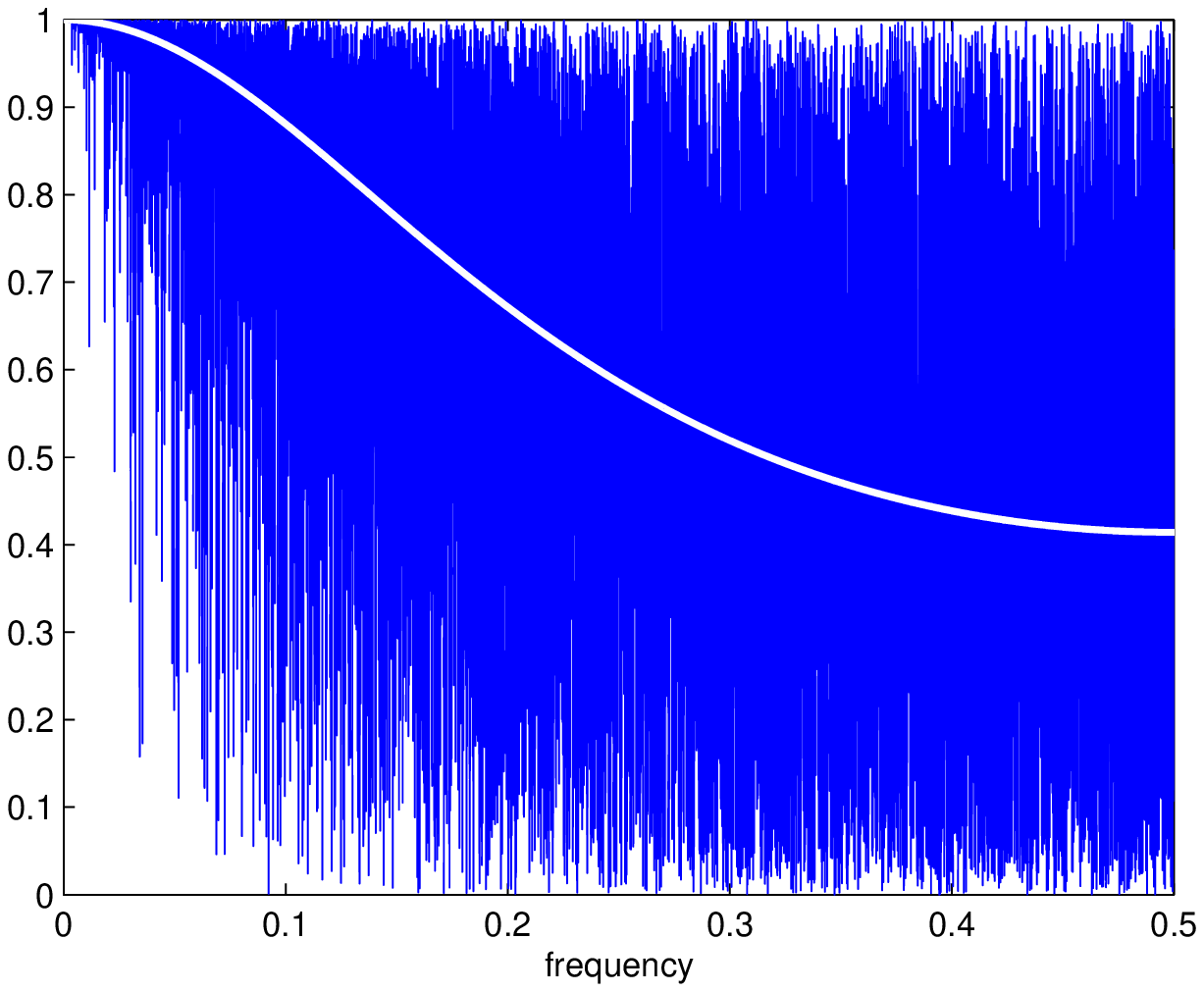} &
\includegraphics[width=2.5in, height = 2.5in]{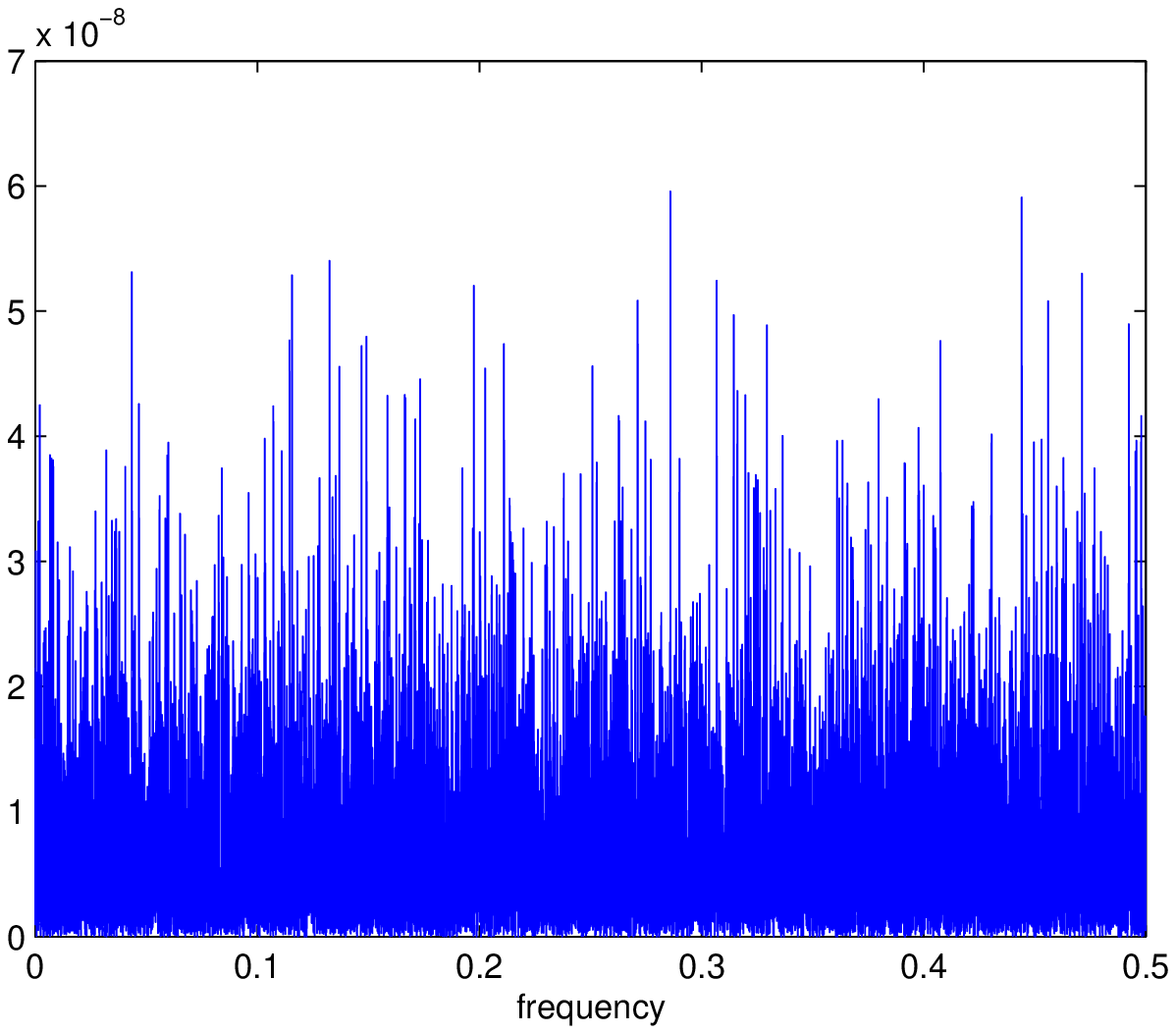} \\
\end{tabular}}
\begin{center}
\caption{A realisation of $\widehat{\mathcal S}^{(X)}_{kk}$ (top left),
 a realisation of $\widehat{\mathcal S}^{(\varepsilon)}_{kk}$ (top right) with the Whittle estimates superimposed, the estimate of $L_k$ (bottom left) with the Whittle estimate of $L_k$ superimposed
 and the biased corrected estimator
 of ${\mathcal S}^{(X)}_{kk}$ using $\widehat{L}_k\widehat{\mathcal S}^{(Y)}_{kk}$ (bottom right). Notice the different scales in the
four figures.\label{fig:fig6}}
\end{center}
\end{figure}

Table \ref{tbl:3} reports on the results of 50,000 simulations performed
as before.
The first-best estimator of \cite{Zhang+2005}, $\widehat{\langle X, X\rangle}_T^{(s_1)}$, categorically failed for
this model. This is due to the fact that the optimal number of samples was
always equal to $n$, the total number of samples available. Therefore,
the first-best estimator was always zero. The second-best estimator in \cite{Zhang+2005},
denoted
by $\widehat{\langle X, X\rangle}_T^{(s_2)}$, was reasonably effective. This is simply an estimator that averages estimates
calculated from sub-sampled paths at different starting points and is therefore
asymptotically biased. The new estimator, $\widehat{\langle X, X\rangle}_T^{(m_1)}$,
was remarkably robust, with RMSE very
close to the RMSE of estimators based on the $X_t$ process. The
difference in performance between estimators using $Y_t$ and estimators using $X_t$
is expected to become smaller with less microstructure noise and this can
be seen by the similar order RMSE errors between all estimators. Nevertheless,
the new estimator was much closer in performance to the realized integrated volatility
on $X_t$ than it was to any other estimator on $Y_t$, a result that demonstrates
the precision and robustness of this new estimator of integrated volatility.

\begin{table}
\begin{center}
\begin{footnotesize}
\begin{tabular}{|l|c|c|c|}\hline
& Sample bias & Sample variance & Sample RMSE
\\ \hline
$\widehat{\langle X, X\rangle}_T^{(b)}$ & $1.17\times10^{-4}$ & $2.11\times10^{-10}$
& $1.18\times10^{-4}$ \\
$\widehat{\langle X, X\rangle}_T^{(s_2)}$ & $3.53\times10^{-6}$ & $1.00\times10^{-9}$ & $3.19\times10^{-5}$ \\
$\widehat{\langle X, X\rangle}_T^{(m_1)}$ & $7.63\times10^{-9}$ & $2.12\times10^{-10}$ & $1.46\times10^{-5}$ \\
$\widehat{\langle X, X\rangle}_T^{(m_2)}$ & $7.91\times10^{-9}$ & $2.06\times10^{-10}$ & $1.44\times10^{-5}$ \\
$\widehat{\langle X, X\rangle}_T^{(u)}$ & $9.83\times10^{-9}$ & $2.05\times10^{-10}$ & $1.43\times10^{-5}$ \\\hline
\end{tabular}
\end{footnotesize} 
\caption{Simulation study for lower market microstructure noise. {\label{tbl:3}}}
\end{center}
\end{table}

\subsection{Correlated Noise\label{excorr1}}
In this section we consider microstructure noise that is correlated. If this
process is stationary, the
noise process can be modelled as an MA process (as described in Section \ref{subsec:corr}),
and the corresponding parameters
can be estimated by maximising the multiscale Whittle likelihood using \eqref{e:multiscaler2}
and \eqref{eq:MA}. Figure \ref{fig1d} shows the multiscale estimator applied
to the Heston Model (with the same parameters as before) with a microstructure noise that follows an MA(6) process (parameters given in the caption). The Whittle estimates (in white) form
a good approximation of $\widehat{\mathcal S}^{(X)}_{kk}$
and $\widehat{\mathcal S}^{(\varepsilon)}_{kk}$ despite the more complicated
nuisance structure. The corresponding estimate of the multiscale ratio $\widehat{L}_k$ (in white) therefore removes energy from the correct frequencies and the corresponding plot
of $\widehat{L}_k$$\widehat{\mathcal S}^{(Y)}_{kk}$ is a good approximation of $\widehat{\mathcal S}^{(X)}_{kk}$. This is the same noise process and It\^{o} process for which we calculated the optimal smoothing window in section \ref{subsec:corr}, and the trough in the noise at about $f=0.42$ corresponds to the oscillations in the kernel plotted in Figure \ref{fig:fig-1}.

\begin{figure}[!htbp]
\centerline{
\begin{tabular}{c@{\hspace{1pc}}c}
\includegraphics[width=2.5in, height = 2.5in]{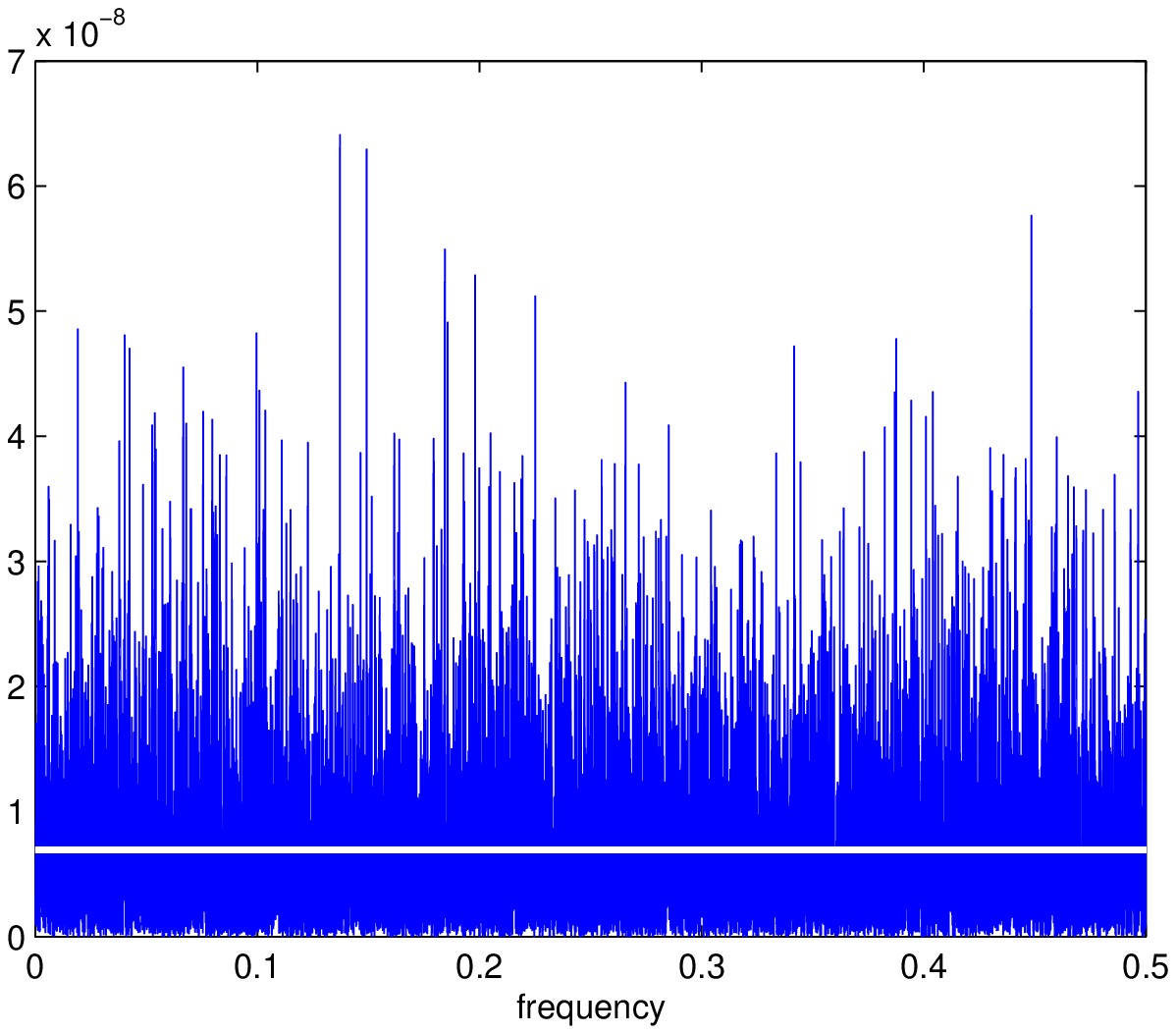} &
\includegraphics[width=2.5in, height = 2.5in]{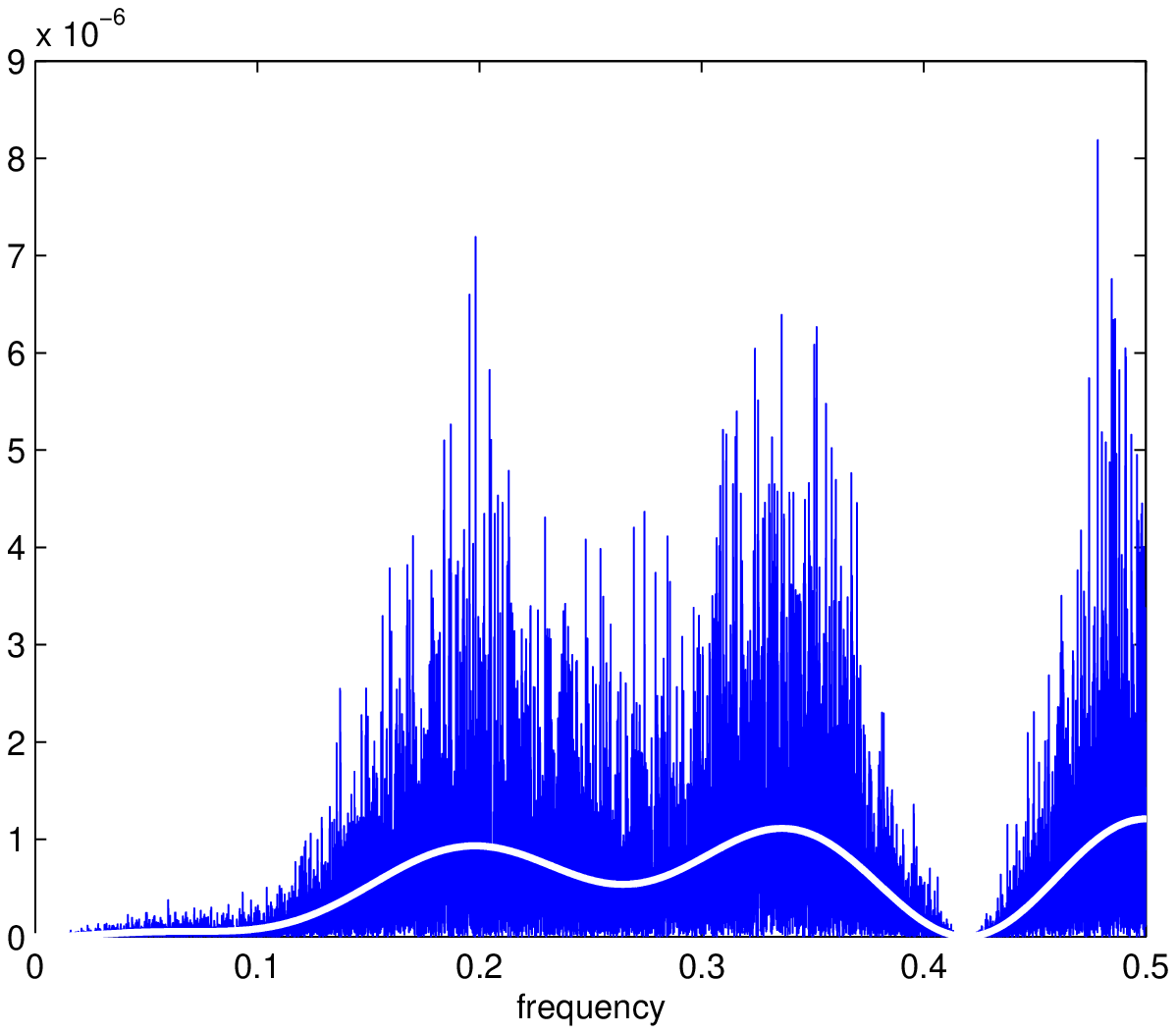} \\
\includegraphics[width=2.5in, height = 2.5in]{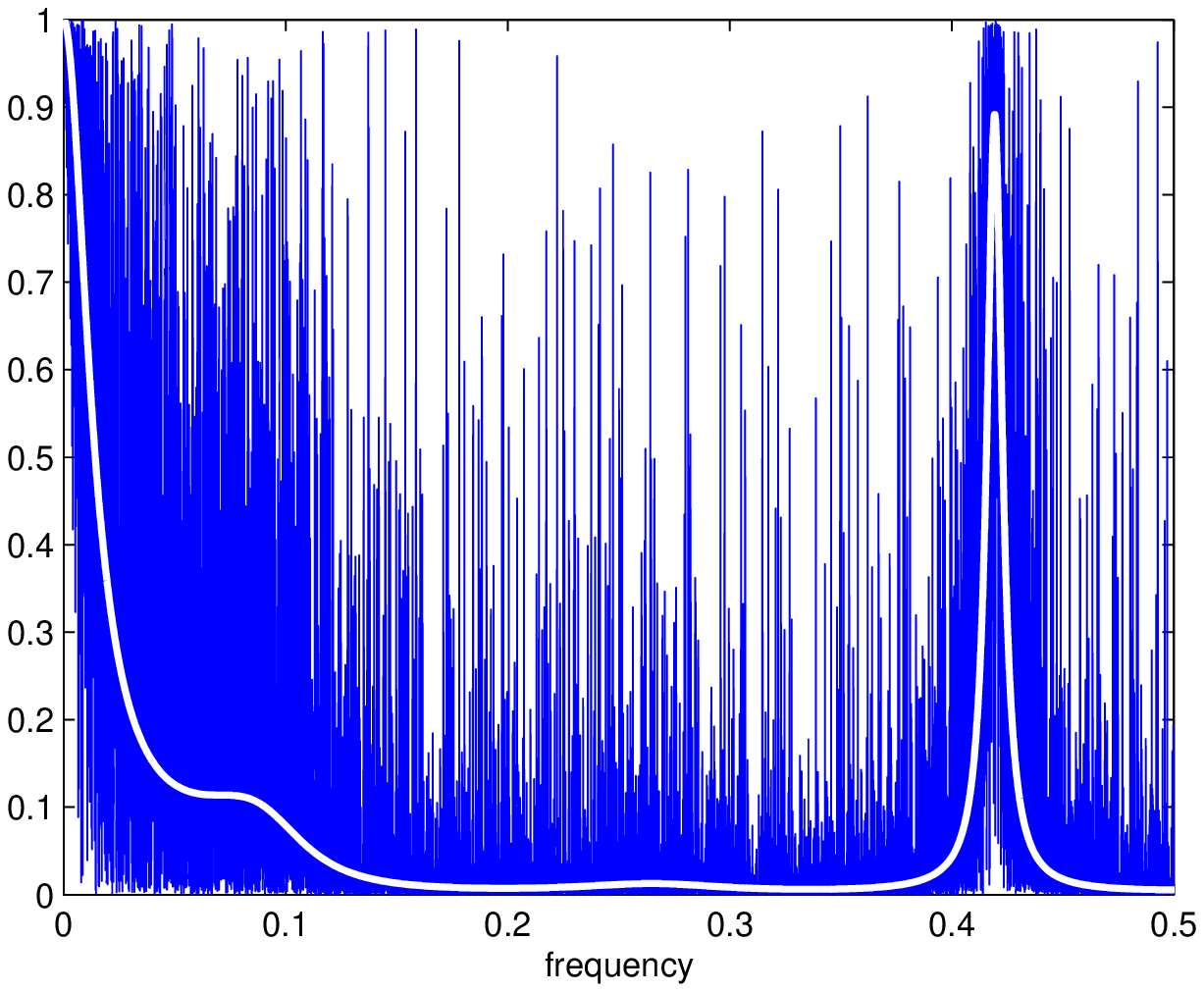} &
\includegraphics[width=2.5in, height = 2.5in]{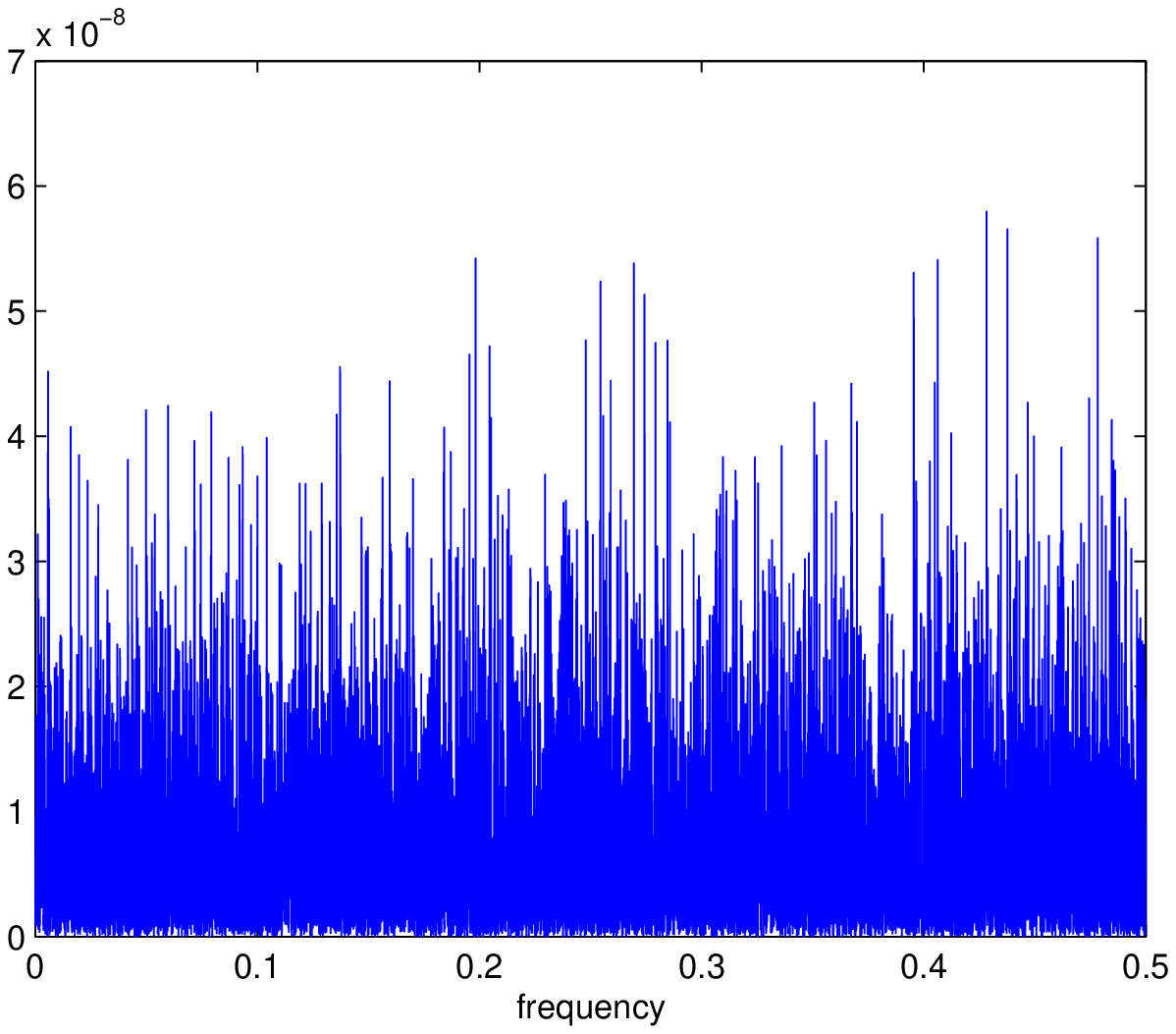} \\
\end{tabular}}
\begin{center}
\caption{A realisation of $\widehat{\mathcal S}^{(X)}_{kk}$ (top left),
 a realisation of $\widehat{\mathcal S}^{(\varepsilon)}_{kk}$ (top right) with the Whittle estimates superimposed, the estimate of $L_k$ (bottom left) with the Whittle estimate of $L_k$ superimposed
 and the biased corrected estimator
 of ${\mathcal S}^{(X)}_{kk}$ using $\widehat{L}_k\widehat{\mathcal S}^{(Y)}_{kk}$ (bottom right). In this example we use an MA(6) with $\theta_1=0.5$, $\theta_2=-0.1$, $\theta_3=-0.1$, $\theta_4=0.2$, $\theta_5=0$ and $\theta_6=0.4$.
Notice the different scales in the
four figures.\label{fig1d}}
\end{center}
\end{figure}

If the length of the MA$(p)$ process is unknown, then $p$ can be determined
using \eqref{e:aicc}. In Table \ref{tbl:4} we show an example with $p=4$ with paramaters $\theta_1=0.8$, $\theta_2=-0.6$, $\theta_3=0.1$, $\theta_4=0.4$, Clearly $p=4$ is identified as the best fitting model yielding near to perfect estimates of the noise parameters. The estimator is therefore robust to removing
the effect of microstructure noise when this process is correlated (and stationary),
even if the length of the MA$(p)$ process is not explicitly known. 

\begin{table}
\begin{center}
\begin{footnotesize}
\begin{tabular}{|l|c|c|c|c|c|c|c|c|c|}\hline
MA($p$) & $\theta_1$ & $\theta_2$ & $\theta_3$ & $\theta_4$ & $\theta_5$ & $\theta_6$ & $\theta_7$ & $\theta_8$ & AICC
\\ \hline
$p=1$ & 0.935 & & & & & & & & $-3.208490\times10^{5}$\\
$p=2$ & 0.624 & -0.445 & & & & & & & $-3.239947\times10^{5}$\\ 
$p=3$ & 0.658 & -0.459 & -0.046 & & & & & & $-3.240000\times10^{5}$\\ \hline
$p=4$ & 0.806 & -0.603 & -0.101 & 0.410 & & & & & $-3.262427\times10^{5}$\\ \hline
$p=5$ & 0.813 & -0.606 & -0.101 & 0.411 & -0.008 & & & & $-3.262416\times10^{5}$\\
$p=6$ & 0.815 & -0.604 & -0.097 & 0.420 & -0.003 & 0.000 & & & $-3.262409\times10^{5}$\\
$p=7$ & 0.807 & -0.613 & -0.114 & 0.413 & 0.002 & -0.002 & -0.005 & & $-3.262402\times10^{5}$\\
$p=8$ & 0.817 & -0.614 & -0.128 & 0.427 & 0.005 & 0.011 & -0.009 & -0.017 & $-3.262384\times10^{5}$\\\hline
\end{tabular}
\end{footnotesize} 
\caption{ Values of $\theta$ found by modelling the noise process as an MA(p) process for $p=1,\ldots,8$.  Model choice methods (AICC) are used to select which process to model the noise by, in this case the AICC is minimised by selecting an MA(4) with the given parameters. The true noise is indeed an MA(4) process (with paramaters $\theta_1=0.8$, $\theta_2=-0.6$, $\theta_3=0.1$, $\theta_4=0.4$).{\label{tbl:4}}}
\end{center}
\end{table}

We also tested our estimator using Monte Carlo simulations in
\cite{Sykulskietal2008} for a variety of MA(1) processes and the results showed a
significant reduction in error compared with not only the naive estimator, but also
the estimators based on a white-noise assumption. Furthermore, the adjusted
multiscale estimator performed almost identically to our multiscale estimator when
we set $\theta_1=0$ and recovered a white-noise process, meaning the loss in precision
from searching for a parameter unnecessarily was negligible (as to be expected for $q\ll N$). Notice also that in
Table \ref{tbl:4} there appears to be little loss in precision from estimating more
parameters in the MA$(4)$ process then is required as $\theta_p$ for $p>4$ is always
estimated to be very close to zero. This further demonstrates the robustness and precision of our
estimation technique.

\section{Conclusions}\label{sec:concl}
The problem of estimating the integrated stochastic volatility of an It\^{o} process from noisy
observations was studied in this paper. Unlike most
previous works on this problem, see~\cite{Zhang+2005, PavlSt06}, the method for estimating
the integrated volatility developed in this paper is based on the frequency domain
representation of both the It\^{o}  process and the noisy observations. The integrated
volatility can be represented as a summation of variation in the process of interest
over all frequencies (or scales). In our estimator we  adjust the raw sample
variance at each frequency. Such an estimator is truly multiscale, as it corrects
the estimated energy directly at every scale. In other words, the estimator is
debiased {\em locally at each frequency}, rather than globally.

To estimate the degree of scale separation in the data we used the Whittle likelihood, and
quantified the noise contribution by the multiscale ratio. Various properties of the
multiscale estimator were determined, see Theorems~\ref{thm:thmA} and~\ref{thm:thmB}.
As was illustrated by the set of examples, our estimator performs extremely well on
data simulated from the Heston model, and is competitive with the methods proposed
by \cite{Zhang+2005}, under varying signal-to-noise and sampling scenarios. The
proposed estimator is truly multiscale in nature and adapts automatically to
the degree of noise contamination of the data, a clear strength. It is also
easily implemented and computationally efficient.

The new estimator for the integrated stochastic volatility can be written as
\begin{eqnarray*}
\widehat{\langle X,X\rangle}&=&\sum_u \ell_{-u}
\sum_k \left(X_{t_{k-u}}-X_{t_{k-u-1}}\right)\left(X_{t_{k}}-X_{t_{k-1}}\right),
\end{eqnarray*}
where the kernel $\ell_u$ is given by \eqref{eqnsmooth}.
We can compare this estimator with kernel estimators, see \cite{Florens93}. There the estimated increment square $\Delta X_t^2$ is locally smoothed to estimate the diffusion coefficient using a kernel function, $K(\cdot)$.
Contrary to this approach we estimate the integrated volatility by smoothing the estimated autocovariance of $\Delta X_{t_j}$. In particular, we use a data-dependent choice of smoothing window. We show that, from a minimum bias perspective, using a Laplace window to smooth is optimal. This data-dependent choice of smoothing window becomes more interesting after relaxing the assumptions on the noise process, and treating correlated observation error.

Inference procedures implemented in the frequency domain are still very underdeveloped for problems with a multiscale structure. The modern data deluge has caused an excess of high frequency observations in a number of application areas, for example finance and molecular dynamics. More flexible models could also be used for the high frequency nuisance structure. In this paper we have introduced a new frequency domain based estimator and applied it to a relatively simple problem, namely the estimation of the integrated stochastic volatility, for data contaminated by high frequency noise. There are many extensions and potential applications of the new estimator. Here we list a few which seem interesting to us and which are currently under investigation.
\begin{itemize}
\item Study parameter estimation for noisily observed SDEs which are driven by more general noise processes, for example L\'{e}vy processes.
\item Application of the new estimator to the problem of statistical inference for fast/slow systems of SDEs, of the type studied in~\cite{PavlSt06, PapPavSt08}.
\item Study the combined effects of high-frequency and multiscale structure in the data. A first step in this direction was taken in~\cite{CotterPavl09}.
\end{itemize}

\bibliographystyle{siam}

\begin{thebibliography}{10}

\bibitem{AitMykZha05b}
{\sc Y.~Ait-Sahalia, P.~A. Mykland, and L~Zhang}, {\em How often to sample a
  continuous-time process in the presence of market microstructure noise}, Rev.
  Financ. Studies, 18 (2005), pp.~351--416.

\bibitem{Barucci02}
{\sc E.~Barucci and R.~Reno}, {\em On measuring volatility of diffusion
  processes with high frequency data}, Economics Letters, 74 (2002),
  pp.~371--378.

\bibitem{Beran1994}
{\sc J.~Beran}, {\em Statistics for long-memory Processes}, Chapman and Hall,
  London, 1994.

\bibitem{Bloomfield}
{\sc Peter Bloomfield}, {\em Fourier Analysis of Time Series}, Wiley-IEEE, New
  York, 2004.

\bibitem{Brillinger}
{\sc D.~Brillinger}, {\em Time series: data analysis and theory}, Society for
  Industrial and Applied Mathematics, Philadelphia, USA, 2001.

\bibitem{Brockwell}
{\sc P.~J. Brockwell and R.~A. Davis}, {\em Time Series: Theory and Methods},
  Springer-Verlag, New York, 1991.
  
\bibitem{CotterPavl09}
{\sc C. J. Cotter and G. A. Pavliotis},
{\em Estimating Eddy diffusivities from Lagrangian Observations},
{Preprint}, 2009.

\bibitem{Dzham1983}
{\sc K.~O Dzhamparidze and A.~M. Yaglom}, {\em Spectrum parameter estimation in
  time series analysis}, in Developments in Statistics, PR~Krishnaiah, ed.,
  vol.~4, New York: Academic Press., 1983, pp.~1--181.

\bibitem{FanWang}
{\sc J.~Fan and Y.~Wang}, {\em Multi-scale jump and volatility analysis for
  high-frequency financial data}, J. of the American Statistical Association,
  102 (2007), pp.~1349--1362.

\bibitem{Florens93}
{\sc D.~Florens-Zmirou}, {\em On estimating the diffusion coefficient from
  discrete observations}, Journal of Applied Probability, 30 (1993),
  pp.~790--804.

\bibitem{Sircar03}
{\sc J.-P. Fouque, G.~Papanicolaou, K.~R. Sircar, and K.~Solna}, {\em Short
  time-scale in {SP-500} volatility}, J. Comp. Finance, 6 (2003), pp.~1--23.

\bibitem{MR2097022}
{\sc D. Givon, R. Kupferman, and A. M. Stuart}, {\em Extracting macroscopic
  dynamics: model problems and algorithms}, Nonlinearity, 17 (2004),
  pp.~R55--R127.

\bibitem{Griffa_al_95}
{\sc A.~Griffa, K.~Owensa, L.~Piterbarg, and B.~Rozovskii}, {\em Estimates on
  turbulence parameters from lagrangian data using a stochastic particle
  model}, J. Marine Research, 53 (1995), pp.~371--401.

\bibitem{Hansen05}
{\sc P.~R. Hansen and A.~Lunde}, {\em A forecast comparison of volatility
  models: does anything beat a {GARCH(1,1)}?}, Journal of Applied Econometrics,
  20 (2005), pp.~873--889.

\bibitem{Hansen06}
\leavevmode\vrule height 2pt depth -1.6pt width 23pt, {\em Realized variance
  and market microstructure noise}, J. Business \& Economic Statistics, 24
  (2006), pp.~127--161.

\bibitem{Heston}
{\sc S. L. Heston}, {\em A closed form solution for options with stochastic
  volatility with applications to bond and currency options}, Review of
  Financial Studies, 6 (1993), pp.~327--343.
  
\bibitem{Horenko2007}
{\sc I. Horenko, C. Hartmann, C. Sch\"{u}tte and F. Noe}, {\em  Data-based parameter estimation of generalized multidimensional Langevin processes},
Physical Review E, 76 (2007), no. 016706.

\bibitem{Horenko2008}
{\sc I. Horenko and C. Sch\"{u}tte},
{\em  Likelihood-based estimation of multidimensional Langevin Models and its Application to biomolecular dynamics}, Multiscale Modeling and Simulation,
7 (2008), 731--773.
  
\bibitem{KSh91}
{\sc I. Karatzas and S. E. Shreve},
{\em Brownian Motion and Stochastic Calculus}, 
 Springer-Verlag, New York, 1991.
 
\bibitem{Kut04}
{\sc Y. A. Kutoyants},
{Statistical inference for ergodic diffusion processes},
 {Springer Series in Statistics},
  {London},
 {2004}.

\bibitem{Malliavin+2002}
{\sc P.~Malliavin and M.~E. Mancino}, {\em Fourier series for measurements of
  multivariate volatilities}, Finance and Stochastics, 6 (2002), pp.~49--61.

\bibitem{Mancino+tech}
{\sc M.~E. Mancino and S.~Sanfelici}, {\em Robustness of Fourier estimators of
  integrated volatility in the presence of microstructure noise}, tech. report,
  University of Firenze, 2006.
  
  \bibitem{Neeser93}
{\sc F. Neeser and J. Massey},
{\em Proper complex random processes with applications to information theory},
IEEE Trans. Info. Theory, 39 (1993), 1293--1302.

\bibitem{PapPavSt08}
{\sc T.~Papavasiliou, G. A. Pavliotis, and A.M. Stuart}, {\em Maximum likelihood
  estimation for multiscale diffusions}, 2008.

\bibitem{PavlPokStu08}
{\sc G. A. Pavliotis, Y.~Pokern, and A.M. Stuart}, {\em Parameter estimation for
  multiscale diffusions: an overview}, 2008.

\bibitem{PavlSt06}
{\sc G.~A. Pavliotis and A.~M. Stuart}, {\em Parameter estimation for
  multiscale diffusions}, J. Stat. Phys., 127 (2007), pp.~741--781.

\bibitem{Percival}
{\sc D.~B. Percival and A.~T. Walden}, {\em Spectral Analysis for Physical
  Applications}, Cambridge University Press, Cambridge, UK, 1993.
  
\bibitem{Picinbono}
{\sc B.~Picinbono}, {\em Second-Order Complex Random Vectors and Normal Distributions}, IEEE Trans. Signal Proc., 44 (1996), pp.~2637--2640. 


\bibitem{Reno2005}
{\sc R. Ren{\`o}}, {\em Volatility estimate via Fourier analysis}, PhD
  thesis, Scuola Normale Superiore, 2005.

\bibitem{Reno2008}
\leavevmode\vrule height 2pt depth -1.6pt width 23pt, {\em Nonparametric
  estimation of the diffusion coefficient of stochastic volatility models},
  Econometric Theory, 24 (2008), pp.~1174--1206.

\bibitem{Sykulskietal2008}
{\sc A. Sykulski, S. Olhede, and G. A. Pavliotis}, {\em High frequency
  variability and microstructure bias}, in Inference and Estimation in
  Probabilistic Time-Series Models, D.~Barber, A.~T. Cemgil, and S.~Chiappa,
  eds., Cambridge, UK, 2008, Issac Newton Institute for Mathematical Sciences.
\newblock Proceedings available at
  http://www.newton.ac.uk/programmes/SCH/schw05-papers.pdf.

\bibitem{tsay}
{\sc R.~S. Tsay}, {\em Analysis of Financial Time Series}, John Wiley \& Sons,
  New York, USA, 2005.
  
\bibitem{Wasserman}
{\sc L.~Wasserman}, {\em All of Nonparametric Statistics}, Springer, Berlin, 2007.
  
\bibitem{Whittle53}
{\sc P.~Whittle}, {\em Estimation and information in stationary time series},
  Arkiv f{\"o}r Matematik, 2 (1953), pp.~423--434.
  
\bibitem{Whittle62}
{\sc P.~Whittle}, {\em Gaussian estimation in stationary time series},
  Bull. Inst. Statist. Inst., 39 (1962), pp.~105--129.

\bibitem{Zhang+2005}
{\sc L.~Zhang, P.~A. Mykland, and Y.~Ait-Sahalia}, {\em A tale of two time
  scales: Determining integrated volatility with noisy high-frequency data}, J.
  Am. Stat. Assoc., 100 (2005), pp.~1394--1411.

\end{thebibliography}

\setcounter{section}{0}
\renewcommand{\thesection}{{\Alph{section}}}
\renewcommand{\thesubsection}{\Alph{section}.\arabic{subsection}}
\renewcommand{\thesubsubsection}{\Alph{section}.\arabic{subsection}.\arabic{subsubsection}}
\renewcommand{\theequation}{\Alph{section}-\arabic{equation}}
\section{Proof of Theorem \ref{thm:thmA} \label{sec:apA}}
Let the true value of $\bm{\sigma}$ be denoted $\bm{\sigma}^{\star}.$
We differentiate the multiscale energy likelihood function~\eqref{e:multis}
with respect to $\bm{\sigma}$ to obtain
\begin{alignat*}{1}
\dot{\ell}_X(\bm{\sigma})&=\frac{\partial \ell(\bm{\sigma})}{\partial \overline{\sigma}^2_{X}}
=-\sum_{k=1}^{N/2-1} \frac{1}{\overline{\sigma}^2_{X}+\sigma^2_{\varepsilon}\left|2 \sin(\pi
f_k\Delta t)\right|^2}+
\sum_{k=1}^{N/2-1}\frac{\widehat{\mathcal{S}}^{(Y)}_{kk}}{\left(\overline{\sigma}^2_{X}+\sigma^2_{\varepsilon}\left|2
\sin(\pi f_k\Delta t)\right|^2\right)^2}\\
\dot{\ell}_{\varepsilon}(\bm{\sigma})&=
\frac{\partial \ell(\bm{\sigma})}{\partial \sigma^2_{\varepsilon}}=-\sum_{k=1}^{N/2-1} \frac{\left|2 \sin(\pi f_k \Delta
t)\right|^2}{\overline{\sigma}^2_{X}+\sigma^2_{\varepsilon}\left|2 \sin(\pi f_k \Delta
t)\right|^2}+\sum_{k=1}^{N/2-1}\frac{\left|2 \sin(\pi f_k \Delta
t)\right|^2\widehat{\mathcal{S}}^{(Y)}_{kk}} {\left(\overline{\sigma}^2_{X}+ \sigma^2_{\varepsilon}\left|2
\sin(\pi f_k \Delta t)\right|^2\right)^2}.
\end{alignat*}
To remove implicit $\Delta t$ dependence we let $\tau_X=\overline{\sigma}^2_X/\Delta t,$ and denote derivatives with respect to $\tau_X$ by subscript $\tau$. Then $\dot{\ell}_{\tau}(\widehat{\bm{\sigma}})=\Delta t\dot{\ell}_X(\widehat{\bm{\sigma}})$, and so on.
We calculate the expectation and variance of the score functions evaluated at $\bm{\sigma}^{\star},$ and find that the bias of $\widehat{\tau}_X$ is order ${\mathcal{O}}\left(\Delta t^{1/2}\log(\Delta t)\right)$ and the bias of $\widehat{\sigma}^2_{\varepsilon}$ is order ${\mathcal{O}}\left(\Delta t^{2}\log(\Delta t)\right).$ These contributions become negligible, and are of lesser importance compared to the variance.

To show large sample properties we Taylor expand the multiscale likelihood with $\widehat{\bm{\sigma}}$ corresponding to the estimated maximum likelihood, and $\bm{\sigma}'$ is lying between 
$\widehat{\bm{\sigma}}$ and ${\bm{\sigma}}^{\star}$. Then
\begin{alignat}{1}
\nonumber
\dot{\ell}_\tau(\widehat{\bm{\sigma}})&=\dot{\ell}_{\tau}(\bm{\sigma}^{\star})+\ddot{\ell}_{\tau \tau}(\bm{\sigma}')
\left[\widehat{\sigma}^2_{X}-\sigma_{X}^{\star 2}\right]/\Delta t+\ddot{\ell}_{\tau\varepsilon}(\bm{\sigma}')
\left[\widehat{\sigma}^2_{\varepsilon}-\sigma_{\varepsilon}^{\star 2}\right]\\
\dot{\ell}_{\varepsilon}(\widehat{\bm{\sigma}})&=\dot{\ell}_{\varepsilon}(\bm{\sigma}^{\star})+\ddot{\ell}_{\varepsilon \tau}(\bm{\sigma}')
\left[\widehat{\sigma}^2_{X}-\sigma_{X}^{\star 2}\right]/\Delta t+\ddot{\ell}_{\varepsilon\varepsilon}(\bm{\sigma}')
\left[\widehat{\sigma}^2_{\varepsilon}-\sigma_{\varepsilon}^{\star 2}\right].
\nonumber
\end{alignat}
We note with the observed Fisher information 
\[\bm{F}=\left[\ddot{\ell}_{\tau \tau}(\bm{\sigma}')\quad\ddot{\ell}_{\tau\varepsilon}(\bm{\sigma}');
\ddot{\ell}_{\varepsilon \tau}(\bm{\sigma}')\quad\ddot{\ell}_{\varepsilon
\varepsilon}(\bm{\sigma}')\right]\] that
\begin{eqnarray}
\label{scoredist}
\begin{pmatrix}
(\widehat{\sigma}^2_{X}-\sigma_{X}^{\star 2})/\Delta t\\
\widehat{\sigma}^2_{\varepsilon}-\sigma_{\varepsilon}^{\star 2}
\end{pmatrix}=\bm{F}^{-1}\begin{pmatrix}\dot{\ell}_{\tau}(\widehat{\bm{\sigma}})-\dot{\ell}_{\tau}(\bm{\sigma}^{\star})\\
\dot{\ell}_{\varepsilon}(\widehat{\bm{\sigma}})-\dot{\ell}_{\varepsilon}(\bm{\sigma}^{\star})
\end{pmatrix}.
\end{eqnarray}
We henceforth ignore the term $J^{(\mu)}_k=J^{(X)}_k-\tilde{J}^{(X)}_k$ as this will not contribute to leading order, and write $J^{(X)}_k$ where formally we would write $\tilde{J}^{(X)}_k$ or ${J}^{(X)}_k$. We can observe the suitability of this directly from \eqref{e:multis}
and use bounds for $J^{(\mu)}_k$, where we could formally apply these to get bounds on each derivative of $l(\bm{\sigma})$ (note that we cannot differentiate bounds). To avoid needless technicalities, the details of this approach will not be reported.
To leading order
{\small
\begin{eqnarray*}
\var\left(\dot{\ell}_\tau({\bm{\sigma}}) \right)&=&\sum_{l=1}^{N/2-1}
\sum_{k=1}^{N/2-1}\frac{\Delta t^2\cov\left(
\widehat{\mathcal S}^{(Y)}_{kk},\widehat{\mathcal S}^{(Y)}_{ll}\right)}
{\left(\overline{\sigma}^2_{X}+ \sigma^2_{\varepsilon}\left|2 \sin(\pi f_k\Delta t)\right|^2\right)^2
\left(\overline{\sigma}^2_{X}+\sigma^2_{\varepsilon}\left|2 \sin(\pi f_l\Delta t)\right|^2\right)^2}\\
\var\left(\dot{\ell}_{\varepsilon}({\bm{\sigma}}) \right)&=&\sum_{l=1}^{N/2-1}
\sum_{k=1}^{N/2-1}
\frac{\left|2 \sin(\pi f_k\Delta t)\right|^2
\left|2 \sin(\pi f_l\Delta t)\right|^2\cov\left(\widehat{\mathcal S}^{(Y)}_{kk},
\widehat{\mathcal S}^{(Y)}_{ll}\right)}{\left(\overline{\sigma}^2_{X}+ \sigma^2_{\varepsilon}\left|2 \sin(\pi f_k\Delta t)\right|^2\right)^2\left(\overline{\sigma}^2_{X}+ \sigma^2_{\varepsilon}\left|2 \sin(\pi f_l\Delta t)\right|^2\right)^2}\\
\\
\cov\left(\dot{\ell}_\tau({\bm{\sigma}}),\dot{\ell}_{\varepsilon}({\bm{\sigma}}) \right)&=&\sum_{l=1}^{N/2-1}
\sum_{k=1}^{N/2-1}\frac{\Delta t
\left|2 \sin(\pi f_l\Delta t)\right|^2
\cov\left(\widehat{\mathcal S}^{(Y)}_{kk},\widehat{\mathcal S}^{(Y)}_{ll}\right)}
{\left(\overline{\sigma}^2_{X}+ \sigma^2_{\varepsilon}\left|2 \sin(\pi f_k\Delta t)\right|^2\right)^2\left(\overline{\sigma}^2_{X}+\sigma^2_{\varepsilon}
\left|2 \sin(\pi f_l\Delta t)\right|^2\right)^2}.
\end{eqnarray*}}
We now need to calculate $\cov\left(\widehat{\mathcal S}^{(Y)}_{kk},\widehat{\mathcal S}^{(Y)}_{ll}\right)$ which is
\begin{eqnarray}
\nonumber
\cov\left(\widehat{\mathcal S}^{(Y)}_{kk},\widehat{\mathcal S}^{(Y)}_{ll}\right)&=&
\E\left\{J_k^{(Y)} [J^{(Y)}_k]^* [J_l^{(Y)}]^* J_l^{(Y)}\right\}-\E\left\{\widehat{\mathcal S}^{(Y)}_{kk}\right\}
\E\left\{\widehat{\mathcal S}^{(Y)}_{ll}\right\}\\
&=&\rho_{kl}^{(Y)}{\mathcal S}^{(Y)}_{kk}{\mathcal S}^{(Y)}_{ll}.
\label{covarss}
\end{eqnarray}
Furthermore
\begin{eqnarray*}
&&\E\left\{J_k^{(Y)} [J^{(Y)}_k]^* [J_l^{(Y)}]^* J_l^{(Y)}\right\}-
\E\left\{J_k^{(Y)} [J^{(Y)}_k]^*\right\}\E\left\{ [J_l^{(Y)}]^* J_l^{(Y)}\right\}
\\
&=&
\E\left\{(J_k^{(X)}+J_k^{(\varepsilon)}) [(J_k^{(X)}+J_k^{(\varepsilon)})]^* [(J_l^{(X)}+J_l^{(\varepsilon)})]^* (J_l^{(X)}+J_l^{(\varepsilon)})\right\}\\
&&-
\E\left\{J_k^{(Y)} [J^{(Y)}_k]^*\right\}\E\left\{ [J_l^{(Y)}]^* J_l^{(Y)}\right\}
\\
&=&\cov\left\{\widehat{\mathcal S}^{(X)}_{kk}, \widehat{\mathcal S}^{(X)}_{ll} \right\}+
\cov\left\{\widehat{\mathcal S}^{(\varepsilon)}_{kk}, \widehat{\mathcal S}^{(\varepsilon)}_{ll} \right\}+{\mathcal S}^{(X)}_{kl}{\mathcal S}^{(\varepsilon)*}_{kl}+
{\mathcal S}^{(X)*}_{kl}{\mathcal S}^{(\varepsilon)}_{kl}.
\end{eqnarray*}
We therefore need to calculate the individual terms of this expression. We note
\[\cov\left\{\widehat{\mathcal S}^{(\varepsilon)}_{kk}, \widehat{\mathcal S}^{(\varepsilon)}_{ll} \right\}=\delta_{kl}[{\mathcal S}^{(\varepsilon)}_{kk}]^2,\quad
{\mathcal S}^{(X)}_{kl}{\mathcal S}^{(\varepsilon)*}_{kl}+
{\mathcal S}^{(X)*}_{kl}{\mathcal S}^{(\varepsilon)}_{kl}=2\delta_{kl}
{\mathcal S}^{(X)}_{kk}{\mathcal S}^{(\varepsilon)}_{kk}.\]
Then it follows
\begin{equation}\label{e:cov_per}
\cov\left\{\widehat{\mathcal S}^{(Y)}_{kk},\widehat{\mathcal S}^{(Y)}_{ll}\right\}=
\cov\left\{\widehat{\mathcal S}^{(X)}_{kk},\widehat{\mathcal S}^{(X)}_{ll}\right\}+
\delta_{kl}[{\mathcal S}^{(\varepsilon)}_{kk}]^2+2\delta_{kl}
{\mathcal S}^{(X)}_{kk}{\mathcal S}^{(\varepsilon)}_{kk}.
\end{equation}
We therefore only need to worry about $\cov\left\{\widehat{\mathcal S}^{(X)}_{kk}, \widehat{\mathcal S}^{(X)}_{ll} \right\}.$
We need
\begin{eqnarray*}
&&\E\left\{ J_k^{(X)} [J^{(X)}_k]^* [J_l^{(X)}]^* J_l^{(X)}\right\}=\frac{1}{N^2}
\E\left\{\sum_{n=1}^{N}\int_{(n-1)\Delta t}^{n\Delta t}\sigma_{s} dW_{s}e^{-2i\pi \frac{k n}{N}}\right.\\
&&\left.\times
\sum_{p=1}^{N}\int_{(p-1)\Delta t}^{p\Delta t}\sigma_{t} dW_{t}e^{2i\pi \frac{k p}{N}}
\sum_{m=1}^{N}\int_{(m-1)\Delta t}^{m\Delta t}\sigma_{u} dW_{u}e^{-2i\pi \frac{l m}{N}}
\sum_{w=1}^{N}\int_{(w-1)\Delta t}^{w\Delta t}\sigma_{v} dW_{v}e^{2i\pi \frac{l w}{N}}\right\}  \\ && =: \frac{1}{N^2} \sum_{n =1}^N \sum_{p =1}^N  \sum_{m =1}^N  \sum_{\rho =1}^N \left(e_{k n} e^*_{k p} e_{\ell m}^* e_{\ell \rho}
\E \Big( M_n M_p M_m M_{\rho} \Big) \right),
\end{eqnarray*}
where $M_n := \int_{(n-1) \Delta t}^{n \Delta t} \sigma_s \, dW_s$ and $e_{k
n} := e^{- \frac{2 i \pi k n}{N}}$. Since Brownian motion has independent
increments, we have that $\E \Big( M_n M_p M_m M_{\rho} \Big) = \E M_n^4  \; {\mathrm{if}} \;n=p=m=\rho$,  $\E \Big( M_n M_k M_m M_{\rho} \Big) = \E M_n^2 \E M_k^2 $ if $n=k, \; m=\rho$ and $\E \Big( M_n M_p M_m M_{\rho} \Big) = 0 $, otherwise.
Consequently,
\begin{eqnarray*}
\E\left\{ J_k^{(X)} [J^{(X)}_k]^* [J_l^{(X)}]^* J_l^{(X)}\right\}  &=& \frac{1}{N^2}
\sum_{n=1}^N \ E M_n^4 + \frac{1}{N^2}  \left( \sum_{n=1}^N \ E M_n^2 \right)^2
\\ && +  \frac{1}{N^2} \sum_{n=1}^N \sum_{p=1}^N e_{kn} e_{\ell n}^* e_{k
p}^* e_{\ell p} \ E M_n^2  \ E M_p^2  \\
&&+  \frac{1}{N^2} \sum_{n=1}^N \sum_{p=1}^N e_{kn} e_{\ell p}^* e_{k p}^* e_{\ell n} \ E M_n^2  \ E M_p^2 
 \end{eqnarray*}
We use standard bounds on moments of stochastic integrals~\cite{KSh91} to
obtain the bound
$$
\frac{1}{N^2} \sum_{n=1}^N \ E M_n^4 \leq \frac{1}{N^2} \sum_{n=1}^N 36 \Delta t \int_{(n-1) \Delta t}^{n \Delta t} \E \sigma_s^4 \, ds \leq C (\Delta t)^3,
$$
since, by assumption, $\E \sigma^4_s = \mathcal{O}(1)$\footnote{$C$ in this paper denotes a generic constant, rather than the same constant.}. 
We
have:
\begin{eqnarray*}
\nonumber
\rho_{kl}^{(X)}{\mathcal S}^{(X)}_{kk}{\mathcal S}^{(X)}_{ll}
 & = & \E\left\{ J_k^{(X)} [J^{(X)}_k]^* [J_l^{(X)}]^* J_l^{(X)}\right\}
 - \E \left| J_k^{(X)} \right|^2 \E \left| J_l^{(X)} \right|^2
\\ &=&
\frac{1}{N^2}\int_0^T \int_0^T
\left(\cos(2\pi (k+l)(\frac{s-u}{T}))+\cos(2\pi (k-l)(\frac{s-u}{T})) \right) \\\nonumber
&&\times \E\left\{\sigma_{s}^2\right\} \E\left\{\sigma_{u}^2\right\} ds du  + \mathcal{O}((\Delta t)^3) \\
 &=&\frac{1}{2N^2}\int_0^T \int_0^T
\E\left\{\sigma_{s}^2\right\} \E\left\{\sigma_{u}^2\right\}
\left(e^{2i\pi (k+l)(\frac{s-u}{T})}+e^{-2i\pi (k+l)(\frac{s-u}{T})}\right.\\
&&\left.
+e^{2i\pi (k-l)(\frac{s-u}{T})}+e^{-2i\pi (k-l)(\frac{s-u}{T})} \right)  ds du + \mathcal{O}((\Delta t)^3)\\
&=&\frac{1}{2N^2}\left(\Sigma(-\frac{k+l}{T})\Sigma(\frac{k+l}{T})+
\Sigma(\frac{k+l}{T})\Sigma(-\frac{k+l}{T})\right.\\
&&\left.+\Sigma(-\frac{k-l}{T})\Sigma(\frac{k-l}{T})+
\Sigma(\frac{k-l}{T})\Sigma(-\frac{k-l}{T}) \right) + \mathcal{O}((\Delta t)^3).
\end{eqnarray*}
Since $\E \sigma^2_t$ is a smooth function of time we can bound the decay of $\Sigma(f)\propto \frac{1}{f}$ so that:
\begin{eqnarray}
\label{ordercorr}
\rho_{kl}^{(X)}{\mathcal S}^{(X)}_{kk}{\mathcal S}^{(X)}_{ll}&=&
\Delta t^2 \left(\mathcal{O}\left(\frac{1}{(k+l)^2}\right)+
\mathcal{O}\left(\frac{1}{(k-l)^2}\right) \right).
\end{eqnarray}
We combine the foregoing calculations with~\eqref{e:cov_per}
\[\var\left\{\widehat{\mathcal S}^{(Y)}_{kk} \right\}=
\left({\mathcal S}^{(X)}_{kk}+{\mathcal S}^{(\varepsilon)}_{kk} \right)^2.\]
{\small \begin{eqnarray}
\label{varscore1}
\var\left(\dot{\ell}_\tau(\widehat{\bm{\sigma}}) \right)&=&\sum_{l=1}^{N/2-1}
\sum_{k=1}^{N/2-1}\frac{\Delta t^2\cov\left(
\widehat{\mathcal S}^{(Y)}_{kk},\widehat{\mathcal S}^{(Y)}_{ll}\right)}
{\left(\overline{\sigma}^2_{X}+ \sigma^2_{\varepsilon}\left|2 \sin(\pi f_k\Delta t)\right|^2\right)^2
\left(\overline{\sigma}^2_{X}+ \sigma^2_{\varepsilon}\left|2 \sin(\pi f_l\Delta t)\right|^2\right)^2}.
\end{eqnarray}}
We note that 
\[\cov\left(
\widehat{\mathcal S}^{(Y)}_{kk},\widehat{\mathcal S}^{(Y)}_{ll}\right)=
\rho_{kl}^{(X)}{\mathcal S}^{(X)}_{kk}{\mathcal S}^{(X)}_{ll}+\delta_{kl}\left[
{\mathcal S}^{(\varepsilon)}_{ll}\right]^2+2\delta_{kl}{\mathcal S}^{(X)}_{kk}
{\mathcal S}^{(\varepsilon)}_{kk}.\]
Thus it follows that:
\begin{eqnarray}
\label{varscore2}
\var\left(\dot{\ell}_\tau(\widehat{\bm{\sigma}}) \right)&=&\sum_{l=1}^{N/2-1}\frac{\Delta t^2}
{
\left(\overline{\sigma}^2_{X}+ \sigma^2_{\varepsilon}\left|2 \sin(\pi f_l\Delta t)\right|^2\right)^2}+C+\mathcal{O}(\log(\Delta t)\Delta t^{-1/4})\\
&=&\mathcal{O}(\Delta t^{-1/2})+C+\mathcal{O}(\log(\Delta t)\Delta t^{-1/4}).
\nonumber
\end{eqnarray}
The extra order terms acknowledge potential effects from the drift.
We need to establish the size of $C$. Using \eqref{e:cov_per} we find that:
\begin{eqnarray*}
|C|
&\le&\sum_{l\neq k}^{N/2-1}\frac{\Delta t^4C_2 ( (k+l)^{-2}+(k-l)^{-2})}
{\left(\overline{\sigma}^2_{X}+ \sigma^2_{\varepsilon}\left|2 \sin(\pi f_k\Delta t)\right|^2\right)^2
\left(\overline{\sigma}^2_{X}+ \sigma^2_{\varepsilon}\left|2 \sin(\pi f_l\Delta t)\right|^2\right)^2}\\
&\le &2\sum_{k=1}^{N/2-1}\sum_{\tau=1}^k\frac{\Delta t^4C_2 ( (2k-\tau)^{-2}+\tau^{-2})}
{\left(\overline{\sigma}^2_{X}+ \sigma^2_{\varepsilon}\left|2 \sin(\pi f_{k-\tau}\Delta t)\right|^2\right)^2
\left(\overline{\sigma}^2_{X}+ \sigma^2_{\varepsilon}\left|2 \sin(\pi f_k\Delta t)\right|^2\right)^2}\\
&\sim &2\sum_{k=1}^{N/2-1}\sum_{\tau=1}^k\frac{C_2 ( (2k-\tau)^{-2}+\tau^{-2})}
{\left({\tau}^2_{X}+ \sigma^2_{\varepsilon}\left|2 \sin(\pi f_{k-\tau}\Delta t)\right|^2/\Delta t\right)^2
\left({\tau}^2_{X}+ \sigma^2_{\varepsilon}\left|2 \sin(\pi f_k\Delta t)\right|^2/\Delta t\right)^2}\\
&=&{\mathcal{O}}(\log(\Delta t)).
\end{eqnarray*}
This is negligible in size compared to $\Delta t^{-1/2}$. 
Similar calculations can bound contributions from the off diagonals in the other two calculations. 
Also as $\overline\sigma^2_X=\tau_X \Delta t$
\begin{alignat}{1}
\label{eqn:fisherinfo2}
-\E\left\{\ddot{\ell}_{\tau \tau}(\bm{\sigma})\right\}&
=
\sum_{k=1}^{N/2-1}\frac{ \Delta t^2}{\left(\overline{\sigma}^2_{X}+ \sigma^2_{\varepsilon}\left|2 \sin(\pi f_k\Delta t)\right|^2\right)^2}+\mathcal{O}(\log(\Delta t))={\mathcal{O}}(\Delta t^{-1/2})\\
\nonumber
-\E\left\{\ddot{\ell}_{\varepsilon \varepsilon}(\bm{\sigma})\right\}
&=\sum_{k=1}^{N/2-1}\frac{\left|2 \sin(\pi f_k )\right|^4}
{\left(\overline{\sigma}^2_{X}+ \sigma^2_{\varepsilon}\left|2 \sin(\pi f_k \Delta t)\right|^2\right)^2}+\mathcal{O}(\log(\Delta t))={\mathcal{O}}(\Delta t^{-1})\\
\nonumber
-\E\left\{\ddot{\ell}_{\tau \varepsilon}(\bm{\sigma})\right\}&=\sum_{k=1}^{N/2-1}\frac{\Delta t\left|2 \sin(\pi f_k )\right|^2
}
{\left(\overline{\sigma}^2_{X}+ \sigma^2_{\varepsilon}\left|2 \sin(\pi f_k \Delta t)\right|^2\right)^2}+{\mathcal{O}}(\log(\Delta t))={\mathcal{O}}(\Delta t^{-1/2}).
\end{alignat}
The order terms follow from usual spectral theory on the white noise process, as well as bounds on $J_k^{(\mu)}$.
We can also by considering the variance of the observed Fisher information deduce that renormalized versions of the entries of the observed Fisher information converge in probability to a constant, or 
\[{\mathrm{diag}}(\Delta t^{1/4},\Delta t^{1/2}){\bm{F}}{\mathrm{diag}}(\Delta t^{1/4},\Delta t^{1/2})\longrightarrow  {\bm{\mathcal F}},\]
and thus using Slutsky's theorem we can deduce that:
\[{\mathrm{diag}}(\Delta t^{-1/4},\Delta t^{-1/2})\left[\begin{pmatrix}
\widehat{\sigma}^2_{X}/\Delta t\\
\widehat{\sigma}^2_{\varepsilon}
\end{pmatrix}-\begin{pmatrix}
\overline\sigma_{X}^{\ast 2}/\Delta t\\ \sigma^{\ast 2}_{\varepsilon}
\end{pmatrix}\right]{\mathrm{diag}}(\Delta t^{-1/4},\Delta t^{-1/2})\overset{L}{\longrightarrow}
N\left(\bm{0},\bm{\mathcal F}^{-1}\right),\]
where the entries of $\bm{\mathcal F}$ can be found from \eqref{eqn:fisherinfo2}, \eqref{varscore1} and \eqref{varscore2}, and
\begin{eqnarray*}
&&{\mathrm{diag}}(\Delta t^{-1/4},\Delta t^{-1/2})\var\left\{\left[\begin{pmatrix}
\widehat{\sigma}^2_{X}/\Delta t\\
\widehat{\sigma}^2_{\varepsilon}
\end{pmatrix}-\begin{pmatrix}
\overline\sigma_{X}^{\ast 2}/\Delta t\\ \sigma^{\ast 2}_{\varepsilon}
\end{pmatrix}\right]\right\}{\mathrm{diag}}(\Delta t^{-1/4},\Delta t^{-1/2})\\
&=&{\mathrm{diag}}(\Delta t^{-1/4},\Delta t^{-1/2})\bm{F}^{-1}\bm{F}\bm{F}^{-1}{\mathrm{diag}}(\Delta t^{-1/4},\Delta t^{-1/2})\\
&=& {\mathrm{diag}}(\Delta t^{-1/4},\Delta t^{-1/2})\bm{F}^{-1}{\mathrm{diag}}(\Delta t^{-1/4},\Delta t^{-1/2})\longrightarrow  {\bm{\mathcal F}}^{-1}.
\end{eqnarray*}
We have
\begin{eqnarray}
{\bm{\mathcal F}}&=&\begin{pmatrix}
\frac{T}{\sigma_{\varepsilon}16}
\frac{1}{\tau_X^{3/2}} & 0\\
0 & \frac{2T}{\sigma^4_{\varepsilon}}
\end{pmatrix}=\begin{pmatrix}
{\cal I}_{\tau \tau}  & 0\\
0 & {\cal I}_{\varepsilon \varepsilon}
\end{pmatrix}.
\label{eq:Fmatrix}
\end{eqnarray}
This expression follows by direct calculation. Asymptotic normality of both $\widehat{\tau}_x$ and $\widehat{\sigma}_{\varepsilon}^2$ follows by the usual arguments. We can determine the asymptotic variance of $\widehat{\langle X,X \rangle}^{(w)}$ via
\begin{eqnarray}
\nonumber
\var\left\{\widehat{\langle X,X \rangle}^{(w)}\right\}&=&T^2\var\left\{\widehat{\tau}_x\right\}\\
&=&T \frac{\sigma_{\varepsilon}}{\tau_X^{1/2}}16  \tau_X^{2}\sqrt{\Delta t}.
\label{varwhittle}
\end{eqnarray}
We see that the variance depends on the length of the time course, the inverse of the signal to noise ratio, the square root of the sampling period and the fourth power of the ``average standard deviation'' of the $X_t$ process.



\section{Proof of Theorem \ref{thm:thmB} \label{sec:apB}}
We now wish to use these results to deduce properties of $\widehat{\sigma}$. Firstly using the well known {\em invariance} of maximum likelihood estimators to transfer the estimators of $\overline{\sigma}_X^2$ and $\sigma^2_{\varepsilon}$ to estimators of $\langle X, X \rangle_T$. We therefore take
\begin{eqnarray}
\nonumber
\widehat{\langle X, X\rangle}_T^{(m_1)}=
\sum_{k=0}^{N-1}\widehat{\mathcal S}^{(X)}_{kk}(\widehat{L}_k)=\sum_{k=0}^{N-1}
\widehat{L}_k\widehat{\mathcal S}^{(Y)}_{kk}\end{eqnarray}
It therefore follows that with $\widehat{\tau}_X=\tau_X+\delta\tau_X$
and $\widehat{\sigma}_\varepsilon^2={\sigma}_\varepsilon^2+\delta\sigma^2_\varepsilon$
{\small \begin{eqnarray*}
\E\left\{\widehat{\langle X, X\rangle}_T^{(m_1)} \right\}&=&
\sum_{k=0}^{N-1}\E\left\{
\left(\frac{\overline{\sigma}^2_{X}+\delta \sigma^2_{X}}{
\overline{\sigma}^2_{X}+\delta \sigma^2_{X}+(
\sigma^{ 2}_{\varepsilon}+\delta \sigma^2_{\varepsilon})\left|2 \sin(\pi f_k\Delta t)\right|^2}\right)\widehat{\mathcal S}^{(Y)}_{kk}\right\}\\
&=&
\sum_{k=0}^{N-1}\E\left\{
\left(\frac{\overline{\sigma}^2_{X}+\delta \sigma^2_{X}}{
1+\frac{\left[\delta \sigma^2_{X}+
\delta \sigma^2_{\varepsilon}\left|2 \sin(\pi f_k\Delta t)\right|^2\right]}{\overline{\sigma}^2_{X}+\sigma^{ 2}_{\varepsilon}\left|2 \sin(\pi f_k\Delta t)\right|^2}}\right)\frac{\widehat{\mathcal S}^{(Y)}_{kk}}{\overline{\sigma}^2_{X}+\sigma^{ 2}_{\varepsilon}\left|2 \sin(\pi f_k\Delta t)\right|^2}
\right\}\\
&=&
\sum_{k=0}^{N-1}\E\left\{
\left(\overline{\sigma}^2_{X}+\delta \sigma^2_{X}\right)\left(
1-\frac{\left[\delta \sigma^2_{X}+
\delta \sigma^2_{\varepsilon}\left|2 \sin(\pi f_k\Delta t)\right|^2\right]}{(\overline{\sigma}^2_{X}+\sigma^{ 2}_{\varepsilon}\left|2 \sin(\pi f_k\Delta t)\right|^2)}\right)\frac{\widehat{\mathcal S}^{(Y)}_{kk}}{\overline{\sigma}^2_{X}+\sigma^{ 2}_{\varepsilon}\left|2 \sin(\pi f_k\Delta t)\right|^2}
\right\}\\
&=&\sum_{k=0}^{N-1}[\overline{\sigma}^2_{X}+O\left( \Delta t^{5/4}\right)]+{\mathcal{O}}\left(\sqrt{\Delta t}\log(\Delta t)\right)=\E\left\{\langle X, X\rangle_T\right\}+{\mathcal{O}}\left(\sqrt{\Delta t}\log(\Delta t)\right)\\
&&+{\mathcal{O}}\left(\sqrt[4]{\Delta t}\right).
\end{eqnarray*}}
This implies that the estimator is asymptotically unbiased.
We can also note that the variance of the new estimator is given by: 
\begin{eqnarray*}
\var\left\{ \widehat{\langle X, X\rangle}_T^{(m_1)}\right\}
&=&\sum_j \sum_k \cov\{\widehat{L}_j\widehat{\mathcal{S}}_{jj}^{(Y)}, \widehat{L}_k\widehat{\mathcal{S}}_{ll}^{(Y)}\}\\
&=&\sum_j \sum_k \cov\{\frac{\widehat{L}_j}{L_j} L_j\widehat{\mathcal{S}}_{jj}^{(Y)}, \frac{\widehat{L}_k}{L_k} L_k\widehat{\mathcal{S}}_{kk}^{(Y)}\}\\
&=&\sum_j \sum_k \cov\left\{\left(1+\frac{\delta \tau_X}{\tau_X}-\frac{\delta \tau_X\Delta t+\delta{\sigma}^2_{\varepsilon}
\left|2\sin(\pi f_j \Delta t)\right|^2}{\tau_X\Delta t+{\sigma}^2_{\varepsilon}
\left|2\sin(\pi f_j \Delta t)\right|^2} \right) L_j\widehat{\mathcal{S}}_{jj}^{(Y)},\right.\\
\nonumber
&&\left. \left(1+\frac{\delta \tau_X}{\tau_X}-\frac{\delta \tau_X\Delta t+\delta{\sigma}^2_{\varepsilon}
\left|2\sin(\pi f_k \Delta t)\right|^2}{\tau_X\Delta t+{\sigma}^2_{\varepsilon}
\left|2\sin(\pi f_k \Delta t)\right|^2}  \right) L_k\widehat{\mathcal{S}}_{kk}^{(Y)}\right\}.
\end{eqnarray*}
Then
\begin{eqnarray*}
\var\left\{ \widehat{\langle X, X\rangle}_T^{(m_1)}\right\}
&=&\sum_j \sum_k \left\{\cov\{ L_j\widehat{\mathcal{S}}_{jj}^{(Y)}, L_k\widehat{\mathcal{S}}_{kk}^{(Y)}\}+
\cov\{\frac{\delta \tau_X}{\tau_X} L_j\widehat{\mathcal{S}}_{jj}^{(Y)}, L_k\widehat{\mathcal{S}}_{kk}^{(Y)}\}
 \right.\\
\nonumber
&&+
\cov\{ L_j\widehat{\mathcal{S}}_{jj}^{(Y)}, \frac{\delta \tau_X}{\tau_X}L_k\widehat{\mathcal{S}}_{kk}^{(Y)}\}\\
\label{terms3}
&&-\cov\{\frac{\delta \tau_X\Delta t+\delta{\sigma}^2_{\varepsilon}
\left|2\sin(\pi f_j \Delta t)\right|^2}{\tau_X\Delta t+{\sigma}^2_{\varepsilon}
\left|2\sin(\pi f_j \Delta t)\right|^2} L_j\widehat{\mathcal{S}}_{jj}^{(Y)}, L_k\widehat{\mathcal{S}}_{kk}^{(Y)}\}\\
\label{terms4}
&&-\cov\{ L_j\widehat{\mathcal{S}}_{jj}^{(Y)},\frac{\delta \tau_X\Delta t+\delta{\sigma}^2_{\varepsilon}
\left|2\sin(\pi f_k \Delta t)\right|^2}{\tau_X\Delta t+{\sigma}^2_{\varepsilon}
\left|2\sin(\pi f_k \Delta t)\right|^2} L_k\widehat{\mathcal{S}}_{kk}^{(Y)}\}\\
\label{terms5}
&&\left.+\cov\{\frac{\delta \tau_X}{\tau_X} L_j\widehat{\mathcal{S}}_{jj}^{(Y)},\frac{\delta \tau_X}{\tau_X} L_k\widehat{\mathcal{S}}_{kk}^{(Y)}\}+\dots\right\}\\
&=&\sum_j \sum_k \left\{\delta_{jk}\sigma^4_X+L_j L_k\cov\{\frac{\delta \tau_X}{\tau_X} \widehat{\mathcal{S}}_{jj}^{(Y)}, \widehat{\mathcal{S}}_{kk}^{(Y)}\}+\dots\right\}
\end{eqnarray*}
By looking at the individual terms of this expression, and noting that the estimated renormalized variance $\widehat{\tau}_X=\tau_X+\delta\tau_X$
and $\widehat{\sigma}_\varepsilon^2={\sigma}^\varepsilon_X+\delta\sigma^2_\varepsilon$
are linear combinations of $\widehat{\mathcal{S}}_{kk}^{(Y)},$ we can deduce the stated order terms, by again noting the $\sqrt{\Delta t}$ order of the important terms. However to leading order, this estimator will perform identically to $\widehat{\langle X,X \rangle}^{(w)}$ in terms of variance.

\section{Proof of Time Domain Form}\label{sec:time}
The integral can be calculated from first principles using complex-variables with $z=e^{2i\pi f}$. Thus $dz/df=2i\pi z$ or $df= dz/(2i\pi z)$. \eqref{eqnsmooth}
takes the form
\begin{eqnarray}
\ell_{\tau}&=&\frac{1}{2i \pi}\oint_{|z|=1}
\frac{\overline{\sigma}_X^2}{\overline{\sigma}^2_X z-\sigma^2_{\varepsilon}[z-1]^2}\; z^{\tau}\;dz.
\end{eqnarray}
We need the poles, or:
\begin{eqnarray*}&&\overline{\sigma}^2_X z-\sigma^2_{\varepsilon}[z-1]^2=0\;\Longleftrightarrow
\;
z=1+\frac{\overline{\sigma}^2_X}{2 \sigma^2_{\varepsilon}}\pm \sqrt{\frac{\overline{\sigma}^2_X}{ \sigma^2_{\varepsilon}}+\frac{\overline{\sigma}^4_X}{4\sigma^4_{\varepsilon}}}=z^{\pm}\end{eqnarray*}
If $\left|\frac{\sigma^2_\varepsilon}{\overline{\sigma}^2_{X}}
\right|<1$ we have
\begin{eqnarray*}
z^-&=&1+\frac{\overline{\sigma}^2_X}{2 \sigma^2_{\varepsilon}}- \frac{\overline{\sigma}_X^2}{2 \sigma^2_{\varepsilon}}\sqrt{1+\frac{4 \sigma^2_\varepsilon}{\sigma^2_{X}}}\\
&&=  1+\frac{\overline{\sigma}_X^2}{2 \sigma^2_{\varepsilon}}- \frac{\overline{\sigma}_X^2}{2 \sigma^2_{\varepsilon}}\left(1+\frac{1}{2}
\frac{4 \sigma^2_\varepsilon}{\sigma^2_{X}}+\frac{1}{4}\frac{(-1)}{2}\left[\frac{4 \sigma^2_\varepsilon}{\sigma^2_{X}}\right]^2+{\mathcal{O}}\left(\frac{ \sigma^6_\varepsilon}{\sigma^6_{X}} \right)\right)\\
&=&\frac{ \sigma^2_\varepsilon}{\sigma^2_{X}}+{\mathcal{O}}\left(\frac{ \sigma^4_\varepsilon}{\sigma^4_{X}} \right)\\
z^+&=&\frac{\sigma^2_{X}}{\sigma^2_\varepsilon}+\dots\end{eqnarray*}
We then note that:
\begin{eqnarray*}
\ell_{\tau}&=&-\frac{1}{2i \pi}\oint_{|z|=1}
\frac{\overline{\sigma}_X^2/(\sigma^2_{\varepsilon})}{-\overline{\sigma}_X^2/(\sigma^2_{\varepsilon}) z+[z-1]^2}\; z^{\tau}\;dz=-\frac{1}{2i \pi}\oint_{|z|=1}
\frac{\overline{\sigma}_X^2/(\sigma^2_{\varepsilon})}{(z-z^-)(z-z^+)}\; z^{\tau}\;dz\\
&=&\frac{2i \pi}{2i \pi}\overline{\sigma}_X^2/(\sigma^2_{\varepsilon})\frac{\left(\frac{ \sigma^2_\varepsilon}{\sigma^2_{X}}\right)^{\tau}}{z^+-\frac{ \sigma^2_\varepsilon}{\sigma^2_{X}}+{\mathcal{O}}\left(\frac{ \sigma^4_\varepsilon}{\sigma^4_{X}} \right)}
=\left(\frac{ \sigma^2_\varepsilon}{\sigma^2_{X}}\right)^{\tau}+{\mathcal{O}}\left(\frac{ \sigma^{2\tau+2}_\varepsilon}{\overline{\sigma}^{2\tau+2}_{X}} \right)
\end{eqnarray*}
If on the other hand you consider $\left|\frac{ \sigma^2_\varepsilon}{\sigma^2_{X}}
\right|>1$ which in many scenarios is more realistic then we find that:
\begin{eqnarray*}
z^-&=&1+\frac{\overline{\sigma}_X^2}{2 \sigma^2_{\varepsilon}}- \frac{\overline{\sigma}_X}{ \sigma_{\varepsilon}}\sqrt{1+\frac{\sigma^2_{X}}{4\Delta t \sigma^2_\varepsilon}}=  1+\frac{\overline{\sigma}_X^2}{2 \sigma^2_{\varepsilon}}- \frac{\overline{\sigma}_X}{ \sigma_{\varepsilon}}\left(1+\frac{1}{2}\frac{\sigma^2_{X}}{4 \sigma^2_\varepsilon}\right)\\
&=&1-\frac{\overline{\sigma}_X}{ \sigma_{\varepsilon}}
+{\mathcal{O}}\left(\frac{\overline{\sigma}_X^2}{\sigma^2_{\varepsilon}}\right)\\
z^+&=&1+\frac{\overline{\sigma}_X}{ \sigma_{\varepsilon}}
\end{eqnarray*}
In this case we find that
\begin{eqnarray*}
\ell_{\tau}&=&\overline{\sigma}_X^2/(\sigma^2_{\varepsilon})\frac{[1-\frac{\overline{\sigma}_X}{ \sigma_{\varepsilon}}]^{\tau}}{2\frac{\overline{\sigma}_X}{ \sigma_{\varepsilon}}
+{\mathcal{O}}\left(\frac{\overline{\sigma}_X^2}{\sigma^2_{\varepsilon}}\right)}=
\frac{\overline{\sigma}_X}{2\sigma_{\varepsilon}}\left(1-\frac{\overline{\sigma}_X}{ \sigma_{\varepsilon}}\right)^{\tau}+{\mathcal{O}}\left(\frac{\overline{\sigma}_X^{2}}{2\sigma_{\varepsilon}^2}
\left(1-\frac{\overline{\sigma}_X}{ \sigma_{\varepsilon}}\right)^{\tau}\right) .
\end{eqnarray*}

In both cases the decay of the filter is geometric. We note that in most practical examples $L_k$ decays very rapidly in $k$. Therefore, we do not need to integrate between $-1/2$ to $1/2$, and only need to integrate over $-1/\pi$ to $1/\pi$. In this range of $f$ we find that for smallish remainder term $R_3$ we have:
$\sin^2(\pi f )= \pi^2 f^2 +R_3(f \pi)$.
Then we note 
\begin{eqnarray*}
\ell_{\tau}&=&\int_{-\frac{1}{\pi}}^{\frac{1}{\pi}}
\frac{\overline{\sigma}_X^2}{\overline{\sigma}_X^2+4\sigma^2_{\varepsilon}\pi^2 f^2 +R_3(f\pi)}\;e^{2i\pi f \tau }\;df+C\\
&=&\frac{\overline{\sigma}_X}{2\sigma_{\varepsilon}}\int_{-\infty}^{\infty}
\left[2\frac{\overline{\sigma}_X/(\sigma_{\varepsilon})}{\frac{\sigma^2_X}{\sigma^2_{\varepsilon}}+ 4\pi^2 f^2 }+R_4(f)\right]\;e^{2i\pi f \tau }\;df+C\\
&=&\frac{\overline{\sigma}_X}{2\sigma_{\varepsilon}}e^{-\frac{\overline{\sigma}_X|\tau|}{
\sigma_{\varepsilon}}}+C.
\end{eqnarray*}
Thus we are smoothing the autocovariance sequence with a smoothing window that becomes a delta function as $\overline{\sigma}_X/\sigma_{\varepsilon}\rightarrow \infty$. It is reasonable that this non-dimensional quantity arises as an important factor.
\end{document}